\documentclass[runningheads]{llncs}

\usepackage[utf8]{inputenc}
\usepackage{amsmath,amssymb} %
\usepackage{hyperref} %
\usepackage[capitalize]{cleveref}
\Crefname{figure}{Figure}{Figures}
\usepackage{tabularx}
\usepackage{booktabs} %
\usepackage{multirow} 
\usepackage{makecell} 
\usepackage{paralist}
\usepackage{wrapfig}
\usepackage{placeins}
\usepackage{nicefrac}
\usepackage{dirtytalk}
\usepackage[prefix=solarized-]{xcolor-solarized} %
\usepackage{wrapfig}
\usepackage{diagbox}
\usepackage{fontawesome5}

\usepackage{xifthen}

\usepackage{tikz} 
\usetikzlibrary{arrows,automata,backgrounds,positioning,intersections,shapes,decorations.pathreplacing,positioning, patterns,calc,patterns.meta}
\usepackage[nointegrals]{wasysym}
\usepackage{tkz-euclide}
\usepackage{subcaption}
\usepackage{graphicx}
\usepackage{xspace}
\usepackage{pgfplots}
\pgfplotsset{compat=newest}
\usepgfplotslibrary{fillbetween}

\usepackage{algorithm}
\usepackage{listings}
\usepackage{algpseudocode}

\usepackage{mathtools}
\usepackage[binary-units]{siunitx}
\usepackage{nicefrac}

\usepackage{semantic}

\usepackage{xcolor}
\definecolor{darkgreen}{rgb}{0.0, 0.42, 0.24}
\definecolor{solBlue}{RGB}{38 139 210}
\definecolor{amethyst}{rgb}{0.6, 0.4, 0.8}
\definecolor{antiquefuchsia}{rgb}{0.57, 0.36, 0.51}
\definecolor{darkorchid}{rgb}{0.6, 0.2, 0.8}
\definecolor{darkviolet}{rgb}{0.58, 0.0, 0.83}
\definecolor{deepmagenta}{rgb}{0.8, 0.0, 0.8}
\definecolor{deeplilac}{rgb}{0.6, 0.33, 0.73}
\definecolor{electricpurple}{rgb}{0.75, 0.0, 1.0}
\definecolor{electricviolet}{rgb}{0.56, 0.0, 1.0}
\definecolor{mediumorchid}{rgb}{0.73, 0.33, 0.83}
\definecolor{patriarch}{rgb}{0.5, 0.0, 0.5}
\definecolor{phlox}{rgb}{0.87, 0.0, 1.0}
\definecolor{purple(munsell)}{rgb}{0.62, 0.0, 0.77}
\definecolor{purple(x11)}{rgb}{0.63, 0.36, 0.94}
\definecolor{alertcolor}{named}{purple(munsell)}

\pgfplotsset{
tick label style={font=\footnotesize},
label style={font=\small},
legend style={font=\footnotesize},
}

\definecolor{background}{named}{white}
\definecolor{solBase01}{RGB}{88 110 117}
\definecolor{solBase1}{RGB}{147 161 161}
\definecolor{solMagenta}{RGB}{211  54 130}
\definecolor{color2}{named}{solMagenta}
\definecolor{solGreen}{RGB}{133 153   0}
\definecolor{color3}{named}{solGreen}
\definecolor{solYellow}{RGB}{181 137   0}
\definecolor{color4}{named}{solYellow}
\definecolor{dartmouthgreen}{rgb}{0.05, 0.5, 0.06}
\definecolor{color5}{named}{dartmouthgreen}
\definecolor{solBlue}{RGB}{38 139 210}
\definecolor{color6}{named}{solBlue}
\definecolor{davysgrey}{rgb}{0.33, 0.33, 0.33}
\definecolor{fancyblue}{rgb}{0.01, 0.28, 1.0}
\definecolor{amethyst}{rgb}{0.6, 0.4, 0.8}
\definecolor{green3}{rgb}{0.01, 0.75, 0.24}
\definecolor{carrotorange}{rgb}{0.93, 0.57, 0.13}
\definecolor{darkraspberry}{rgb}{0.53, 0.15, 0.34}
\definecolor{cornflowerblue}{rgb}{0.39, 0.58, 0.93}
\definecolor{darkpastelpurple}{rgb}{0.59, 0.44, 0.84}
\definecolor{niceblue}{RGB}{38 139 210}
\definecolor{niceyellow}{RGB}{181 137   0}

\definecolor{mainColor}{named}{solBase01}
\definecolor{boxcolor}{named}{color4}
\definecolor{cinitloc}{named}{mainColor}
\definecolor{cpath1light}{named}{color6}
\definecolor{cpath1dark}{named}{fancyblue}
\definecolor{cpath2light}{named}{color3}
\definecolor{cpath2dark}{named}{color5}
\definecolor{cinterseg}{named}{boxcolor}
\definecolor{goaltrace1}{named}{cpath1dark}
\definecolor{goaltrace2}{named}{cpath2dark}
\definecolor{refinementcolor1}{named}{goaltrace1}
\definecolor{refinementcolor2}{named}{goaltrace2}
\definecolor{refinementcolor}{named}{refinementcolor1}
\definecolor{crefseg}{named}{refinementcolor}

\definecolor{nonstochConflictcolor}{named}{black}
\definecolor{stochConflictcolor}{named}{black}
\definecolor{samplevaluescolor}{named}{solarized-magenta}

\definecolor{cloc0}{named}{solarized-base01}
\definecolor{cloc1}{named}{solarized-magenta}
\definecolor{cloc2}{named}{goaltrace1}
\definecolor{cloc3}{named}{goaltrace2}
\definecolor{cloc4}{named}{solarized-blue}

\tikzstyle{initloc}=[cinitloc]
\tikzstyle{path1start}=[cpath1light,densely dotted]
\tikzstyle{path1end}=[cpath1dark]
\tikzstyle{path2start}=[cpath2light,dotted]
\tikzstyle{path2end}=[cpath2dark]

\tikzstyle{filling}=[very thick, fill, fill opacity = 0.2]
\tikzstyle{cone}=[draw = none, fill, fill opacity = 0.2]
\tikzstyle{border}=[very thick, line join = round]
\tikzstyle{guard}=[fill opacity=0.1]
\tikzstyle{guardticks}=[color3]
\tikzstyle{init}=[border, mainColor, line join = round]
\tikzstyle{refinement}=[pattern=north west lines, pattern color = goaltrace1] %
\tikzstyle{refinement2}=[pattern=north east lines, pattern color = goaltrace2] %

\definecolor{goaltrace1opaque}{HTML}{7b9df9}

\tikzstyle{refinedsegment1}=[opacity=0.6,very thick, line join = round, fill, fill opacity = 0.2, goaltrace1]
\tikzstyle{refinedsegment2}=[opacity=0.6,very thick, line join = round, fill, fill opacity = 0.2, goaltrace2]
\tikzstyle{refinedsegment}=[refinedsegment1]
\tikzstyle{intermediatesegment}=[line join = round, fill, cinterseg]

\tikzstyle{other}=[opacity = 0.15]
\tikzstyle{tick}=[fill=background,font= \scriptsize]

\tikzstyle{firstlayerdistance}=[node distance =1.4cm and 0.95cm]
\tikzstyle{secondlayerdistance}=[node distance =1.4cm and 0.75cm]
\tikzstyle{secondlayercenterdistance}=[node distance =1.4cm and 0.2cm]
\tikzstyle{thirdlayerdistance}=[node distance =1.4cm and 0.05cm]

\tikzstyle{fazit-node}=[anchor=north west, text width = 1.8cm, draw, fill=white, fill opacity = 0.1, text opacity = 1,rounded corners=5pt]

\tikzstyle{stochConflict}=[bend right, ultra thick, densely dashed,stochConflictcolor]
\tikzstyle{nonstochConflict}=[bend right, ultra thick, dash pattern= on 1pt off 1pt,nonstochConflictcolor]
\tikzstyle{mixedConflict}=[bend right, ultra thick,nonstochConflictcolor, dash pattern= on 1pt off 1pt on 1pt off 5pt,
postaction={draw,ultra thick, stochConflictcolor,dash pattern= on 3pt off 5pt,dash phase=4pt}]

\tikzset{smallarrow/.style={decoration={markings,mark=at position 1 with %
    {\arrow[scale=0.8]{>}}},postaction={decorate}}}

\tikzset{ wavyarrow/.style={%
    ->,
    decorate,
    decoration={%
      snake,
      segment length=4mm,
      amplitude=0.3mm,
    }
  }}

\makeatletter
\newcounter{groupcount}
\pgfplotsset{
    draw group line/.style n args={5}{
        after end axis/.append code={
            \setcounter{groupcount}{0}
            \pgfplotstableforeachcolumnelement{#1}\of\datatable\as\cell{%
                \def\temp{#2}
                \ifx\temp\cell
                    \ifnum\thegroupcount=0
                        \stepcounter{groupcount}
                        \pgfplotstablegetelem{\pgfplotstablerow}{[index]0}\of\datatable
                        \coordinate [yshift=#4] (startgroup) at (axis cs:\pgfplotsretval,0);
                    \else
                        \stepcounter{groupcount}
                        \pgfplotstablegetelem{\pgfplotstablerow}{[index]0}\of\datatable
                        \coordinate [yshift=#4] (endgroup) at (axis cs:\pgfplotsretval,0);
                    \fi
                \else
                    \ifnum\thegroupcount>1
                        \setcounter{groupcount}{0}
                        \draw [
                            shorten >=-#5,
                            shorten <=-#5
                        ] (startgroup) -- node [anchor=north] {#3} (endgroup);
                    \fi
                \fi
            }
            \ifnum\thegroupcount>1
                        \setcounter{groupcount}{0}
                        \draw [
                            shorten >=-#5,
                            shorten <=-#5
                        ] (startgroup) -- node [anchor=north] {#3} (endgroup);
            \fi
            \ifnum\thegroupcount=1
                        \setcounter{groupcount}{0}
                        \draw [
                            shorten >=-#5,
                            shorten <=-#5
                        ] ($(startgroup)-(1.5mm,0)$) -- node [anchor=north] {#3} ($(startgroup)+(1.5mm,0)$);
            \fi
        }
    }
}
\makeatother

\pgfplotsset{
tick label style={font=\footnotesize},
label style={font=\small},
legend style={font=\footnotesize},
}

\pgfmathdeclarefunction{gauss}{3}{%
  \pgfmathparse{1/(#3*sqrt(2*pi))*exp(-((#1-#2)^2)/(2*#3^2))}%
}

\newcommand{\flowsepspace}{\vspace{0.2ex}}

\newcommand{\invariantsepspace}{\vspace{1.5ex}}
\newcommand{\invariantsepspacesmall}{\vspace{1ex}}

\tikzstyle{smallnode}=[draw, text width = 0.9cm, %
align = center, font= \footnotesize, rounded corners, very thick, execute at begin node=
]

\tikzstyle{smallslimnode}=[draw, text width = 0.6cm, %
align = center, font= \footnotesize, rounded corners, very thick, execute at begin node=
]

\newcommand\scalemath[2]{\scalebox{#1}{\mbox{\ensuremath{\displaystyle #2}}}}

\newcommand{\Reals}{\ensuremath{\mathbb{R}}\xspace}
\newcommand{\Realspos}{\ensuremath{\Reals_{>0}}\xspace}

\newcommand{\Realsposzero}{\ensuremath{\Reals_{\geq 0}}\xspace}

\newcommand{\Rationals}{\mathbb{Q}}
\newcommand{\Naturals}{\ensuremath{\mathbb{N}}\xspace}
\newcommand{\I}{\ensuremath{\mathbb{I}}\xspace}

\newcommand{\identity}{\ensuremath{\mathit{id}}\xspace}

\newcommand{\support}{\ensuremath{\mathit{supp}}\xspace}

\newcommand{\CDFs}{\ensuremath{\mathbb{F}}\xspace}
\newcommand{\CDF}{\ensuremath{\mathit{distr}}\xspace}

\newcommand{\guard}{\mathit{guard}}
\newcommand{\reset}{\mathit{reset}}

\newcommand{\RA}{\ensuremath{\mathcal{H}}\xspace}
\newcommand{\RAC}{\ensuremath{\mathcal{R}}\xspace}

\newcommand{\last}{\textit{last}}

\newcommand{\semanticsarrowsac}[1]{\ensuremath{\xrightarrow{#1}_{\scalemath{0.5}{\RAC}}}}

\newcommand{\tinyR}{{\scalemath{0.6}{R}}}
\newcommand{\tinyN}{{\scalemath{0.7}{N}}}
\newcommand{\tinyS}{{\scalemath{0.7}{S}}}

\newcommand{\dimCont}{\ensuremath{d}\xspace}
\newcommand{\dimRandom}{\ensuremath{{d_{\tinyR}}}\xspace}

\newcommand{\proj}[1]{_{#1}}

\newcommand{\Loc}{\ensuremath{\mathit{Loc}}\xspace} 
\newcommand{\VarCont}{\ensuremath{\mathit{Var}}\xspace}
\newcommand{\LabRandom}{\ensuremath{\mathit{Lab}}\xspace} 
\newcommand{\Edge}{\ensuremath{\mathit{Jump}\xspace}}

\newcommand{\Flow}{\ensuremath{\mathit{Flow}}\xspace}
\newcommand{\FlowCont}{\ensuremath{\Flow}\xspace}

\newcommand{\Distr}{\ensuremath{\mathit{Distr}}\xspace}
\newcommand{\Inv}{\ensuremath{\mathit{Inv}}\xspace}
\newcommand{\Init}{\ensuremath{\mathit{Init}}\xspace} 
\newcommand{\Event}{\ensuremath{\mathit{Event}}\xspace}

\newcommand{\runs}[1]{\ensuremath{\mathit{Runs}_{#1}}\xspace}
\newcommand{\Initruns}[1]{\ensuremath{\mathit{Runs}_{#1}^{\Init}}\xspace}

\newcommand{\run}{\ensuremath{\pi}} %

\newcommand{\scheduler}{\ensuremath{\mathfrak{s}}\xspace}

\newcommand{\SchedulersProphetic}[1]{\ensuremath{\Schedulers}_{#1}\xspace} 
\newcommand{\Schedulers}{\ensuremath{\mathfrak{S}}\xspace}

\newcommand{\SchedulersProphMin}[1]{\ensuremath{\Schedulers_{#1}^{\textsl{min}}(\Goal, \tmax,\jumpmax)}\xspace}

\newcommand{\clockr}{\ensuremath{r}\xspace}
\newcommand{\sample}{\ensuremath{s}\xspace}

\newcommand{\norc}{\ensuremath{\perp}\xspace}

\newcommand{\States}[1]{\ensuremath{\mathcal{S}_{#1}}\xspace} %
\newcommand{\valCont}{\ensuremath{\nu}\xspace} %
\newcommand{\valSet}{\ensuremath{\mathcal{V}}\xspace} %

\newcommand{\rate}{\mathit{rate}\xspace}
\newcommand{\res}{\mathit{res}\xspace}

\newcommand{\tmax}{\ensuremath{t_\textsl{max}}\xspace} %
\newcommand{\jumpmax}{\ensuremath{\textsl{jmp}}\xspace} %
\newcommand{\prophecy}{\ensuremath{\kappa}\xspace} %
\newcommand{\Prophecies}[1]{\ensuremath{\mathcal{K}_{#1}}\xspace} %
\newcommand{\PropheciesSchedulerMin}[1]{\ensuremath{\mathcal{K}_{#1}^{\scheduler!}}\xspace} %
\newcommand{\PropheciesSchedulerMax}[1]{\ensuremath{\mathcal{K}_{#1}^{\scheduler}}\xspace} %

\newcommand{\estat}{\ensuremath{e_\textsl{stat}}\xspace}

\newcommand{\comptime}{\ensuremath{t_\textsl{comp}}}

\newcommand{\source}{\textit{source}}
\newcommand{\target}{\textit{target}}

\newcommand{\reachtree}{\ensuremath{\mathtt{R}}\xspace}
\newcommand{\treePath}{\ensuremath{\Pi}\xspace}
\newcommand{\modtree}{\ensuremath{\mathtt{T}}\xspace}
\newcommand{\fullymodtree}{\ensuremath{\mathtt{T}^{*}}\xspace}

\newcommand{\rtmap}[1]{\ensuremath{\reachtree(#1)}\xspace}
\newcommand{\maxid}[1]{\ensuremath{\##1}\xspace}

\newcommand{\Goal}{\ensuremath{\mathcal{G}}\xspace}

\newcommand{\hastate}{\ensuremath{\sigma}\xspace} %

\newcommand{\JumpChoices}{\ensuremath{\mathit{JumpChoices}}\xspace}
\newcommand{\TimeChoices}{\ensuremath{\mathit{TimeChoices}}\xspace}

\newcommand{\ftc}[1]{\ensuremath{\mathit{T_{#1}}}\xspace}
\newcommand{\fosr}[1]{\ensuremath{\mathit{D_{#1}}}\xspace}

\newcommand{\hypro}{\textsc{HyPro}\xspace} 

\newcommand{\realyst}{\textsc{RealySt}\xspace}

\newcommand{\prohver}{\textsc{ProHVer}\xspace}

\newcommand{\scientific}[1]{\num[scientific-notation = true, exponent-product = \cdot]{#1}}
\newcommand{\samples}[1]{\scientific{#1}}
\newcommand{\errortable}[1]{\scriptsize\num[scientific-notation = true, exponent-product = \cdot, round-mode = places, round-precision = 1]{#1}}

\newcommand{\probabmin}[1]{\num[round-mode = places, round-precision = 3]{#1}}
\newcommand{\probabminuni}[1]{\num[round-mode = places, round-precision = 3]{#1}}
\newcommand{\probabminunishort}[1]{\num[round-mode = places, round-precision = 1]{#1}}
\newcommand{\probab}[1]{\num[round-mode = places, round-precision = 6]{#1}}

\newcommand{\rt}[1]{\SI[round-mode = places, round-precision = 2]{#1}{\second}}

\newcommand{\tikzboxgreenshape}{%
    \begin{tikzpicture}[scale=0.2, baseline=0mm, semithick,sharp corners]
    \draw[cone,cloc3] (1,1) -- (0.5,1) -- (0,0.25) -- (0,0) -- (1,0) -- cycle;
    \draw[border, cloc3,thick]  (1,1) -- (0.5,1) -- (0,0.25) -- (0,0) -- (1,0);
    \end{tikzpicture}%
}

\newcommand{\tikzboxpinkshape}{%
    \begin{tikzpicture}[scale=0.2, baseline=0mm, semithick,sharp corners]
    \draw[cone,cloc1] (0,0) -- (1,1) -- (1,0.5) -- cycle;
    \draw[border, cloc1,thick] (0,0) -- (1,1) -- (1,0.5) -- cycle;
    \end{tikzpicture}%
}
\newcommand{\tikzboxblueshape}{%
    \begin{tikzpicture}[scale=0.2, baseline=0mm, semithick,sharp corners]
    \draw[cone, cloc2] (1.425,0.125) -- (0.125,0.125) -- (0.25,0.25) -- (1.425,0.25) -- cycle;
\draw[border, cloc2,ultra thick]  (1.425,0.125) -- (0.125,0.125) -- (0.25,0.25) -- (1.425,0.25);
    \end{tikzpicture}%
}

\newcommand{\depth}{\textit{depth}}

\DeclareMathSymbol{\shortminus}{\mathbin}{AMSa}{"39}

\newcommand{\children}{\textit{ch}}
\newcommand{\parent}{\textit{pa}}
\newcommand{\validChildren}{\ensuremath{\textit{ch}^{+}}}

\newcommand{\car}{\ensuremath{\textit{CAR}}\xspace}

\begin{document}

\title{Minimum Reachability Probabilities in Rectangular Automata with Random Clocks\thanks{Supported by the DFG project 471367371.}}
\titlerunning{Minimum Reachability Probabilities in RAC}
\author{Joanna Delicaris\inst{1}\orcidID{0000-0001-9455-4052} \and
Erika Ábrahám\inst{2}\orcidID{0000-0002-5647-6134} \and
Anne Remke\inst{1}\orcidID{0000-0002-5912-4767}}
\authorrunning{J. Delicaris et al.}
\institute{University of Münster, Münster, Germany \\
\email{joanna.delicaris,anne.remke@uni-muenster.de}\\
\and
RWTH Aachen University, Aachen, Germany\\
\email{abraham@cs.rwth-aachen.de}}

\maketitle

\begin{abstract}
Control applications for cyber-physical systems must make reliably safe control decisions in the presence of continuous dynamics as well as stochastic uncertainty. Providing safety guarantees for such systems requires formal modeling and analysis techniques that capture these aspects.
For modeling, in this paper we consider rectangular automata with random clocks under prophetic scheduling. For this model class, existing methods can compute only upper bounds on reachability probabilities, enabling optimistic, best-case safety reasoning.
We complement this view by introducing a novel method to compute lower bounds, thereby enabling worst-case analysis that is essential for safety-critical applications. Although both upper and lower bounds rely on reachability analysis, they are not dual: computing lower bounds requires an explicit separation of stochastic and nondeterministic choices along executions.
We implement our approach and demonstrate its practical feasibility on an electric vehicle charging scenario, showing that meaningful  worst-case guarantees can be obtained.
\end{abstract}

\section{Introduction}
\label{sec:introduction}
Rectangular automata with random clocks (RAC) are a powerful formalism for modeling stochastic hybrid systems that arise in safety-critical applications such as energy and mobility systems. They combine continuous dynamics, random timing, and nondeterministic behavior, enabling the joint modeling of aleatoric uncertainty (e.g., stochastic arrivals) and epistemic uncertainty (e.g., underspecified system behavior). For example, in electric vehicle charging systems~\cite{journal,tase,25StuebbeSYNASC}, randomness can capture arrival patterns, while nondeterminism reflects variability in charging strategies and system conditions.
The execution of a RAC starts from a nondeterministic initial state, which evolves continuously with rates chosen from rectangular sets, and allow discrete transitions at either random or nondeterministically chosen time points.
Existing analysis techniques for RAC~\cite{journal,tase} use flowpipe construction~\cite{alur1995AlgorithmicAnalysisHybrid} to compute all behaviors that \emph{may} reach a goal state. Encoding stochastic choices in the state space allows to integrate the probabilities of those goal-reaching paths, providing maximum reachability probabilities under prophetic scheduling, i.e., best-case guarantees for reaching the goal.
However, many applications require worst-case guarantees, captured by minimum reachability probabilities. For instance, an e-car should complete its trip without running out of battery—even with detours—with probability at least 0.95, regardless of how nondeterministic choices  are resolved.

While, in principle, minimum reachability equals 
1 minus the maximum probability of not reaching the goal, this observation is not practically useful, since the latter cannot be computed effectively. As a result, minimum reachability is not simply the dual of maximum reachability and requires a dedicated analysis approach. Hence, we propose a new method to compute minimum reachability probabilities for RAC, based on a reach-tree representation that explicitly separates stochastic and nondeterministic choices. In contrast to maximum reachability—which collects behaviors that may reach the goal via unions—our method identifies behaviors that guarantee reaching the goal by taking intersections over nondeterministic choices.
The main technical challenge is the interaction between stochastic and nondeterministic behavior. To address this, we introduce a classification of reach-tree branchings into stochastic, nondeterministic, and mixed. Stochastic branchings are handled via union, nondeterministic ones via intersection, and mixed cases are systematically decomposed to separate both effects. This separation is key to obtaining sound lower bounds.
As illustrated in Figure~\ref{fig:process-wenigerdetail}, our workflow follows the same outer structure as maximum reachability (flowpipe construction and integration), but differs in how reachable behaviors are combined: we compute the set of executions that enforce reaching the goal.

Our previous work either (i) focuses on maximum reachability, where all branchings can be handled uniformly via unions~\cite{journal,tase,25StuebbeSYNASC}, or (ii) restricts the model by excluding continuous nondeterminism, which simplifies minimum analysis~\cite{carina}. Our approach handles the full RAC model, including both continuous and discrete nondeterminism.
We implement our method in the tool \realyst and evaluate it on an electric vehicle charging case study, demonstrating that practical and meaningful worst-case guarantees can be computed.

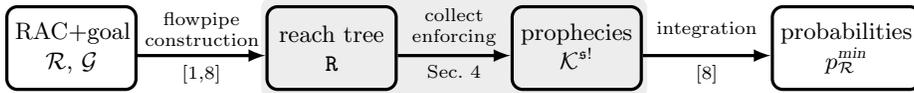
\begin{figure}[t]%
\centering
    \begin{tikzpicture}[
baseline,remember picture,
node distance = 0.3cm and 1cm,
box/.style={text width = 2cm, minimum height = 1.1cm, align = center, font= \footnotesize, very thick, anchor=north},
boxaut/.style={draw, text width = 2cm, minimum height = 2.5cm, align = center, font= \footnotesize, very thick, anchor=north, rounded corners},
boxlight/.style={draw, black!10!white, opacity=0.7, fill},
emptynode/.style={draw=none, inner sep=0},%
boxadd/.style={draw, text width = 2cm, minimum height = 1.4cm, align = center, font= \footnotesize, very thick, anchor=north},
section/.style={draw, fill=white, circle, inner sep=0.75mm, font= \footnotesize, thick},
e/.style={-latex, very thick},%
eno/.style={draw=none},%
l/.style={draw=none, minimum height=0cm, font = \scriptsize, align = center, text width = 1.5cm},%
textstyle/.style={font = \footnotesize},%
]

\pgfdeclarelayer{background}
\pgfsetlayers{background,main}

	\useasboundingbox (-0.85,-1.18) rectangle (11.3,0.08);

\node[box, draw, text width = 1.5cm, rounded corners] (sac) at (0,0) {RAC+goal \\$\RAC,\,\Goal$}; %
\node[box, draw, text width = 1.5cm, rounded corners, right= 1.7cm of sac] (rt) {reach tree \\ $\reachtree$};

\node[box, draw, text width = 1.5cm, rounded corners, right= 1.5cm of rt] (proph) {prophecies \\ $\PropheciesSchedulerMin{}$};

\node[box, draw, text width = 1.75cm, rounded corners, right= 1.7cm of proph] (prob) {probabilities \\ $p_{\RAC}^\textsl{min}$};

\begin{pgfonlayer}{background}
\draw[boxlight, rounded corners] ($(rt.south west)+(-0.05,-0.1)$) rectangle ($(proph.north east)+(0.05,0.1)$);
\end{pgfonlayer}

\draw[e] ([yshift=-0.1cm]sac.east) -- 
node[l,above] (flowpipe) {flowpipe\\ construction} 
node[l,below] {\cite{alur1995AlgorithmicAnalysisHybrid,journal}} 
([yshift=-0.1cm]rt.west);

\draw[e] ([yshift=-0.1cm]rt.east) --
node[l,above, minimum height=0.65cm] (approach) {collect\\enforcing}
node[l,below] {Sec.~\ref{sec:computation}}
([yshift=-0.1cm]proph.west);

\draw[e] ([yshift=-0.1cm]proph.east)  --
node[l,above, minimum height=0.65cm] (integrate) {integration}
node[l,below] {\cite{journal}} 
([yshift=-0.1cm]prob.west);

\end{tikzpicture} %
    \caption{
    Computation of minimum reachability probabilities (within the gray box).
    }
    \label{fig:process-wenigerdetail}
\end{figure}

\paragraph{Related Work}

To the best of our knowledge, besides our own previous work \cite{journal,tase,25StuebbeSYNASC}, the only approach to compute upper bounds on \emph{maximum} reachability probabilities for similar  models  is the tool \prohver~\cite{hahn2013CompositionalModellingAnalysis}, where \emph{discrete} nondeterminism  is resolved prophetically and  \emph{continuous} nondeterminism via a safe overapproximation. However, the computation of minimum reachability probabilities is not supported.
The only other approach that computes both maximum and minimum reachability probabilities is \cite{Pilch2021Optimizing,carina}, specifically tailored to urgent singular automata and therefore not supporting time or rate nondeterminism. 

Several approaches extend hybrid automata (HA) with probabilistic and stochastic elements, exhibiting different nondeterministic components.
For example, decidable subclasses of HA have been extended with discrete probability distributions on jumps \cite{sproston2000DecidableModelChecking,sproston2019Verification}, which preserves decidability of reachability.
Timed automata (TA) allow for various extensions as their continuous behavior is rather limited. Extensions of TA include    continuously distributed resets~\cite{kwiatkowska2000VerifyingQuantitativeProperties},   continuous probability distributions~\cite{bohnenkamp2006MODESTCompositionalModeling}, stochastic delays and jumps~\cite{bertrand2014StochasticTimedAutomata},  and (networks of) stochastic TA \cite{ballarini2013TransientAnalysisNetworks}, all resolving nondeterminism probabilistically.   
In contrast, maximum and minimum reachability probabilities are computed for TA with discrete distributions  in \cite{KWIATKOWSKA20071027}.
Approaches for more general models  apply stochastic approximation ~\cite{prandini2006StochasticApproximationMethod} or  a combination of  discretization and randomness~\cite{koutsoukos2008ComputationalMethodsVerification}. 
Abstractions for uncountable-state discrete-time stochastic processes either resolve all nondeterminism probabilistically~\cite{soudjani2015FAUST2FormalAbstractions}, or abstract to interval Markov decision processes~\cite{cauchi2019StocHyAutomatedVerification}.
The more complex setting of stochastic timed games \cite{Bouyer2009} makes quantitative reachability  undecidable for  1.5 players and $4$ clocks.

\paragraph{Outline}
We introduce the class of RAC and define  the reach tree, schedulers and nondeterminism in Section~\ref{sec:modelclass}, as in~\cite{journal}. 
Section~\ref{sec:classification} presents the classification of branchings, which allows the 
computation of minimum reachability probabilities as explained in Section~\ref{sec:computation}. 
We provide results for several exemplary models and an existing case study in Section~\ref{sec:casestudy}, and conclude the paper in Section~\ref{sec:conclusion}.

\section{Preliminaries}
\label{sec:modelclass}

We present \emph{rectangular automata with random clocks (RAC)} in Section~\ref{subsec:SARC}, as in \cite{journal}.
Section~\ref{subsec:nondeterminism}, defines schedulers for RAC to resolve nondeterminism, and Section~\ref{subsec:reach} introduces the reach tree as a result of reachability analysis.

Let $\Naturals$, $\Rationals$ and $\Reals$ be the set of all natural (including $0$), rational resp. real numbers; we use lower indices to restrict these sets, e.g. $\Realspos$ for the positive reals.
\emph{Rectangular sets} refer to cross products of real-valued intervals from $\mathbb{I}=\{[a,b], [a,\infty), (-\infty,b] \mid a,b \in \Rationals\}\cup \{(-\infty,\infty)\}$.
Given an ordered set of $d$ variables containing $x$ in the $i$th position, for any domain $D$ and $q = (q_1, \dots, q_d) \in D^d$, we use $q \proj{x}$ to refer to $q_i$. 
For $\CDF:\Realsposzero\rightarrow\Realsposzero$
let $\support(\CDF)=\{v\in\Realsposzero\,|\,\CDF(v)>0\}$.
We call $\CDF$ a \emph{(continuous) distribution} if it is absolutely continuous with $\int_{0}^{\infty}\CDF(v)\,dv=1$. Let $\CDFs$ denote the set of all distributions.

\subsection{Rectangular Automata with Random Clocks}\label{subsec:SARC}

Rectangular automata with random clocks (RAC) provide an expressive modeling language for systems that combine discrete  components and continuous dynamics with  random or nondeterministic timing. In such models, continuous variables evolve  with rates from bounded (rectangular) intervals, while discrete transitions ocur either after random delays or at nondeterministically chosen time points. When neglecting random aspects, the mixed discrete-continuous behavior is modeled by a rectangular automata (RA), as defined in \cite{alur1995AlgorithmicAnalysisHybrid,henzinger1998WhatDecidableHybrid,journal}.
\begin{definition}[RA Syntax]
A \emph{rectangular automaton} (RA) is a tuple $\RA = (\Loc,\allowbreak \VarCont,\allowbreak\Inv,\Init,\FlowCont,\Edge)$ with
\begin{itemize}
    \item a finite nonempty set $\Loc$ of \emph{locations},
    \item a finite nonempty ordered set $\VarCont$ of $|\VarCont| =d$ \emph{(continuous) variables}, %
     \item  functions $\Inv$ and $\Init$ of type $\Loc \rightarrow \I^{\dimCont}$, assigning an \emph{invariant} resp. \emph{initial states}  to each location, satisfying (i) $\Init(\ell)\subseteq\allowbreak\Inv(\ell)$ for all $\ell\in\Loc$ and (ii) $\Init(\ell)\not=\emptyset$ for some $\ell\in\Loc$, 
     \item  a function $\FlowCont:\Loc \rightarrow \mathbb{I}^{\dimCont}$, assigning \emph{(flow) rates} to the locations, with $\FlowCont(\ell) \not = \emptyset$ for all $\ell \in \Loc$,
     and
    \item
    a finite set  $\Edge \subseteq  \Loc \times \I^{\dimCont} \times (\mathbb{I}\cup \{\identity\})^{\dimCont} \times   \Loc$ of \emph{jumps} $e=(\ell,\allowbreak\guard,\allowbreak\reset,\allowbreak \ell')\in\Edge$ with 
    \emph{source location} $\source(e)=\ell$, \emph{target location} $\target(e)=\ell'$, 
    \emph{guard} $\guard(e)=\guard$, and 
    \emph{reset} $\reset(e)=\reset$, such that $\Inv(\ell)\cap\guard\not=\emptyset$ and for all $x\in\VarCont$, 
    (i) if $\reset\proj{x}=\identity$ then $\Inv(\ell)\proj{x}\cap\guard\proj{x}\subseteq\Inv(\ell')\proj{x}$, and otherwise, 
    (ii) if $\reset\proj{x}\not=\identity$, then $\reset\proj{x}\subseteq\Inv(\ell')\proj{x}$.
\end{itemize}
\end{definition}

\noindent We call $\RA$ \emph{nonblocking}~\cite{journal} if for each $\ell\in\Loc$ and $\nu\in\Inv(\ell)$, either there exists a jump $e\in\Edge$ with $\source(e)=\ell$ and $\nu\in\guard(e)$, 
or 
there exist $\delta\in\Realspos$, $\rate\in \Flow(\ell)$  with $\valCont+\delta\cdot \rate\in\Inv(\ell)$.
As in \cite{journal},  we extend RA with \emph{stochastic jumps} to model random events.

\begin{definition}[RAC Syntax]
    \label{def:RAC}
    A \emph{Rectangular Automaton with random Clocks (RAC)} is a tuple $\RAC = (\RA,\LabRandom,\allowbreak\Distr,\Event)$ with
    \begin{itemize}
        \item a RA $\RA=(\Loc, \VarCont,\Inv,\Init,\FlowCont,\Edge)$ with $\vert\VarCont\vert=d$,
        \item a finite ordered set $\LabRandom\subseteq \VarCont$ of $ | \LabRandom | = \dimRandom$ {\emph{random events} (also called \emph{random clocks}),}
         \item  a function $\Distr:\LabRandom\rightarrow\CDFs$,
         \item a function $\Event: \Edge \rightarrow (\LabRandom \cup \{\norc\})$, that declares a jump $e$ to be \emph{stochastic} if $\Event(e)\in\LabRandom$ and \emph{nonstochastic} if $\Event(e)=\,\norc$; we require stochastic jumps to be nonguarded, i.e, $\Event(e)\in\LabRandom$ implies $\guard(e) = \Reals^{d}$,
        \item for each random event $\clockr\in \LabRandom$  we require
        \begin{enumerate}[(i)]
            \item for each location $\ell \in \Loc$: 
        (a) $\Init(\ell)\proj{r}=[0,0]$,
        (b) $\Inv(\ell)\proj{r}=\Reals$, and
        (c) $\Flow(\ell)\proj{\clockr} = [1,1]$ if there is a stochastic jump $e \in \Edge$ with $\source(e)=\ell$ and $\Event(e)=\clockr$, 
        and $\Flow(\ell)\proj{\clockr} = [0,0]$ otherwise; and
        \item for each jump
        $e \in \Edge$:
        $\guard(e)\proj{r}=\Reals$  and $\reset(e)\proj{r}=id$,
        \end{enumerate}
        \item and removing all stochastic jumps from $\Edge$ of $\RA$ yields a nonblocking RA. 
    \end{itemize}
\end{definition}%

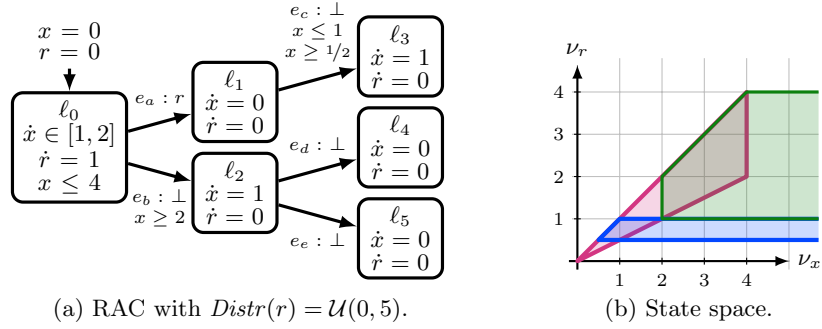
\begin{figure}[tb]
    \centering
    \begin{subfigure}[b]{.64\linewidth}
    \centering
        \begin{tikzpicture}[
scale=1.1,
baseline,remember picture,
node distance = -0.5cm and 1.4cm,
n/.style={draw, text width = 1.3cm, %
align = center, font= \footnotesize, rounded corners, very thick, execute at begin node=\setlength{\baselineskip}{8pt}%
},
en/.style={draw=none, minimum height=0cm, font = \scriptsize, align = center},%
c/.style={draw, fill, black, circle, inner sep=0, outer sep=0, minimum size=1mm},
l/.style={anchor=west, inner sep=0, font=\footnotesize},
init/.style={en, 
text width=1.2cm,
anchor=south west,  
inner xsep=0pt},
]

\useasboundingbox (-0.75,-1.6) rectangle (4.55,1.6);

\node[n](l0) at (0,0) {
    $\ell_0$\\\flowsepspace 
    $\dot{x}\in[1,2]$\\\flowsepspace  
    $\dot{r}=1$\\\flowsepspace  
    $x\leq 4$
};

\node[smallnode](l1) at (2,.55) {
    $\ell_1$\\\flowsepspace 
    $\dot{x}=0$\\\flowsepspace  
    $\dot{r}=0$
};

\node[smallnode](l2) at (2,-0.55) {
    $\ell_2$\\\flowsepspace 
    $\dot{x}=1$\\\flowsepspace  
    $\dot{r}=0$
};

\node[smallnode](l3) at (4,1.1) {
    $\ell_3$\\\flowsepspace 
    $\dot{x}=1$\\\flowsepspace  
    $\dot{r}=0$
};

\node[smallnode](l4) at (4,0) {
    $\ell_4$\\\flowsepspace 
    $\dot{x}=0$\\\flowsepspace  
    $\dot{r}=0$
};

\node[smallnode](l5) at (4,-1.1) {
    $\ell_5$\\\flowsepspace 
    $\dot{x}=0$\\\flowsepspace  
    $\dot{r}=0$
};

\node[n,above=0.3cm of l0, draw=none] (init) {$x=0$\\$r=0$};

\draw[-latex, very thick] (init) to node[en, above] {} (l0);
\draw[-latex, very thick] (l0) to node[en, above,yshift=1mm] {$e_a:r$} (l1);
\draw[-latex, very thick] (l0) to node[en, below, yshift=-1mm] {$e_b:\ \norc$\\$x \geq 2$} (l2);

\draw[-latex, very thick] (l1) to node[en, above,yshift=1mm] {$e_c:\ \norc$
\\$x \leq 1$\\$x \geq \nicefrac{1}{2}$} (l3);
\draw[-latex, very thick] (l2) to node[en, above,yshift=1mm] {$e_d:\ \norc$} (l4);
\draw[-latex, very thick] (l2) to node[en, below,yshift=-1mm] {$e_e:\ \norc$} (l5);

\end{tikzpicture}
        \caption{RAC with $\Distr(r)=\mathcal{U}(0,5)$.}\label{fig:rac-example-automaton}
    \end{subfigure}
    \hfil
    \begin{subfigure}[b]{.32\linewidth}
    \centering
        \begin{tikzpicture}[
baseline, remember picture,
font=\footnotesize, scale=2.25, 
tick/.style={fill=white,font= \scriptsize}]

\useasboundingbox (-0.17,-0.11) rectangle (1.44,1.25);

\node (ursprung) at (0,0) {};
\draw[help lines, lightgray, xstep=0.25, ystep=0.25] ($(ursprung)-(0.05,0.05)$) grid (1.425, 1.2);
\node[fill=white] (y) at (0,1.25) {$\valCont_r$};
\node[fill=white] (x) at (1.375,0) {$\valCont_x$};
\draw[-latex,thick] ($(ursprung)-(0,0.05)$) to (y);
\draw[-latex,thick] ($(ursprung)-(0.05,0)$) to (x);

\node[tick] (1) at (0.25,-0.11) {$1$};
\node[tick] (2) at (0.5,-0.11) {$2$};
\node[tick] (3) at (0.75,-0.11) {$3$};
\node[tick] (4) at (1,-0.11) {$4$};
\draw ($(0.25,0)+(0,0.02)$) -- ($(0.25,0)-(0,0.02)$);
\draw ($(0.5,0)+(0,0.02)$) -- ($(0.5,0)-(0,0.02)$);
\draw ($(0.75,0)+(0,0.02)$) -- ($(0.75,0)-(0,0.02)$);
\draw ($(1,0)+(0,0.02)$) -- ($(1,0)-(0,0.02)$);

\node[tick] (1) at (-0.11,0.25) {$1$};
\node[tick] (2) at (-0.11,0.5) {$2$};
\node[tick] (3) at (-0.11,0.75) {$3$};
\node[tick] (4) at (-0.11,1) {$4$};
\draw ($(0,0.25)+(0.02,0)$) -- ($(0,0.25)-(0.02,0)$);
\draw ($(0,0.5)+(0.02,0)$) -- ($(0,0.5)-(0.02,0)$);
\draw ($(0,0.75)+(0.02,0)$) -- ($(0,0.75)-(0.02,0)$);
\draw ($(0,1)+(0.02,0)$) -- ($(0,1)-(0.02,0)$);

\draw[cone, cloc1] (0,0) -- (1,1) -- (1,0.5) -- cycle;
\draw[border, cloc1,ultra thick] (0,0) -- (1,1) -- (1,0.5) -- cycle;

\draw[cone, cloc2] (1.425,0.125) -- (0.125,0.125) -- (0.25,0.25) -- (1.425,0.25) -- cycle;
\draw[border, cloc2,ultra thick]  (1.425,0.125) -- (0.125,0.125) -- (0.25,0.25) -- (1.425,0.25);

\draw[cone, cloc3] (1.425,1) -- (1,1) -- (0.5,0.5) --  (0.5,0.25) -- (1.425,0.25) -- cycle;
\draw[border,cloc3] (1.425,1) -- (1,1) -- (0.5,0.5) --  (0.5,0.25) -- (1.425,0.25);

\end{tikzpicture}
        \caption{State space.}\label{fig:rac-example-flowpipe}
    \end{subfigure}
    \caption{Running Example: RAC model with reachable state space in 
    $\ell_0$ and $\ell_1$ (pink), $\ell_2$, $\ell_4$ and $\ell_5$ (green), and $\ell_3$ (blue).}
    \label{fig:rac-example}
\end{figure}

\noindent Figure~\ref{fig:rac-example-automaton} illustrates a small RAC with six locations and one random event, which serves as running example. Figure~\ref{fig:rac-example-flowpipe} shows the corresponding reachable state space. 
For the rest of this paper, assume a RAC $\RAC = (\RA,\LabRandom,\Distr,\allowbreak\Event)$ with $\RA=(\Loc, \allowbreak\VarCont, \allowbreak\Inv,  \allowbreak\Init,\allowbreak\FlowCont,\Edge)$, $ | \VarCont | = \dimCont$ and  $ | \LabRandom | = \dimRandom$.
Furthermore, to measure the global time for time-bounded reachability computations, we assume a distinguished variable $t\in\VarCont\setminus\LabRandom$ with $\Init(\ell)\proj{t}=[0,0]$, $\Flow(\ell)\proj{t}=[1,1]$, $\Inv(\ell)\proj{t}=\Reals$, $\guard(e)\proj{t}=\Reals$ and $\reset(e)\proj{t}=\identity$ for all $\ell\in\Loc$ and $e\in\Edge$.
To simplify RAC illustrations, we may %
omit (i) the global time $t$, (ii) initial state and flow of random clocks $r\in\LabRandom$, (iii) all guards and invariants that pose no restrictions (e.g., if $\Inv(\ell)\proj{x}=\Reals$ and $\Inv(\ell)\proj{y}=(-\infty,2]$, then we only write $y\leq2$) and (iv) resets to identity, %
and write  $c$ for point intervals $[c,c]$. %

\paragraph{States} The labels of stochastic jumps specify their stochastic behavior. 
A random event $\clockr$ \emph{occurs} when a stochastic jump labeled with $\clockr$ is taken.
Initially, for each random event $\clockr$ an \emph{expiration time}  $\sample\proj{\clockr}$ is sampled from its assigned distribution $\Distr(\clockr)$. 
The event $r$ can occur if jumps with label $r$ have been enabled for a total of $\sample\proj{\clockr}$ time units.
To measure these \emph{durations of enabledness}, we define random events $\clockr$ to be included in the set of continuous variables, and model them as stopwatches, initialized to $0$ and running when a jump with label $\clockr$ is enabled;
therefore, we call random events $\clockr$ also \emph{random clocks}.
A \emph{state} $\hastate=(\ell,\valCont,\sample)\in\States{\RAC}=\Loc\times\Reals^{\dimCont}\times\Realsposzero^\dimRandom$ %
of $\RAC$ specifies the current location $\ell$, the valuation $\valCont$ for the continuous variables (which includes   the durations of enabledness $\valCont\proj{r}$ for each $r\in\LabRandom$), and the expiration times $\smash{\sample=(\sample_1,\dots,\sample_{\dimRandom})}$ of random events.

\paragraph{Semantics} The  operational semantics for a RAC $\RAC$ defined in Figure~\ref{fig:operationalsemanticsRAC} specifies the evolution of the state of $\RAC$, by taking a nonstochastic jump (Rule \texttt{Jump}$_{\tinyN}$) or a stochastic jump (Rule \texttt{Jump}$_{\tinyS}$), or by letting time elapse (Rule \texttt{Flow}). %
Recall that stochastic jumps are nonguarded by Definition~\ref{def:RAC}.
Note that the reset $res \in \reset$ for a jump or rate $\rate \in \FlowCont(\ell)$ for a time step can be uniquely determined from $\valCont$ and $\valCont'$; where useful, we annotate the step with this additional information, yielding $\hastate\semanticsarrowsac{e,res}\hastate'$ resp. $\hastate\semanticsarrowsac{\delta,\rate}\hastate'$.

\begin{figure}[tb]
    \centering
\input{img/rac}
\vspace*{-4ex}
    \caption{Operational semantics for a RAC $\RAC$.}
    \label{fig:operationalsemanticsRAC}
\end{figure}

\begin{definition}
A (finite) \emph{run} of $\RAC$ is a finite sequence $\run=\hastate_0\xrightarrow{a_1}
\ldots\xrightarrow{a_n}\hastate_n$, such that 
(i) $\hastate_i=(\ell_i,\valCont_i,\sample_i)\in\States{\RAC}$ for all $i\in\{0,\ldots,n\}$,
(ii) $\valCont_0\in\Inv(\ell_0)$, 
(iii) $a_{i}\in \Reals_{\geq0}\cup\Edge$ and $\hastate_{i\shortminus 1}\semanticsarrowsac{a_{i}}\hastate_{i}$ for all $i\in\{1,\ldots,n\}$
and (iv) 
if $a_i=\delta_i\in \mathbb{R}_{\geq0}$ is a time step and $i<n$, then we require $a_{i+1}\in\Edge$ to be a jump.
\end{definition}
We  call $ | \run | = n$ the length of $\run$, define $\last(\run)=\hastate_n$ and let $\runs{\RAC}$ be the set of all runs of $\RAC$. %
We call $\run$ \emph{initial} if $\valCont_0\in\Init(\ell_0)$,
$(\valCont_0)\proj{\clockr} = 0$, and $(\sample_0)\proj{\clockr} \in \support(\Distr(\clockr))$ for all $\clockr \in \LabRandom$.
Let $\Initruns{\RAC}$ be the set of all initial runs of $\RAC$.
A state $\hastate=(\ell,\valCont,\sample)$ is \emph{reachable in $\RAC$} if there exists an initial run 
$\run$ with $\last(\run)=\hastate$.

 The above semantics is dedicated for RAC where each random event occurs at most once per run, implying no loops with stochastic jumps.
 Otherwise, the operational semantics must be extended as in~\cite{journal}.
Since we compute time- and jump-bounded reachability, this is not restrictive: any RAC can be \emph{unrolled} to meet the jump bound, as formalized in \cite{journal}.

\subsection{Nondeterminism and Prophetic 
Schedulers}\label{subsec:nondeterminism}

A RAC $\RAC$ is nonde\-ter\-ministic in 
(i) the choice of initial location and valuation from $\mathit{InitialChoices}_{\RAC}=\{(\ell,\valCont) \in \Loc\times\Reals^{\dimCont} \mid \valCont\in\Init(\ell)
    \}$,
(ii) the choice of duration and flow rate from
     $\TimeChoices_{\RAC}(\run)\allowbreak = 
     \{(\delta,\rate)\in \Realsposzero\times\Reals^{\dimCont}  \mid\exists \hastate'\in\States{\RAC}.\ 
     \last(\run)\semanticsarrowsac{\delta,\rate}\hastate'
     \text{ and }
     \run\xrightarrow{\delta} \hastate'\in\runs{\RAC}
     \}$
     of a time step extending a run $\run$ of $\RAC$, 
 and (iii)  the choice of jump and reset from
     $\JumpChoices_{\RAC}(\hastate) = \{(e,\res)\in\Edge\times (\Reals \cup \{id\})^\dimCont \mid \exists \hastate'\in\States{\RAC}.\ \hastate \semanticsarrowsac{e,\res} \hastate'\}$  from state $\hastate$.
    The definition of $\TimeChoices_{\RAC}$
requires that $\delta$ and $\rate$ together define a valid time step according to the semantics and the run definition.

All nondeterminism can be resolved by a scheduler, which turns a RAC into a purely stochastic model with well-defined reachability probabilities.
We use  \emph{prophetic schedulers} \cite{Pilch2021Optimizing}, which have full information on the past history (i.e., the run leading to the current state) including \emph{all}
samples of expiration times, making future
random events predictable. This latter information is contained in a \emph{prophecy} 
$\prophecy: \LabRandom\rightarrow\Reals_{\geq 0}$, specifying $\prophecy(r)$ to be the expiration time of random event $r\in\LabRandom$. 
Let $\Prophecies{\RAC}$ be the set of all prophecies for $\RAC$.
While prophetic schedulers can predict the timing of future random events—an unrealistic assumption in most settings—they are well-suited to perform a \emph{worst-case} analysis as their foresight lets them resolve choices in the most adverse way, especially when uncontrollable uncertainties are modeled nondeterministically.

\begin{definition}
 \label{def:Scheduler}
 \sloppy A \emph{(prophetic history-dependent) scheduler $\scheduler$ for $\RAC$} defines for each prophecy $\prophecy\in\Prophecies{\RAC}$ a function  
 \[\scheduler_{\prophecy} : \Initruns{\RAC}\cup\{\epsilon\} \rightarrow  (\Loc\times \Reals^{\dimCont}) \cup  (\Realsposzero\times \Reals^{\dimCont}) \cup  (\Edge\times (\Reals \cup \{id\})^\dimCont),\] 
 s.t. $\smash{\scheduler_{\prophecy}(\epsilon)\in\mathit{InitialChoices}_{\RAC}}$ and 
 $\scheduler_{\prophecy}(\run) \in \JumpChoices_{\RAC}(\last(\run)) \cup \TimeChoices_{\RAC}(\run)$ for 
  $\smash{\run\in \Initruns{\RAC}}$.
Let $\SchedulersProphetic{\RAC}$ denote all schedulers for $\RAC$.
 \end{definition} 
\noindent 
The scheduler and its function under the current prophecy together uniquely determine the execution.

\begin{definition}
For a RAC $\RAC$, a scheduler $\scheduler$ for $\RAC$, a prophecy $\prophecy$, and a fixed $n\in\Naturals$, \emph{the run of length $n$ induced by $\scheduler_{\prophecy}$ in $\RAC$} is the unique initial run $\run(\RAC,\scheduler_{\prophecy},n)\in\Initruns{\RAC}$ of length $n$ such that:

\begin{enumerate}[(i)]

\item if $n=0$, then $\run(\RAC,\scheduler_{\prophecy},0)=(\ell,\valCont,\sample)$ with $(\ell,\valCont)=\scheduler_{\prophecy}(\epsilon)$, $\valCont\proj{r}=0$ and $\sample\proj{r}=\prophecy(r)$ for all $r\in\LabRandom$, and

\item if $n\!>\!0$, then 
$\run(\RAC,\scheduler_{\prophecy},n)\!=\!\run_{n\shortminus 1}\!\xrightarrow{a}\!\hastate'$ with $\smash{\run_{n \shortminus 1}\!=\!\allowbreak\run(\RAC,\scheduler_{\prophecy}, n{-}1)}$,
 $\last(\run_{n\shortminus 1})\allowbreak=\hastate$, $\scheduler_{\prophecy}(\hastate)=(a,v)$, and $\hastate'$ unique state of $\RAC$ with $\last(\run_{n\shortminus 1})\semanticsarrowsac{a,v}\hastate'$.
\end{enumerate}%
\end{definition}
\noindent For simplicity, we only consider goal \emph{locations}, %
though our approach can be 
extended to other goal definitions.
Let  $\Goal$ be a set of goal locations, $\tmax\in\Reals_{\geq 0}$ a time bound, and $\jumpmax\in\Naturals$ a jump bound.
We say that \emph{$\scheduler_{\prophecy}$ reaches $\Goal$ in $\RAC$ within bounds $(\tmax,\jumpmax)$} if 
there is $n\in\Naturals$, $\run=\run(\RAC,\scheduler_{\prophecy},n)=\hastate_0\xrightarrow{a_1}\ldots\xrightarrow{a_n}\hastate_n$, $\hastate_i=(\ell_i,\valCont_i,\sample)$ such that
(i) $|\{i\in\{1,\ldots,n\}\mid a_i\in\Edge\}|\leq \jumpmax$, 
(ii) $\ell_n\in\Goal$,
and
(iii) $(\valCont_n)\proj{t}\leq\tmax$.
A run $\run\in\Initruns{\RAC}$ \emph{fulfills} a prophecy $\prophecy$, if there exists a scheduler $\scheduler_{\prophecy}$ for $\RAC$ that induces $\run$.

The minimum probability $p_{\RAC}^{\textsl{min}}(\Goal, \tmax,\jumpmax)$ of reaching $\Goal$ in $\RAC$ within bounds $(\tmax,\jumpmax)$ is defined by the integral over all those prophecies for which the goal cannot be avoided within the given bounds.

\begin{definition}%
    \label{lemma:probability}
For each scheduler $\scheduler\in\SchedulersProphetic{\RAC}$, let
$\PropheciesSchedulerMin{}
$ be the set of all
prophecies $\prophecy$ for which $\scheduler_{\prophecy}$ cannot avoid $\Goal$ in $\RAC$ within bounds $(\tmax,\jumpmax)$.
Then, the \emph{prophetic minimum reachability probability to reach $\Goal$ in $\RAC$ within bounds $(\tmax,\jumpmax)$} is: 
\begin{equation}\label{eq:scheduler}\textstyle
p_{\RAC}^\textsl{min}(\Goal, \tmax,\jumpmax) 
= \min_{\scheduler \in \SchedulersProphetic{\RAC}} \Big( \int_{\PropheciesSchedulerMin{}} G(\prophecy) \ d\prophecy \Big),
\end{equation}
where  $G(\prophecy){=} \prod_{\clockr\in\LabRandom} \Distr(\clockr)(\prophecy(\clockr))$.
Let $\SchedulersProphMin{\RAC} \subseteq\SchedulersProphetic{\RAC}$ be the set of all schedulers yielding this minimum probability.%
\footnote{Note that, $G$ is a continuous function, $\PropheciesSchedulerMin{} \subseteq \Realsposzero^{\dimRandom}$, i.e., the integral over $\PropheciesSchedulerMin{}$ is defined.}
\end{definition}
\noindent
A \emph{minimum prophetic scheduler} makes optimal choices for all prophecies, i.e., always avoids the goal if possible.  In our setting, the minimum probability always exists, since the formalism restricts to weak inequalities and admits only finitely many scheduler decisions due to the jump bound and the alternation between time steps and discrete jumps.

\paragraph{Remark}
Other scheduler classes \cite{dargenio2018HierarchySchedulerClasses,MathisMemocode}, e.g., non-prophetic or memoryless, have less information and cannot achieve lower minimum  reachability probabilities. For stochastic automata, \cite{dargenio2018HierarchySchedulerClasses} shows that history-dependent prophetic schedulers are most powerful. For hybrid Petri nets, \cite{MathisDiss} establishes that history-dependent prophetic schedulers are more powerful than their memoryless and non-prophetic counterparts. As they can be transformed into singular automata with random clocks (a subclass of RAC)~\cite{transformationNFM}, we expect these results can be transferred.%

\subsection{Reach Tree}\label{subsec:reach}
First, we analyze the reachability of the goal set via flowpipe construction while disregarding the stochastic distributions. %
This step is represented in Figure~\ref{fig:process-wenigerdetail} by the first arrow and does not differ from the computation of maximum reachability probabilities~\cite{journal}.
More precisely, we first consider arbitrary expiration times (even if they are not in the support of the respective distributions). Technically, in the reachability analysis we omit (i) the expiration time samples, (ii) condition $\valCont\proj{\clockr} = \sample\proj{\clockr} $ in Rule \texttt{Jump}$_{\tinyS}$, and (iii) condition $\forall \clockr \in \LabRandom. \     \valCont'\proj{\clockr} \leq  \sample\proj{\clockr}$ in Rule \texttt{Flow}.
Thus, we consider states $(\ell,\valCont)$ instead of $(\ell,\valCont,\sample)$.
This relaxed semantics allows us to first disregard the stochastic behavior.
We use $(\ell,\valSet)$ to denote $\{(\ell,\valCont) \mid \valCont \in \valSet\}$. %

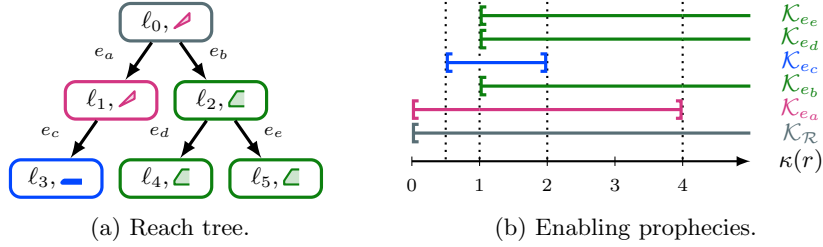
\begin{figure}[t]
    \centering
    \begin{subfigure}{0.43\textwidth}
    \centering
        \begin{tikzpicture}[
	scale=1,
	node distance = 1cm and 0.8cm,
	baseline,
	remember picture,
	n/.style={draw, text width = 1cm, text = black, %
		align = center, font= \footnotesize, rounded corners, very thick, execute at begin node=\setlength{\baselineskip}{8pt}%
	},
	tn/.style={draw, text width = 0.2cm, text = black, %
		align = center, font= \footnotesize, circle, very thick, execute at begin node=\setlength{\baselineskip}{8pt}%
	},
	nd/.style={draw=none, text width = 2.5cm, anchor=left, text = black, %
		align = left, font= \footnotesize, rounded corners, very thick, execute at begin node=\setlength{\baselineskip}{8pt}%
	},
	en/.style={draw=none, minimum height=0cm, font = \scriptsize, align = center,outer sep=1mm},%
	ref/.style={draw, circle, minimum width=1.5mm, inner sep=0, fill,samplevaluescolor},%
	k/.style={draw=none, circle, minimum width=0, inner sep=0,outer sep=0, fill,samplevaluescolor},%
	c/.style={draw, fill, black, circle, inner sep=0, outer sep=0, minimum size=1mm},
	l/.style={anchor=west, inner sep=0, font=\footnotesize},
        secondlayerdistancesingle/.style={node distance =0.5cm and 0.1cm},
        secondlayerdistance/.style={node distance =0.4cm and 0.1cm},
        firstlayerdistance/.style={node distance =0.7cm and  1.75cm}
	]

    \begin{scope}
        
	\node[n, draw=cloc0](l0) at (0,0)  { $\ell_0,\tikzboxpinkshape$ };
	\node[n, draw=cloc1, secondlayerdistancesingle,below left = of l0.south]  (l1) {$\ell_1,\tikzboxpinkshape$}; 
	\node[n, draw=cloc3, secondlayerdistancesingle, below right= of l0.south]  (l2) {$\ell_2,\tikzboxgreenshape$};

	\node[n, draw=cloc2, secondlayerdistancesingle,below left = of l1.south]  (l3) {$\ell_3,\tikzboxblueshape$}; 
    
	\node[n, draw=cloc3, secondlayerdistancesingle,below left = of l2.south]  (l4) {$\ell_4,\tikzboxgreenshape$}; 
	\node[n, draw=cloc3, secondlayerdistancesingle, below right= of l2.south]  (l5) {$\ell_5,\tikzboxgreenshape$};

\draw[-latex, very thick] (l0) to
     node[en, left,stochConflict,yshift=1mm] 
     { $e_a$ } 
	(l1);
    
\draw[-latex, very thick] (l0) to
    node[en, right,stochConflict,yshift=1mm] 
     { $e_b$ } 
	(l2);

    \draw[-latex, very thick] (l1) to
     node[en, left,stochConflict,yshift=1mm] 
     { $e_c$ } 
	(l3);

    \draw[-latex, very thick] (l2) to
     node[en, left,stochConflict,yshift=1mm] 
     { $e_d$ } 
	(l4);
    
\draw[-latex, very thick] (l2) to
    node[en, right,stochConflict,yshift=1mm] 
     { $e_e$ } 
	(l5);
    
    \end{scope}

\end{tikzpicture}
        \caption{Reach tree.}\label{fig:runningexample-reachtree}
    \end{subfigure}\hfil
    \begin{subfigure}{0.55\textwidth}
    \centering
        \begin{tikzpicture}[
xscale=0.9,
yscale=1.25,
lineshiftup/.style={yshift=0.6pt},
lineshiftdown/.style={yshift=-0.6pt},
lineshiftdowntwice/.style={yshift=-1.8pt},
decoratenodebrace/.style={decorate,decoration={brace,amplitude=4pt,mirror},yshift=0pt, font=\scriptsize},
font= \footnotesize]

	\useasboundingbox (-0.2,-0.7) rectangle (6.4,1.5);

\begin{scope}[yshift=-0.3cm]
    \draw [thick,black,-latex] (0,0) -- (5,0);
    \node[fill=white] (x) at (5.75,0) {$\prophecy(r)$};
    \foreach \x in {0,1,2,...,4}  {
        \node[tick] (\x) at (\x,-0.25) {$\x$}; 
        \draw (\x,-0.05)--(\x,+0.05);
    }
    \foreach \x in {0.5,1,2,4}  {
        \draw[dotted,thick] (\x,0.02)--(\x,+1.7);
    }
\end{scope}

\begin{scope}[yshift=0cm]
   \node[draw=none, font=\footnotesize,cloc0] at (5.75,0) {$\Prophecies{\RAC}$};
    \draw [Bracket-,very thick,cloc0] (0,0) -- (5,0);
\end{scope}
\begin{scope}[yshift=0.25cm]
   \node[draw=none, font=\footnotesize,cloc1] at (5.75,0) {$\Prophecies{e_a}$};
    \draw [Bracket-,very thick,cloc1] (0,0) -- (1,0);
    \draw [-Bracket,very thick,cloc1,] (1,0) -- (4,0);
\end{scope}
\begin{scope}[yshift=0.5cm]
    \node[draw=none, font=\footnotesize,cloc3] at (5.75,0) {$\Prophecies{e_b}$};
    \draw [Bracket-,very thick,cloc3,yshift=0mm, ,] (1,0) -- (4,0);
    \draw [very thick,cloc3,yshift=0mm] (4,0) -- (5,0);
\end{scope}
\begin{scope}[yshift=0.75cm]
   \node[draw=none, font=\footnotesize,cloc2] at (5.75,0) {$\Prophecies{e_c}$};
    \draw [Bracket-Bracket,very thick,cloc2] (0.5,0) -- (2,0);
\end{scope}
\begin{scope}[yshift=1cm]
    \node[draw=none, font=\footnotesize,cloc3] at (5.75,0) {$\Prophecies{e_d}$};
    \draw [Bracket-,very thick,cloc3,yshift=0mm, ,] (1,0) -- (4,0);
    \draw [very thick,cloc3,yshift=0mm] (4,0) -- (5,0);
\end{scope}
\begin{scope}[yshift=1.25cm]
    \node[draw=none, font=\footnotesize,cloc3] at (5.75,0) {$\Prophecies{e_e}$};
    \draw [Bracket-,very thick,cloc3,yshift=0mm, ,] (1,0) -- (4,0);
    \draw [very thick,cloc3,yshift=0mm] (4,0) -- (5,0);
\end{scope}

\end{tikzpicture}
        \caption{Enabling prophecies.}\label{fig:runningexample-prophecies}
    \end{subfigure}
        \caption{Running Example: Reach tree and enabling prophecies.}\label{fig:runningexample-tree-and-prophecies}
\end{figure}

We assess the bounded reachability of $\Goal$ within bounds $\tmax$ and $\jumpmax$ via iteratively computing bounded time successors 
$\ftc{\ell}(\valSet) = \{\valCont'\,{\in}\, \Reals^d\,|\, 
\exists\valCont\in\valSet,\delta\in\Realsposzero,\allowbreak \text{ s.t. } \allowbreak(\ell,\valCont) \semanticsarrowsac{\delta} (\ell,\valCont')\allowbreak 
\text{ with }\allowbreak \smash{\valCont'\proj{t}\leq\tmax}  \}$
and discrete jump successors
$\fosr{e}(\valSet)=\{\valCont'\in\Reals^d\,|\,
\exists\valCont\in\valSet,\allowbreak \text{ s.t. } \allowbreak
(\source(e),\valCont) \semanticsarrowsac{e} (\target(e),\valCont') 
\}$  
for certain
$\ell\in\Loc$, $\valSet\subseteq\Reals^d$ and $e\in\Edge$ (without considering the stochastic distributions).

The result of reachability analysis is a \emph{reach tree}, fully defined in Appendix~\ref{app:reachtree}.
Intuitively, for a location $\ell\in\Loc$  with $\Init(\ell)\not=\emptyset$, the \emph{$(\tmax,\jumpmax)$-bounded reach tree for $\RAC$ from $\ell$ 
to $\Goal$} is an annotated tree $\reachtree = (N,E)$ with a nonempty finite set of \emph{nodes} $N\subseteq \Naturals\times\Loc\times2^{\Reals^{\dimCont}}$ and a set of \emph{edges} $ E = N \times  \Edge \times N$.
Each node $i=(id^i,\ell^i,\valSet^i)\in N$ has a unique  identifier $id^i$, which increases strictly monotonically  from parents to children, and stores a set of states $(\ell^i,\valSet^i)$. 
The node $root$ stores the state set $(\ell,\ftc{\ell}(\Init(\ell)))$.
For each $i=(id^i,\ell^i,\valSet^i)\in N$, %
if either $\ell^i\in\Goal$, 
or if the path length from $root$ to $i$ equals $\jumpmax$,  node $i$ has no children.
Otherwise,  node $i$ has exactly one child $(j,\ell^j,\valSet^j)$ for each edge $e\in\Edge$ with $\source(e)=\ell^i$, $\target(e)=\ell^j$, and $\valSet^j=\ftc{\target(e)}(\fosr{e}(\valSet^i))\not=\emptyset$.
\noindent Figure~\ref{fig:runningexample-reachtree} shows the reach tree of the running example, indicating
  state sets $\valSet$ by icons in reference to Figure~\ref{fig:rac-example-flowpipe}.

Let in the following $\reachtree=(N,E)$ be the $(\tmax,\jumpmax)$-bounded reach tree for $\RAC$ from some $\ell\in\Loc$ to $\Goal$ with $\Init(\ell)\not=\emptyset$.
We use $\parent(i)$ and $\children(i)$ to denote the parent resp. children of node $i$ in $\reachtree$.
A \emph{path in $\reachtree$} is a sequence $\treePath=i_0,\dots,i_m$ with $m\in\Naturals$, %
and for $\smash{0 \leq j < m}$, $\smash{i_j\in N}$ and $(i_j,e,i_{j{+}1})\in E$ for some $e\in\Edge$. %
For each node $i\in N$, let $\treePath_i$ denote the path in $\reachtree$ from the root to $i$.
Paths in $\reachtree$ represent the runs of $\RAC$:
\begin{definition} \label{def:represents}

 A path $\treePath\!=\!i_0,\dots,i_m$ in $\reachtree$ \emph{represents} a run
$\smash{\run\!=\!\hastate_0\xrightarrow{a_1}\ldots\xrightarrow{a_n}\hastate_n}$$\in\Initruns{\RAC}$ with $\hastate_i\!=\!(\ell_i,\valCont_i)$ 
    if $(\valCont_n)\proj{t}\leq \tmax$ and either $\smash{n=m=0}$, $\ell_0=\ell$, or %
    \begin{enumerate}[(i)]
    \item $n\!>\!0$, $(i_{m\shortminus 1},a_n,i_m)\!\in\! E$, and 
    $\treePath_{i_{m-1}}$
    represents $\smash{\hastate_0\xrightarrow{a_1}\!{\ldots}\!\xrightarrow{a_{n{\shortminus}1}}\hastate_{n{\shortminus}1}}$, or
    \item $n\!>\!0$, $a_n=\delta\in\Realsposzero$, and  $\treePath$ represents $\hastate_0\xrightarrow{a_1}\ldots\xrightarrow{a_{n\shortminus 1}}\hastate_{n\shortminus 1}$.     
    \end{enumerate}
\end{definition}
Note that even though the above relaxed semantics disregards the actual stochastic behavior, the random clocks still measure the durations of enabledness. 
Thus a run $\run$
under the relaxed semantics is a run under the original semantics (c.f. Figure~\ref{fig:operationalsemanticsRAC}) iff each stochastic jump with label $\clockr$ is taken in a state $(\ell,\valCont,\sample)$ with $\valCont\proj{\clockr}=\sample\proj{\clockr}\in\support(\Distr(\clockr))$
(i.e., if there exists a prophecy $\prophecy$ that is fulfilled by $\run$). 
Recall  that on each path in the underlying graph of $\RAC$, each random event labels at most one jump. Thus after the occurrence of $\clockr$, its duration of enabledness will not change. 
Hence, the leafs of the reach tree encode the expiration times (prophecies) for their paths and hence \emph{represented} runs, which allows to compute the probability mass of each path in the reach tree by integration.

\section{Classification of Branchings in a Reach Tree}
\label{sec:classification}
To identify a scheduler that minimizes the probability to reach $\Goal$, we classify the branchings in the reach tree into stochastic, nondeterministic and mixed branchings. 
The classification is  based on the set of enabling prophecies. 

\subsection{Enabling Prophecies}\label{subsec:enabling-proph}
Recall that a prophecy defines the expiration times of all random events and a 
scheduler defines a function for each prophecy $\prophecy\in \Prophecies{\RAC}$. Hence, given a run $\run$, a scheduler chooses an available time step or jump from $\TimeChoices_{\RAC}(\run) \cup \JumpChoices_{\RAC}(\last(\run))$.
Prophecies might constrain these choices.

\begin{definition}%
A set of prophecies $\Prophecies{} \subseteq\Prophecies{\RAC}$ is called \emph{enabling} for
    $(i,e,j)\in E$ 
    with child node $j\in \children(i)$  and jump $e$,
    if 
    \begin{enumerate}[(i)]
        \item for all $\prophecy \in \Prophecies{}$ there exists a run $\run=\run'\xrightarrow{e}\hastate\in \Initruns{\RAC}$ such that $\run$ fulfills $\prophecy$ and $\treePath_i$ represents $\run'$, and
        \item for all runs $\run=\run'\xrightarrow{e}\hastate\in \Initruns{\RAC}$, if $\treePath_i$ represents $\run'$, then  there  exists a prophecy $\prophecy\in \Prophecies{}$ that is fulfilled by $\run$.
    \end{enumerate}
\end{definition}
\noindent
Let $\Prophecies{e}(\treePath_i)$ be the set of all enabling prophecies for jump $e$ with $(i,e,j)\in E$, given path $\treePath_i$, which we  refer to as $\Prophecies{j}$, and 
$\Prophecies{root}=\Prophecies{\RAC}$ for the root node.
Note that when $i$ is not the root, i.e., there exists $(i',e',i)\in E$, it always holds that $\Prophecies{e}(\treePath_i)\subseteq\Prophecies{e'}(\treePath_{i'})$.
We also remark that the set of enabling prophecies $\Prophecies{e}(\treePath_i)$ for child node $j$  with $(i,e,j)\in E$ %
can also be 
a \emph{null set}, i.e., a Lebesgue measurable set  with measure zero,
e.g., the empty set $\emptyset$ or a point interval. 
If the enabling prophecies form a null set,  the probability that any $\prophecy\in\Prophecies{e}(\treePath_i)$ realizes is zero, i.e.,
the child node $j$ can  be reached with probability zero. %
Let $\validChildren(i)$ denote the set of all child nodes with positive probability, i.e.
all $j\in \children(i)$ with $(i,e,j)\in E$, $e \in \Edge$ and  $\Prophecies{e}(\treePath_i)$ not a null set.
Figure~\ref{fig:runningexample-prophecies} illustrates the enabling prophecies for all nodes in the running example.

\subsection{Different Branching Types}\label{subsec:branchings}
We classify branchings at node $i$ in a reach tree as \emph{stochastic}, \emph{nondeterministic} or \emph{mixed}, based on the sets of enabling prophecies for $i$'s 
children $\validChildren(i)$.  
Here, we differentiate whether the enabling prophecies of all children are (i) all disjoint, (ii) all equal, or (iii) partly overlapping. Intuitively, at any inner node in a reach tree, the branching is stochastic if for each state in the node there exists a state with the same prophecy in exactly one child node, i.e., if the prophecy uniquely determines the next jump. In contrast, a branching is nondeterministic if the same prophecies are represented in all children, meaning that the choice of the next jump is independent of the prophecy. In the remaining mixed branches, the next jump is determined in some cases stochastically and in other nondeterministically.
Examples showcase the different branching types in Figure~\ref{fig:branchings}.
Assume for those fragments of RAC that all $x\in\VarCont$ have $\valCont\proj{x}=0$ initially.
Figure~\ref{fig:branchings-enabling-prophecies} illustrates  
 the enabling prophecies
for those examples.

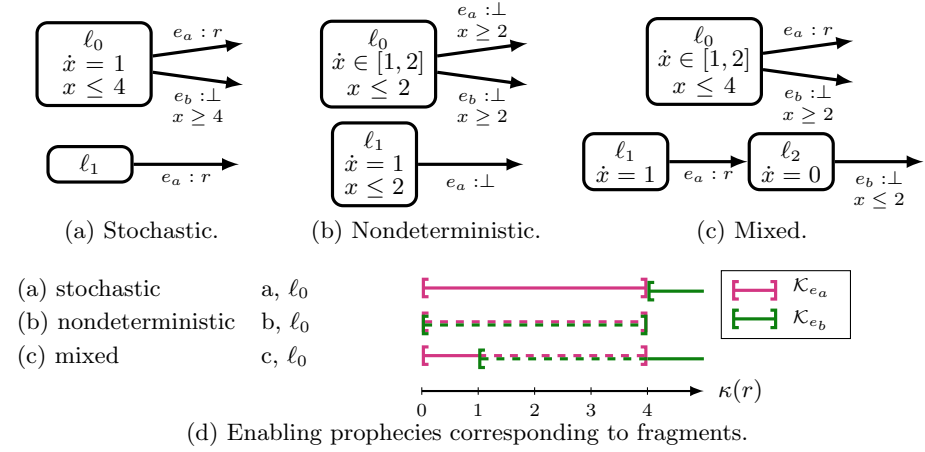
\begin{figure}[t]
    \centering
    \begin{subfigure}[b]{.29\linewidth}
    \centering
        \begin{tikzpicture}[
scale=1.1,
baseline,remember picture,
node distance = -0.5cm and 1.4cm,
n/.style={draw, text width = 1.3cm, %
align = center, font= \footnotesize, rounded corners, very thick, execute at begin node=\setlength{\baselineskip}{8pt}%
},
en/.style={draw=none, minimum height=0cm, font = \scriptsize, align = center},%
c/.style={draw, fill, black, circle, inner sep=0, outer sep=0, minimum size=1mm},
l/.style={anchor=west, inner sep=0, font=\footnotesize},
init/.style={en, 
text width=1.2cm,
anchor=south west,  
inner xsep=0pt},
]

\useasboundingbox (-0.65,-0.68) rectangle (1.8,0.5);

\node[n](l0) at (0,0) {
$\ell_0$\\\flowsepspace 
    $\dot{x}=1$\\\flowsepspace 
    $x\leq 4$
};

\coordinate (a) at (1.8,0.25);
\coordinate (b) at (1.8,-0.25);

\draw[-latex, very thick] (l0) to node[en, above] {$e_a:r$} (a);
\draw[-latex, very thick] (l0) to node[en, below] {$e_b:\norc$\\$x\geq 4$} (b);

\end{tikzpicture}
        \begin{tikzpicture}[
scale=1.1,
baseline,remember picture,
node distance = -0.5cm and 1.4cm,
n/.style={draw, text width = 0.9cm, %
align = center, font= \footnotesize, rounded corners, very thick, execute at begin node=\setlength{\baselineskip}{8pt}%
},
en/.style={draw=none, minimum height=0cm, font = \scriptsize, align = center},%
c/.style={draw, fill, black, circle, inner sep=0, outer sep=0, minimum size=1mm},
l/.style={anchor=west, inner sep=0, font=\footnotesize},
init/.style={en, 
text width=1.2cm,
anchor=south west,  
inner xsep=0pt},
]

\useasboundingbox (-0.65,-0.45) rectangle (1.8,0.5);

\node[n](l0) at (0,0) {
        $\ell_1$\\\invariantsepspace 
};

\coordinate (a) at (1.8,0);

\draw[-latex, very thick] (l0) to node[en, below] {$e_a:r$} (a);

\end{tikzpicture}
        \caption{Stochastic.}\label{fig:branchings-stochastic}
    \end{subfigure}\hfill
    \begin{subfigure}[b]{.3\linewidth}
    \centering
        \begin{tikzpicture}[
scale=1.1,
baseline,remember picture,
node distance = -0.5cm and 1.4cm,
n/.style={draw, text width = 1.3cm, %
align = center, font= \footnotesize, rounded corners, very thick, execute at begin node=\setlength{\baselineskip}{8pt}%
},
en/.style={draw=none, minimum height=0cm, font = \scriptsize, align = center},%
c/.style={draw, fill, black, circle, inner sep=0, outer sep=0, minimum size=1mm},
l/.style={anchor=west, inner sep=0, font=\footnotesize},
init/.style={en, 
text width=1.2cm,
anchor=south west,  
inner xsep=0pt},
]

\useasboundingbox (-0.65,-0.68) rectangle (1.8,0.5);

\node[n](l0) at (0,0) {
    $\ell_0$\\\flowsepspace 
    $\dot{x}\in[1,2]$\\\flowsepspace  
    $x\leq 2$
};
\coordinate (a) at (1.8,0.25);
\coordinate (b) at (1.8,-0.25);
\draw[-latex, very thick] (l0) to node[en, above] {$e_a:\norc$\\$x\geq 2$} (a);
\draw[-latex, very thick] (l0) to node[en, below] {$e_b:\norc$\\$x\geq 2$} (b);

\end{tikzpicture}%
        \begin{tikzpicture}[
scale=1.1,
baseline,remember picture,
node distance = -0.5cm and 1.4cm,
n/.style={draw, text width = 0.9cm, %
align = center, font= \footnotesize, rounded corners, very thick, execute at begin node=\setlength{\baselineskip}{8pt}%
},
en/.style={draw=none, minimum height=0cm, font = \scriptsize, align = center},%
c/.style={draw, fill, black, circle, inner sep=0, outer sep=0, minimum size=1mm},
l/.style={anchor=west, inner sep=0, font=\footnotesize},
init/.style={en, 
text width=1.2cm,
anchor=south west,  
inner xsep=0pt},
]

\useasboundingbox (-0.65,-0.45) rectangle (1.8,0.5);

\node[n](l0) at (0,0) {
    $\ell_1$\\\flowsepspace  
    $\dot{x}=1$\\\flowsepspace 
    $x\leq 2$
};
\coordinate (a) at (1.8,0);
\draw[-latex, very thick] (l0) to node[en, below] {$e_a:\norc$} (a);

\end{tikzpicture}
        \caption{Nondeterministic.}\label{fig:branchings-nondet}
    \end{subfigure}\hfill
    \begin{subfigure}[b]{.38\linewidth}
    \centering
        \begin{tikzpicture}[
scale=1.1,
baseline,remember picture,
node distance = -0.5cm and 1.4cm,
n/.style={draw, text width = 1.3cm, %
align = center, font= \footnotesize, rounded corners, very thick, execute at begin node=\setlength{\baselineskip}{8pt}%
},
en/.style={draw=none, minimum height=0cm, font = \scriptsize, align = center},%
c/.style={draw, fill, black, circle, inner sep=0, outer sep=0, minimum size=1mm},
l/.style={anchor=west, inner sep=0, font=\footnotesize},
init/.style={en, 
text width=1.2cm,
anchor=south west,  
inner xsep=0pt},
]

\useasboundingbox (-0.65,-0.68) rectangle (1.8,0.5);

\node[n](l0) at (0,0) {
    $\ell_0$\\\flowsepspace 
    $\dot{x}\in[1,2]$\\\flowsepspace  
    $x\leq 4$
};

\coordinate (a) at (1.8,0.25);
\coordinate (b) at (1.8,-0.25);

\draw[-latex, very thick] (l0) to node[en, above] {$e_a:r$} (a);
\draw[-latex, very thick] (l0) to node[en, below] {$e_b:\norc$\\$x\geq 2$} (b);

\end{tikzpicture}
        \begin{tikzpicture}[
scale=1.1,
baseline,remember picture,
node distance = -0.5cm and 1.4cm,
n/.style={draw, text width = 1.2cm, %
align = center, font= \footnotesize, rounded corners, very thick, execute at begin node=\setlength{\baselineskip}{8pt}%
},
en/.style={draw=none, minimum height=0cm, font = \scriptsize, align = center},%
c/.style={draw, fill, black, circle, inner sep=0, outer sep=0, minimum size=1mm},
l/.style={anchor=west, inner sep=0, font=\footnotesize},
init/.style={en, 
text width=1.2cm,
anchor=south west,  
inner xsep=0pt},
]

\useasboundingbox (-0.55,-0.45) rectangle (3.6,0.5);

\node[n,text width=0.9cm](l0) at (0,0) {
    $\ell_1$\\\flowsepspace  
    $\dot{x}=1$
};
\node[n,text width=0.9cm](l1) at (2,0) {
    $\ell_2$\\\flowsepspace  
    $\dot{x}=0$
};
\coordinate (a) at (2,0);
\coordinate (b) at (3.6,0);
\draw[-latex, very thick,] (l0) to node[en, below] {$e_a:r$} (l1);
\draw[-latex, very thick] (l1) to node[en, below] {$e_b:\norc$\\$x\leq 2$} (b);

\end{tikzpicture}
        \caption{Mixed.}\label{fig:branchings-mixed}
    \end{subfigure}\vspace{2ex}
    \begin{subfigure}[b]{1\linewidth}
        \begin{tikzpicture}[
xscale=0.5,
lineshiftup/.style={yshift=0.6pt},
lineshiftdown/.style={yshift=-0.6pt},
font= \footnotesize]

\begin{scope}[yshift=0.9cm]
\node[draw=none, anchor=west, font=\footnotesize] at (-11,0) {(a) stochastic};
\node[draw=none, anchor=west, font=\footnotesize] at (-4.5,0) {
       \subref{fig:branchings-stochastic}, $\ell_0$
    };
\begin{scope}[xscale=1.5]
    \draw [Bracket-Bracket,very thick,cloc1,lineshiftup] (0,0) -- (4,0);
    \draw [Bracket-,very thick,cloc3,yshift=-0mm,lineshiftdown] (4,0) -- (5,0);
\end{scope}
\end{scope}

\begin{scope}[yshift=0.45cm]
    \node[draw=none, anchor=west, font=\footnotesize] at (-11,0) {(b) 
    nondeterministic}; 
    \node[draw=none, anchor=west, font=\footnotesize] at (-4.5,0) {
        \subref{fig:branchings-nondet}, $\ell_0$
    };
\begin{scope}[xscale=1.5]
    \draw [Bracket-Bracket,very thick,cloc1,dashed,lineshiftup] (0,0) -- (4,0);
    \draw [Bracket-Bracket,very thick,cloc3, dashed,lineshiftdown] (0,0) -- (4,0);
\end{scope}
\end{scope}

\node[draw=none, anchor=west, font=\footnotesize] at (-11,0) {(c) mixed};
    \node[draw=none, anchor=west, font=\footnotesize] at (-4.5,0) {
       \subref{fig:branchings-mixed}, $\ell_0$
    };
\begin{scope}[xscale=1.5]
\draw [Bracket-,very thick,cloc1,lineshiftup] (0,0) -- (1,0);
\draw [-Bracket,very thick,cloc1,lineshiftup,dashed] (1,0) -- (4,0);
\draw [Bracket-,very thick,cloc3,yshift=0mm, dashed,lineshiftdown] (1,0) -- (4,0);
\draw [very thick,cloc3,yshift=0mm,lineshiftdown] (4,0) -- (5,0);
\end{scope}

\begin{scope}[xscale=1.5]
\draw [thick,black,-latex] (0,-0.45) -- (5,-0.45);
\node[fill=white,anchor=west] (x) at (5.1,-0.45) {$\prophecy({\clockr})$};

\begin{scope}[yshift=-0.45cm]
\foreach \x in {0,1,2,...,4}  {
    \node[tick] (\x) at (\x,-0.25) {$\x$}; 
    \draw (\x,-0.05)--(\x,+0.05);
}
\end{scope}

\begin{scope}[yshift=-0.4cm]
\node[draw, fill=white, text width = 1.5cm, rectangle, anchor=west, font=\scriptsize] (legend) at (5.3,1.1) {
    $\qquad\quad\Prophecies{e_a}$\\[0.5em]
    $\qquad\quad\Prophecies{e_b}$}; 
\draw[very thick,Bracket-Bracket,cloc1] ($(legend.west)+(0.2,0.19)$) -- ($(legend.west)+(1,0.19)$);
\draw[very thick,Bracket-Bracket,cloc3] ($(legend.west)+(0.2,-0.19)$) -- ($(legend.west)+(1,-0.19)$);
\end{scope}
\end{scope}

\end{tikzpicture}
\vspace*{-2ex}
\caption{Enabling prophecies corresponding to fragments.
        }\label{fig:branchings-enabling-prophecies}
    \end{subfigure}
    \caption{Additional fragments of RAC that  cause different branchings. %
    }
    \label{fig:branchings}
\end{figure}

\paragraph{Stochastic Branching}\label{subsec:stoch-branch}
If a scheduler has no choice in the presence of a random event, 
we call a branching \emph{stochastic}, as the stochastic behavior determines the duration of the time step and chooses the jump. %
The intersection of the enabling prophecies for each two children is then a null set and we call  them \emph{disjoint}. %
\begin{definition}\label{def:stochasticbranching}
    A branching at node $i$ in $\reachtree$ is called \emph{stochastic}, if
    \begin{enumerate}[(i)]
    \item %
     there exist at least two child nodes in $\validChildren(i)$, and
     the enabling prophecies $\Prophecies{j},\Prophecies{k}$ of each two child nodes $j,k\in \validChildren(i)$, $j\not=k$, are disjoint, or
     \item %
     there exists only one child node $j\in \validChildren(i)$, 
     and $e\in \Edge$ with
       $(i,e,j)\in E$ is stochastic with $\Event(e)\in\LabRandom$.
     \end{enumerate}
\end{definition}

\noindent 
 Figure~\ref{fig:branchings-stochastic} provides stochastic branching examples: $\ell_0$  illustrates (i) and $\ell_1$ illustrates (ii). 
 In both locations, the stochastic jump $e_a$ is taken when random event $r$ occurs. 
In $\ell_0$, nonstochastic jump $e_b$ can be taken only when $\valCont\proj{x}=4$ and if the stochastic event did not occur before.
Figure~\ref{fig:branchings-enabling-prophecies} illustrates the enabling prophecies for $e_a$ and $e_b$ from $\ell_0$ as disjoint: $\prophecy(r)\leq 4$ for $e_a$ and $\prophecy(r)\geq4$ for $e_b$.

\paragraph{Nondeterministic Branching}\label{subsec:nondet-branch}

We call a branching \emph{nondeterministic}, if prophecies do not restrict the time and jump choices. %
While {discrete} nondeterministic choices  are explicitly shown in the structure of the reach tree,  choices  ruled by time or rate nondeterminism are not visible in the  reach tree.
In a nondeterministic branching, each two children have the same enabling prophecies.

\begin{definition}\label{def:nondetbranching}
A branching at node $i$ in $\reachtree$ is \emph{nondeterministic}, if  for all child nodes $j \in \validChildren(i)$ with $(i,e,j)\in E$, 
 $e\in \Edge$ is nonstochastic with $\Event(e)=\:\norc$
and the enabling prophecies $\Prophecies{j}$ are equal to $i$'s enabling prophecies  $\Prophecies{i}$.
\end{definition}

\noindent
Note that if the scheduler has no choice at all and the next step is \emph{independent from the given prophecy},
we still classify the branching as \emph{nondeterministic}.

Figure~\ref{fig:branchings-nondet} depicts nondeterministic branchings:
In $\ell_0$, the scheduler may choose a time step $\delta\leq2$ with rate s.t. $\valCont\proj{x}=2$ after $\delta$, and the following jump ($e_a$ or $e_b$). 
In $\ell_1$, 
the only choice is the duration of the time step before leaving. %
 The enabling prophecies for $e_a$ and $e_b$ from $\ell_0$ are shown in Figure~\ref{fig:branchings-enabling-prophecies}:  both jumps have the same enabling prophecies ($\prophecy(r)\leq 4$) and fully cover the set of enabling prophecies of parent $\ell_0$, if assumed to be $\prophecy(r)\leq 4$. 

 Further, the running example contains a nondeterministic branching at node $(\ell_2,\tikzboxgreenshape)$, which can be reached with any prophecy $\prophecy(r)\geq 1$ (c.f. Figure~\ref{fig:runningexample-prophecies}).
 The enabling prophecies for both children $(\ell_4,\tikzboxgreenshape)$ and $(\ell_5,\tikzboxgreenshape)$ are equal and fully cover the enabling prophecies of $(\ell_2,\tikzboxgreenshape)$: the decision of which jump to take is completely decided by the scheduler, hence the branching is nondeterministic.

\paragraph{Mixed Branching}\label{subsec:mixed-branch}

In a mixed branching, the possible choices of a scheduler depend on  the prophecy, without being completely determined by it. 
Thus, the enabling prophecies for at least two children are not disjoint.

\pagebreak
\begin{definition}%
\label{def:mixedbranching}
     A branching at node $i$ is called \emph{mixed}, if 
      \begin{enumerate}[(i)]
      \item 
     there exist at least two
     child nodes $j,k \in \validChildren(i)$, $j\not=k$, with overlapping sets of enabling prophecies $\Prophecies{j},\Prophecies{k}$, and either
        \begin{enumerate}[(a)]
            \item there also exist some child nodes $l,m \in \validChildren(i)$
        with non-equal sets of enabling prophecies $\Prophecies{l},\Prophecies{m}$, or
            \item  the enabling prophecies $\Prophecies{l}$ of all $l \in \validChildren(i)$ are equal to $i$'s enabling prophecies $\Prophecies{i}$,
            and there exists some child   $m \in \validChildren(i)$  with $(i,e,m)\in E$ and $e$ is a stochastic jump with $\Event(e)=r\in\LabRandom$, or
            \item  the enabling prophecies of all child nodes $l \in \validChildren(i)$ are real subsets of $i$'s enabling prophecies $\Prophecies{i}$, or
        \end{enumerate}
        \item  there exists only one child $j\in \validChildren(i)$ with 
        $(i,e,j)\in E$ such that $e\in \Edge$ nonstochastic with $\Event(e)=\:\norc$ and a set of enabling prophecies that is a real subset 
        of $i$'s enabling prophecies $\Prophecies{i}$.
     \end{enumerate}
\end{definition}

\noindent  
Figure~\ref{fig:branchings-mixed} illustrates mixed branchings:
(i) In $\ell_0$, for prophecies $\prophecy(\clockr)\leq1$, stochastic jump $e_a$ has to be taken before $e_b$ is enabled.
 For $\prophecy(\clockr)\geq4$, the scheduler must take jump $e_b$ before the invariant in $\ell_0$ is violated.
For  $\prophecy(\clockr)\in [1,4]$, the scheduler chooses  a suitable time step and  jump $e_b$, or waits the prophesied time until random event $r$ and jump $e_a$ (i.e., choose a time step $\delta=\prophecy(\clockr)$ with rate s.t. $\valCont\proj{x}\leq4$ after $\delta$, and then  takes $e_a$).
(ii) In $\ell_2$, %
the scheduler can either take $e_b$, or choose to stay in $\ell_2$.
Only for $\prophecy(\clockr)\leq2$, the scheduler can choose to take $e_b$ at any time. 
The scheduler can also choose to stay in $\ell_2$ forever for any prophecy. 
In particular, if $\prophecy(\clockr)\geq2$, $e_b$ is never enabled, so the scheduler has to stay in $\ell_2$.
The enabling prophecies for $e_a$ and $e_b$ from $\ell_0$, as  illustrated in Figure~\ref{fig:branchings-enabling-prophecies}, overlap for $\prophecy(r)\in[1,4]$ and uniquely determine the jump for $\prophecy(r)\leq 1$ and $\prophecy(r)\geq 4$.

 Further, the running example contains mixed branchings at nodes $(\ell_0,\tikzboxpinkshape)$ and $(\ell_1,\tikzboxpinkshape)$ (c.f. Figure~\ref{fig:runningexample-reachtree} for the reach tree).
 The enabling prophecies of both children of $(\ell_0,\tikzboxpinkshape)$ ($(\ell_1,\tikzboxpinkshape)$ and $(\ell_2,\tikzboxgreenshape)$) overlap partly: for $\prophecy(r)\in[1,4]$, the scheduler may choose between both children (c.f. Figure~\ref{fig:runningexample-prophecies}, and Definition~\ref{def:mixedbranching}, (i)(a))). 
While $(\ell_1,\tikzboxpinkshape)$ only has one child node $(\ell_3,\tikzboxblueshape)$, its enabling prophecies do not fully cover the parents enabling prophecies. Hence, while the scheduler can choose jump $e_c$ if $\prophecy\in [0.5,2]$, this choice is restricted by stochastic behavior as illustrated by the enabling prophecies (c.f. Figure~\ref{fig:runningexample-prophecies}, and Definition~\ref{def:mixedbranching}, (ii)).

\section{Computation of Minimum Reachability Probabilities}
\label{sec:computation}
The challenge in computing minimum reachability probabilities for RAC  arises from the combination of continuous nondeterminism and stochasticity, since stochastic and nondeterministic branchings need to be handled differently. While stochastic branchings require the \emph{union} over all enabling prophecies, nondeterministic branchings require  the \emph{intersection} over all enabling prophecies.
Recall that the minimum probability to reach a goal state is not dual to the maximum probability to reach a goal state and hence requires a dedicated approach. 

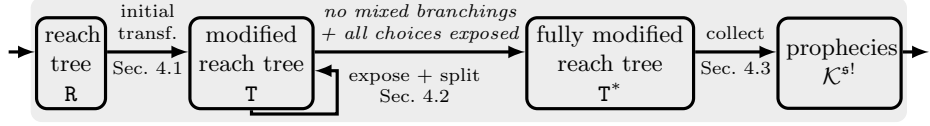
\begin{figure}[t]%
\centering
    \begin{tikzpicture}[
baseline,remember picture,
node distance = 0.3cm and 1cm,
box/.style={text width = 2cm, minimum height = 1.2cm, align = center, font= \footnotesize, very thick, anchor=north},
boxaut/.style={draw, text width = 2cm, minimum height = 2.5cm, align = center, font= \footnotesize, very thick, anchor=north, rounded corners},
boxlight/.style={draw, black!10!white, opacity=0.7, fill},
emptynode/.style={draw=none, inner sep=0},%
boxadd/.style={draw, text width = 2cm, minimum height = 1.4cm, align = center, font= \footnotesize, very thick, anchor=north},
section/.style={draw, fill=white, circle, inner sep=0.75mm, font= \footnotesize, thick},
e/.style={-latex, very thick},%
eno/.style={draw=none},%
l/.style={draw=none, minimum height=0cm, font = \scriptsize, align = center, text width = 1.5cm},%
textstyle/.style={font = \footnotesize},%
]

\pgfdeclarelayer{background}
\pgfsetlayers{background,main}

	\useasboundingbox (-0.83,-1.35) rectangle (11.37,0.2);

\node[box, draw, text width = 0.7cm, rounded corners,] (rt) at (0,0){reach tree \\ $\reachtree$};

\node[box, draw, text width = 1.42cm, rounded corners, right= 1.08cm of rt] (mrt) {modified reach tree \\ $\modtree$};

\node[box, draw, text width = 2cm, rounded corners, right= 2.8cm of mrt] (fmrt) {fully modified reach tree \\ $\fullymodtree$};

\node[box, draw, text width = 1.42cm, rounded corners, right= 1.08cm of fmrt] (proph) {prophecies \\ $\PropheciesSchedulerMin{}$};

\begin{pgfonlayer}{background}
\draw[boxlight, rounded corners] ($(rt.south west)+(-0.05,-0.15)$) rectangle ($(proph.north east)+(0.05,0.25)$);
\end{pgfonlayer}

\draw[e] ([yshift=+0.15cm]rt.east) --
node[l,above, minimum height=0.65cm] (modify) {initial\\ \phantom{\textit{p}}transf.\phantom{\textit{p}}}
node[l,below] (modify) {Sec.~\ref{subsec:modtree}}
([yshift=+0.15cm]mrt.west);

\draw[e] (mrt) -- ([yshift=-0.05cm]mrt.south) -| 
node[l,right, near end, minimum height=0.65cm,yshift=0.1cm,xshift=0cm,text width = 1.8cm] (modify) {expose + split\\Sec.~\ref{subsec:handling}} %
([xshift=0.3cm,yshift=-0.1cm]mrt.east) --
([yshift=-0.1cm]mrt.east);

\draw[e] ([yshift=+0.15cm]mrt.east) -- 
node[l,above,text width = 2.8cm] (compute) {\textit{no mixed branchings\\+ all choices  exposed}} %
([yshift=+.15cm]fmrt.west);

\draw[e] ([yshift=+0.15cm]fmrt.east) -- 
node[l,above] (compute) {\phantom{\textit{p}}collect\phantom{\textit{p}}} %
node[l,below] {Sec.~\ref{subsec:fully}} 
([yshift=+0.15cm]proph.west);

\draw[e] ([xshift=-0.35cm,yshift=+0.15cm]rt.west) -- ([yshift=+0.15cm]rt.west);
\draw[e] ([yshift=+0.15cm]proph.east) -- ([xshift=+0.35cm,yshift=+0.15cm]proph.east);

\end{tikzpicture}
    \caption{Steps of the minimum approach, detailing the gray box in Figure~\ref{fig:process-wenigerdetail}.}
    \label{fig:process-detailed}
\end{figure}

The computation of minimum reachability probabilities requires a \emph{modified reach tree}. %
We then modify this tree %
such that we can compute \emph{minimum reachability probabilities}. %
Figure~\ref{fig:process-detailed} illustrates the steps of this section, detailing Figure~\ref{fig:process-wenigerdetail}.

\subsection{Modified Reach Tree}\label{subsec:modtree}

We extend our
reach tree $\reachtree = (N_{\reachtree},E_{\reachtree})$
to a \emph{modified reach tree $\modtree=(N,E)$ for $\reachtree$}
with a nonempty finite set of nodes $N\subseteq \Naturals^2\times\Loc\times2^{\Reals^{\dimCont}}$ and a set of edges $ E = N \times  (\Edge\cup \{\tau\}) \times2^{\Prophecies{\RAC}} \times N$ such that
\begin{enumerate}[(i)]
    \item $N$ consists of one or more ``copies'' $i=((id_{\reachtree},id_{\modtree}), \ell, \valSet)$ of each original node $\rtmap{i}= (id_{\reachtree},\ell,\valSet)\in N_{\reachtree}$, augmented with a second identifier $id_{\modtree}\geq 0$ that re-establishes uniqueness, and 
    \item $E$ consists of (a) edges $(i,e,\Prophecies{},j)$ for some pairs of nodes whose origins are connected by an edge $(\rtmap{i},e,\rtmap{j})\in E_{\reachtree}$, annotated with  enabling prophecies $\Prophecies{} \subseteq \Prophecies{e}(\treePath_{\rtmap{i}})$, %
      and (b) \emph{stutter-edges} $(i,\tau,\Prophecies{},j)$ for some pairs of nodes with the same origin $\rtmap{i}=\rtmap{j}$, annotated with $\tau$ instead of a jump and some set of prophecies $\Prophecies{} \subseteq \Prophecies{e'}(\treePath_{\rtmap{\parent(i)}})$ for $(\parent(i),e',i)\in E$. %
\end{enumerate}
\noindent %
We use $\children(i)$,  $\parent(i)$, $\treePath_i$, and $\validChildren(i)$ analogously as for reach trees.
We define the \emph{initial transformation} of a reach tree $\reachtree= (N_{\reachtree},E_{\reachtree})$ into a modified reach tree $\modtree=(N,E)$ for $\reachtree$ as follows: (i) $N$ consists of one node $i=((id_{\reachtree},0), \ell, \valSet)$ for each original node $\rtmap{i}=(id_{\reachtree}, \ell, \valSet)\in N_{\reachtree}$, and (ii) $E$ consists of one edge $(i,e,\Prophecies{},j)\in E$ for each original edge $(\rtmap{i},e,\rtmap{j})\in E_{\reachtree}$ with $\Prophecies{}=\Prophecies{e}(\treePath_{\rtmap{i}})$.
Note that there is a one-to-one correspondence between the paths in $\reachtree$ and $\modtree$
after the initial transformation.
We apply further modifications, which are all \emph{path-maintaining}, i.e.,  each path in the resulting modified reach tree $\modtree$ 
has exactly one matching path in $\reachtree$, and for each path in $\reachtree$ there will be at least one matching path in $\modtree$.
 A formal definition is provided in
Appendix~\ref{app:modtree}.
We extend Definition~\ref{def:represents} by stating that a run $\run$ \emph{represented} by a path $\treePath$ in $\reachtree$ is \emph{represented} by all matching paths $\treePath'$ in $\modtree$, potentially containing additional stutter-edges  (see Appendix~\ref{app:modtree}).

After the initial transformation, the set of prophecies $\Prophecies{}$ annotating an edge $(i,e,\Prophecies{},j)\in E$ with $e\in \Edge$ is equal to the set of enabling prophecies $\Prophecies{e}(\treePath_i)$. 
We can hence use the annotated prophecies for both stutter-edges and edges with $e\in \Edge$
to determine the type of branching in a modified reach tree. 

We propose modifications on the modified reach tree which  may result in a restricted set of prophecies annotating the edges, while the excluded prophecies are 
maintained in another part of the tree, e.g., in a copy of some subtree.
We create a \emph{valid copy} of a 
\emph{subtree with root node $i$}
while \emph{restricting to a given subset of prophecies} $\Prophecies{}\subseteq \Prophecies{\RAC}$:
for %
$i=((id_{\reachtree},id_{\modtree}), \ell, \valSet)$ the copied subtree starts from the new node $i'=((id_{\reachtree},{id}'_{\modtree}), \ell, \valSet)$ with 
$id'_{\modtree}=\maxid{id_{\reachtree}}:= |\{i \in N \mid {id}_{\reachtree}^{i}={id}_{\reachtree}\}|$.
We then add copies of
(i) 
all nodes of the subtree  to $N$,
while iteratively increasing $id_{\modtree}$ %
s.t.
 $(id_{\reachtree},id_{\modtree})$ remains unique 
in $\modtree$, and 
(ii) 
all edges of the subtree to $E$, connecting the new nodes, while replacing annotated prophecies $\Prophecies{}'$ by  $\Prophecies{}''=\Prophecies{}' \cap \Prophecies{} $.

\subsection{Modifications of the Reach Tree}\label{subsec:exposesplit}

\paragraph{Exposing the Choice to Stay Forever}
\label{subsec:exposing}

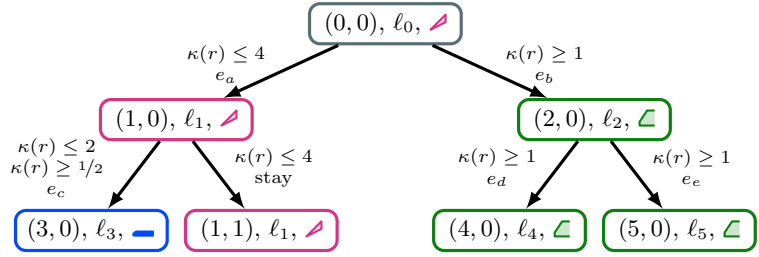
\begin{figure}[t]
    \centering
    \centering
        \begin{tikzpicture}[
	scale=1,
	node distance = 1cm and 0.8cm,
	baseline,
	remember picture,
	n/.style={draw, text width = 1.8cm, text = black, %
		align = center, font= \footnotesize, rounded corners, very thick, execute at begin node=\setlength{\baselineskip}{8pt}%
	},
	tn/.style={draw, text width = 0.2cm, text = black, %
		align = center, font= \footnotesize, circle, very thick, execute at begin node=\setlength{\baselineskip}{8pt}%
	},
	nd/.style={draw=none, text width = 2.5cm, anchor=left, text = black, %
		align = left, font= \footnotesize, rounded corners, very thick, execute at begin node=\setlength{\baselineskip}{8pt}%
	},
	en/.style={draw=none, minimum height=0cm, font = \scriptsize, align = center,outer sep=1mm},%
	ref/.style={draw, circle, minimum width=1.5mm, inner sep=0, fill,samplevaluescolor},%
	k/.style={draw=none, circle, minimum width=0, inner sep=0,outer sep=0, fill,samplevaluescolor},%
	c/.style={draw, fill, black, circle, inner sep=0, outer sep=0, minimum size=1mm},
	l/.style={anchor=west, inner sep=0, font=\footnotesize},
        secondlayerdistancesingle/.style={node distance =0.9cm and 0.1cm},
        secondlayerdistance/.style={node distance =0.4cm and 0.1cm},
        firstlayerdistance/.style={node distance =0.7cm and  1.75cm}
	]

    \begin{scope}
        
	\node[n, draw=cloc0](l0) at (0,0)  { $(0,0),\,\ell_0,\,\tikzboxpinkshape$};
	\node[n, draw=cloc1, firstlayerdistance,below left = of l0.south]  (l1) {$(1,0),\,\ell_1,\,\tikzboxpinkshape$}; 
	\node[n, draw=cloc3, firstlayerdistance, below right= of l0.south]  (l2) {$(2,0),\,\ell_2,\,\tikzboxgreenshape$};

	\node[n, draw=cloc2, secondlayerdistancesingle,below left = of l1.south]  (l3) {$(3,0),\,\ell_3,\,\tikzboxblueshape$}; 
	\node[n, draw=cloc1, secondlayerdistancesingle,below right = of l1.south]  (l1b) {$(1,1),\,\ell_1,\,\tikzboxpinkshape$}; 
    
	\node[n, draw=cloc3, secondlayerdistancesingle,below left = of l2.south]  (l4) {$(4,0),\,\ell_4,\,\tikzboxgreenshape$}; 
	\node[n, draw=cloc3, secondlayerdistancesingle, below right= of l2.south]  (l5) {$(5,0),\,\ell_5,\,\tikzboxgreenshape$};

\draw[-latex, very thick] (l0) to
     node[en, left,stochConflict,yshift=1mm] 
     {$\prophecy(r)\leq 4$ \\  $e_a$ } 
	(l1);
    
\draw[-latex, very thick] (l0) to
    node[en, right,stochConflict,yshift=1mm] 
     { $\prophecy(r)\geq 1$ \\ $e_b$ } 
	(l2);

    \draw[-latex, very thick] (l1) to
     node[en, left,stochConflict,yshift=1mm,xshift=-2mm] 
     {$\prophecy(r)\leq 2$ \\$\prophecy(r)\geq \nicefrac{1}{2}$ \\  $e_c$ } 
	(l3);

    \draw[-latex, very thick] (l1) to
     node[en, right,stochConflict,yshift=1mm] 
     {$\prophecy(r) \leq 4$  \\  stay } 
	(l1b);

    \draw[-latex, very thick] (l2) to
     node[en, left,stochConflict,yshift=1mm] 
     { $\prophecy(r)\geq 1$ \\ $e_d$ } 
	(l4);
    
\draw[-latex, very thick] (l2) to
    node[en, right,stochConflict,yshift=1mm] 
     {$\prophecy(r)\geq 1$ \\  $e_e$ } 
	(l5);
    
    \end{scope}

\end{tikzpicture}
        \caption{Running Example: Exposing the choice to stay in $\ell_1$.}\label{fig:runningexample-exposing}
\end{figure}

In any location $\ell$, w.r.t. the invariant,
a scheduler may choose to stay in $\ell$ \emph{forever} (or until reaching the time bound).
This choice is possible in nondeterministic and mixed branchings, if the invariant of $\ell$ does not restrict the scheduler's ability to reach $\tmax$ in this location 
(e.g., in $\ell_1$ of the running example with $\Inv(\ell_1)=\Reals^2$).
However, the choice of staying forever is not explicit in the reach tree.
The following definition makes such choices explicit in the reach tree by adding an absorbing node. We restrict the set of enabling prophecies for that node %
to all prophecies that enable the scheduler to stay until the time bound is reached. %

\begin{definition}[Exposing]
   Assume %
   node $i=((id_{\reachtree},id_{\modtree}), \ell, \valSet)\in N$ 
   with $\{\valCont \in \valSet \mid \valCont\proj{t} = \tmax \}\not = \emptyset$,
   and the branching at $i$ is either mixed, or nondeterministic. 
   We \emph{expose} the choice to stay in $i$ by modification of the sets $N$ and $E$:
    \begin{enumerate}[(i)]
        \item add a node $i_s =((id_{\reachtree},id'_{\modtree}), \ell, \valSet)$ with $id_{\modtree}'=\maxid{id_{\reachtree}}$, and
        \item add a stutter-edge  $(i,\tau,\Prophecies{},i_s)$ that connects the new node with $i$, with annotated set of prophecies $\Prophecies{}$ such that %
            (a)
            if $i$ is root, then $\Prophecies{}=\Prophecies{\RAC}$, and
            (b)
            if $(i',e,\Prophecies{}^i,i)\in E$ for some node $i'\in N$, $e\in \Edge\cup\{\tau\}$, then $\Prophecies{} \subseteq \Prophecies{}^i$%
            , s.t., for all $\prophecy\in \Prophecies{}^i$ it holds that  $\prophecy\in \Prophecies{}$ iff 
            there exists a scheduler $\scheduler_{\prophecy}$ and a run $\run$ represented by $\treePath_i$, s.t.
            $\last(\run)\semanticsarrowsac{\scheduler_{\prophecy}(\run)}(\ell, \valCont)$ and
            $\valCont\proj{t}=\tmax$.
    \end{enumerate}
\end{definition}
\noindent
Figure~\ref{fig:runningexample-exposing} provides an illustration of the exposing step at the running example's mixed branching at $((1,0),\ell_1,\tikzboxpinkshape)$. (Note that this illustration contains the node ids.) As it is possible to stay indefinitely in this node, we add a new absorbing node as child. As staying is not restricted, the enabling prophecies for this new node $((1,1),\ell_1,\tikzboxpinkshape)$ equal the enabling prophecies of $((1,0),\ell_1,\tikzboxpinkshape)$. %

\begin{figure}[t]%
    \centering
    \begin{subfigure}[b]{.44\linewidth}
    \centering
        \begin{tikzpicture}[
	scale=1,
	node distance = 1cm and 0.8cm,
	baseline,
	remember picture,
	n/.style={draw, text width = 1.8cm, text = black, %
		align = center, font= \footnotesize, rounded corners, very thick, execute at begin node=\setlength{\baselineskip}{8pt}%
	},
	tn/.style={draw, text width = 0.2cm, text = black, %
		align = center, font= \footnotesize, circle, very thick, execute at begin node=\setlength{\baselineskip}{8pt}%
	},
	nd/.style={draw=none, text width = 2.5cm, anchor=left, text = black, %
		align = left, font= \footnotesize, rounded corners, very thick, execute at begin node=\setlength{\baselineskip}{8pt}%
	},
	en/.style={draw=none, minimum height=0cm, font = \scriptsize, align = center,outer sep=1mm},%
	ref/.style={draw, circle, minimum width=1.5mm, inner sep=0, fill,samplevaluescolor},%
	k/.style={draw=none, circle, minimum width=0, inner sep=0,outer sep=0, fill,samplevaluescolor},%
	c/.style={draw, fill, black, circle, inner sep=0, outer sep=0, minimum size=1mm},
	l/.style={anchor=west, inner sep=0, font=\footnotesize},
        secondlayerdistancesingle/.style={node distance =0.9cm and 0.1cm},
        secondlayerdistance/.style={node distance =0.4cm and 0.1cm},
        firstlayerdistance/.style={node distance =0.7cm and  1.75cm}
	]

    \begin{scope}
        
	\node[n, draw=cloc0](l0) at (0,0)  { $(0,0),\,\ell_0,\,\tikzboxpinkshape$};
	\node[n, draw=cloc1, secondlayerdistancesingle,below left = of l0.south]  (l1) {$(1,0),\,\ell_1,\,\tikzboxpinkshape$}; 
	\node[n, draw=cloc3, secondlayerdistancesingle, below right= of l0.south]  (l2) {$(2,0),\,\ell_2,\,\tikzboxgreenshape$};

\draw[-latex, very thick] (l0) to
     node[en, left,stochConflict,yshift=1mm] 
     {$\prophecy(r)\leq 4$ \\  $e_a$ } 
	(l1);
    
\draw[-latex, very thick] (l0) to
    node[en, right,stochConflict,yshift=1mm] 
     { $\prophecy(r)\geq 1$ \\ $e_b$ } 
	(l2);

\node[draw=none,below = 0.0cm of l1.south] (dots) {$\vdots$};
\node[draw=none,below = 0.0cm of  l2.south] (dots) {$\vdots$};

    \end{scope}

\end{tikzpicture}
        \caption{Mixed branching at $\ell_0$.}\label{fig:splitting-a}
    \end{subfigure}\hfill
    \begin{subfigure}[b]{.54\linewidth}
    \centering
        \begin{tikzpicture}[
xscale=0.9,
yscale=1.25,
lineshiftup/.style={yshift=0.6pt},
lineshiftdown/.style={yshift=-0.6pt},
lineshiftdowntwice/.style={yshift=-1.8pt},
decoratenodebrace/.style={decorate,decoration={brace,amplitude=4pt,mirror},yshift=0pt, font=\scriptsize},
font= \footnotesize]

	\useasboundingbox (-0.2,-1.25) rectangle (6.4,0.7);

\begin{scope}[yshift=-0.3cm]
    \draw [thick,black,-latex] (0,0) -- (5,0);
    \node[fill=white] (x) at (5.75,0) {$\prophecy(r)$};
    \foreach \x in {0,1,2,...,4}  {
        \node[tick] (\x) at (\x,-0.25) {$\x$}; 
        \draw (\x,-0.05)--(\x,+0.05);
    }
    \foreach \x in {1,4}  {
        \draw[dotted,thick] (\x,0.02)--(\x,+0.8);
    }
\end{scope}

\begin{scope}[yshift=0cm]
   \node[draw=none, font=\footnotesize,cloc1] at (5.75,0) {$\Prophecies{}^1$};
    \draw [Bracket-,very thick,cloc1] (0,0) -- (1,0);
    \draw [-Bracket,very thick,cloc1,] (1,0) -- (4,0);
\end{scope}
\begin{scope}[yshift=0.3cm]
    \node[draw=none, font=\footnotesize,cloc3] at (5.75,0) {$\Prophecies{}^2$};
    \draw [Bracket-,very thick,cloc3,yshift=0mm, ,] (1,0) -- (4,0);
    \draw [very thick,cloc3,yshift=0mm] (4,0) -- (5,0);
\end{scope}

\draw [decoratenodebrace] (0.05,-0.7) -- node [below, yshift=-0.1cm, pos=.5,font=\footnotesize] {$e_a$} (0.95,-0.7);

\draw [decoratenodebrace] (1.05,-0.7) -- node [below, yshift=-0.1cm, pos=.5,font=\footnotesize] {$e_a$ or $e_b$} (3.95,-0.7);

\begin{scope}
\clip (4,-2) rectangle(5,2);
\draw [decoratenodebrace] (4.05,-0.7) to  node [below, yshift=-0.1cm, pos=.5,font=\footnotesize] {$e_b$} (5.2,-0.7);
\end{scope}

\end{tikzpicture}
        \caption{Prophecy sets $\Prophecies{}^1$, $\Prophecies{}^2$.}\label{fig:splitting-b}
    \end{subfigure}
    
    \begin{subfigure}[b]{1\linewidth}
    \centering
        \begin{tikzpicture}[
	scale=1,
	node distance = 1cm and 0.8cm,
	baseline,
	remember picture,
	n/.style={draw, text width = 1.8cm, text = black, %
		align = center, font= \footnotesize, rounded corners, very thick, execute at begin node=\setlength{\baselineskip}{8pt}%
	},
	tn/.style={draw, text width = 0.2cm, text = black, %
		align = center, font= \footnotesize, circle, very thick, execute at begin node=\setlength{\baselineskip}{8pt}%
	},
	nd/.style={draw=none, text width = 2.5cm, anchor=left, text = black, %
		align = left, font= \footnotesize, rounded corners, very thick, execute at begin node=\setlength{\baselineskip}{8pt}%
	},
	en/.style={draw=none, minimum height=0cm, font = \scriptsize, align = center,outer sep=1mm},%
	ref/.style={draw, circle, minimum width=1.5mm, inner sep=0, fill,samplevaluescolor},%
	k/.style={draw=none, circle, minimum width=0, inner sep=0,outer sep=0, fill,samplevaluescolor},%
	c/.style={draw, fill, black, circle, inner sep=0, outer sep=0, minimum size=1mm},
	l/.style={anchor=west, inner sep=0, font=\footnotesize},
        secondlayerdistancesingle/.style={node distance =0.9cm and 0.1cm},
        secondlayerdistance/.style={node distance =0.4cm and 0.1cm},
        firstlayerdistance/.style={node distance =0.7cm and  1.75cm}
	]

	\node[n, draw=cloc0](l0) at (0,0) {$(0,0),\,\ell_0,\,\tikzboxpinkshape$}; 
    
	\node[n, draw=cloc0,firstlayerdistance,below left = of l0.south](l01)  {$(0,1),\,\ell_0,\,\tikzboxpinkshape$}; 
	\node[n, draw=cloc0,firstlayerdistance,below  = of l0.south](l02)  {$(0,2),\,\ell_0,\,\tikzboxpinkshape$}; 
	\node[n, draw=cloc0,firstlayerdistance,below right = of l0.south](l03)  {$(0,3),\,\ell_0,\,\tikzboxpinkshape$}; 
	\node[n, draw=cloc1, secondlayerdistance,below left = of l01.south]  (l11) {$(1,1),\,\ell_1,\,\tikzboxpinkshape$}; 
	\node[n, draw=cloc1, secondlayerdistance,below left = of l02.south]  (l12) {$(1,2),\,\ell_1,\,\tikzboxpinkshape$}; 
	\node[n, draw=cloc3, secondlayerdistance, below right= of l02.south] (l21){$(2,1),\,\ell_2,\,\tikzboxgreenshape$};
	\node[n, draw=cloc3, secondlayerdistance, below right= of l03.south] (l22) {$(2,2),\,\ell_2,\,\tikzboxgreenshape$};

\node[draw=none,below = 0.0cm of  l11.south] (dots) {$\vdots$};
\node[draw=none,below = 0.0cm of  l12.south] (dots) {$\vdots$};
\node[draw=none,below = 0.0cm of  l21.south] (dots) {$\vdots$};
\node[draw=none,below = 0.0cm of  l22.south] (dots) {$\vdots$};

    \draw[-latex, very thick, ] (l0) to
    node[near start,en, left,stochConflict,yshift=+1mm] 
     {$\prophecy(r)\leq 1$ } 
	(l01.70);

    \draw[-latex, very thick, ] (l0) to
    node[,en, ,stochConflict,yshift=-0.5mm, xshift=-1mm, right, align=left]
     {$\prophecy(r)$\\$\in [1,4]$ } 
	(l02);
    
    \draw[-latex, very thick, ] (l0) to
    node[near start,en, right,stochConflict,yshift=+1mm] 
     {$\prophecy(r)\geq 4$ } 
	(l03.110);

    \draw[-latex, very thick] (l01) to   
    node[en, left,stochConflict,yshift=1mm] {$e_a$ } 
    (l11);
    \draw[-latex, very thick] (l02) to 
    node[en, left,stochConflict,yshift=1mm] {$e_a$ }  (l12);
    \draw[-latex, very thick] (l02) to
    node[en, right,stochConflict,yshift=1mm] {$e_b$ }   (l21);
    \draw[-latex, very thick] (l03) to
    node[en, right,stochConflict,yshift=1mm] {$e_b$ }  (l22);

\end{tikzpicture} 
        \caption{Modified reach tree after splitting.}\label{fig:splitting-c}
    \end{subfigure}
    \caption{Running Example: Splitting of a mixed branching at $\ell_0$.
    }
    \label{fig:illustrate-splitting}
\end{figure}
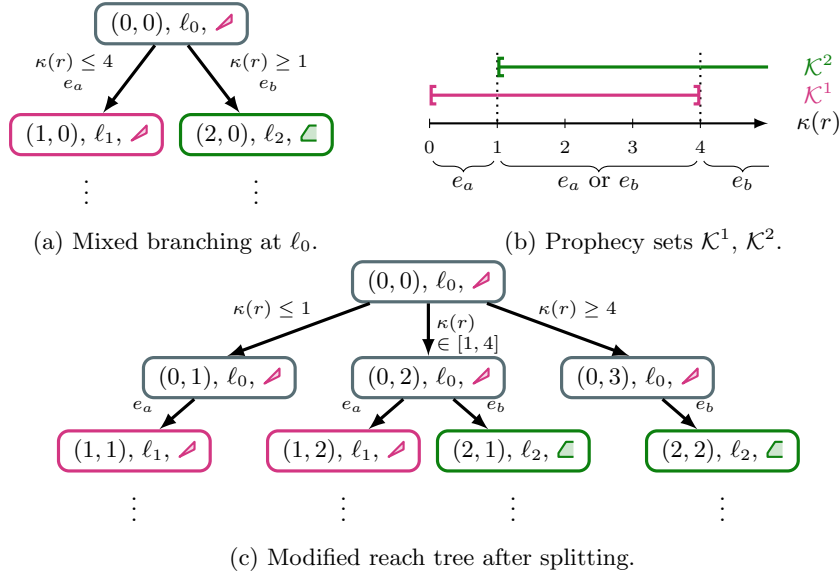

\paragraph{Splitting a Mixed Branching}\label{subsec:handling}

After exposing all choices to stay, mixed branchings require special handling, which first inserts an additional layer of nodes and then splits the sets of enabling prophecies leading to the nodes in the extra layer into disjoint sets.
Figure~\ref{fig:illustrate-splitting} illustrates  the concept of  one splitting step in the running example, which realizes the splitting of  enabling prophecies at node $((0,0),\,\ell_0,\,\tikzboxpinkshape)$, which has a \emph{mixed} branching.
In this example, sets of prophecies $\Prophecies{}^1=\{\prophecy\in \Prophecies{\RAC} \mid \prophecy(r)\in [0,4] \}$ and $\Prophecies{}^2=\{\prophecy\in \Prophecies{\RAC} \mid \prophecy(r)\in [1,\infty) \}$ are  split into three sets of prophecies, such that the resulting sets enable the same set of jumps, i.e., $\{e_a\},\{e_a,e_b\},\{e_a\}$. 
This is done by identifying the intersection of the sets of prophecies $\Prophecies{}^1$ and $\Prophecies{}^2$, as well as each set of prophecies without that intersection, as illustrated in Figure~\ref{fig:splitting-b} by the dotted lines.
For each of the resulting sets of prophecies, an additional child node  is added to $((0,0),\,\ell_0,\,\tikzboxpinkshape)$ in the modified tree, and the connecting edge is annotated with the restricted set of prophecies. 
Every new child node is then attached with 
a copy of those original child nodes of $((0,0),\,\ell_0,\,\tikzboxpinkshape)$ (and their subtrees), for which the corresponding jump is enabled by the restricted set of prophecies.
The resulting  modified reach tree  is shown in Figure~\ref{fig:splitting-c}.
Prophecies are split into sets of $\Prophecies{}^I$ such that the same subset of children $I \subseteq \validChildren(i)$ is reachable for all $\prophecy\in \Prophecies{}^I$:
\begin{definition}[Splitting]
For %
a mixed branching at node $i\allowbreak= \allowbreak((id_{\reachtree}, \allowbreak id_{\modtree}), \ell, \valSet)\in N$,
we \emph{split} the set of prophecies $\bigcup_{(i,e,\Prophecies{},j)\in E} \Prophecies{}$ into disjoint sets of prophecies $\Prophecies{}^I \in \mathbb{K} $ with $I\subseteq \validChildren(i)$, $I\not = \emptyset$, such that 
\begin{enumerate}[(i)]
    \item for all $\Prophecies{}^I\in \mathbb{K}$,  all $ \prophecy\in \Prophecies{}^I$ and all $ j \in I$ with $(i,e,\Prophecies{}^j,j)\in E$: $\prophecy\in \Prophecies{}^j$,
    \item for all $\smash{j \in \validChildren(i)}$ and all $ \prophecy\in \Prophecies{}^j$, it exists $ \smash{\Prophecies{}^I \in \mathbb{K}}$ s.t. $ \smash{\prophecy \in\Prophecies{}^I} $ and $\smash{ j \in I}$, 
    \item for all $\Prophecies{}^I\in \mathbb{K}$, $\Prophecies{}^I$ is not a null set,  and
    \item for all $\Prophecies{}^I,\Prophecies{}^{I'}\in \mathbb{K}$, $I\not = I'$,  $\Prophecies{}^I\cap\Prophecies{}^{I'}$ is a null set.
\end{enumerate}
We  realize \emph{splitting} in the tree by modifying sets $N$ and $E$. For each $\Prophecies{}^I\in \mathbb{K}$: 
    \begin{enumerate}[(i)]
        \item add one node $i_I=((id_{\reachtree},id'_{\modtree}), \ell, \valSet)$ with $id_{\modtree}'=\maxid{id_{\reachtree}}$,
        \item add one $\tau$-edge $(i,\tau,\Prophecies{}^I,i_I)$ connecting the new node $i_I$  annotated  with $\Prophecies{}^I$,
        \item for each $j\in I$ 
        add a valid copy of 
        the subtree with root $j$,
        starting from $j_I$, while restricting to prophecies $\Prophecies{}^I$,
        \item  add an edge $(i_I,e^j,\Prophecies{}',j_I) $ for all nodes $j_I$  with $(i,e^j,\Prophecies{}^j,j)\in E$ and 
            $\Prophecies{}'= \Prophecies{}^j \cap \Prophecies{}^I$ 
            that connects the new subtree with $i_I$,
    \end{enumerate}
    and for all $j\in \validChildren(i)$, $j\not = i_I$ for any $\Prophecies{}^I\in \mathbb{K}$, remove $(i,e,\Prophecies{},j)$ and $j$.
    
\end{definition}
Recall that $\children(i)$ may also contain child nodes that cannot be reached, as their enabling prophecies are null sets. Hence,  considering $\validChildren(i)$ for the sets $I$ is sufficient. %
Splitting a mixed branching might result in non-convex sets of enabling prophecies,
as the set of prophecies for a single jump is restricted by removing all
prophecies that also enable another jump (i.e., via set difference). %
Non-convex sets can be  represented in different ways; when using convex polytopes as state-space representation, one possibility is to use a union of
convex polytopes.
A series of splitting-operations  results in an exponential growth of the number of convex polytopes needed to represent a non-convex polytope.
A splitting operation does not introduce new mixed behavior; copies inherit it only if it already exists in the subtree, preserving the finiteness of the modified reach tree.

We remark that in mixed branchings with multiple edges, the overlap of the childrens' enabling prophecies can lead to many sets $I$. 
A \emph{separation} step can simplify the mixed case by specifying a winning random event  in an additional layer of nodes. 
This optional step is  formalized and illustrated in  Appendix~\ref{app:separating}.

\subsection{Minimum Reachability Probabilities}\label{subsec:fully}

We call a modified reach tree $\modtree$ 
\emph{fully modified}, %
if (i) initially,  it was created from $\reachtree$
via the  path-maintaining initial transformation,
(ii) it has only been changed by  {path-maintaining} modifications, e.g.,  \emph{splitting} or \emph{exposing}, 
(iii) all branchings in $\modtree$ are classified as \emph{stochastic} or \emph{nondeterministic} branchings and (iv)  all choices  to stay have been exposed.
Assume in the following  a fully modified reach tree $\fullymodtree=(N^{*},E^{*})$.
We can now define the set of prophecies enforcing to reach $\Goal$ by using intersection of the enabling prophecies in nondeterministic branchings, and by taking the union of enabling prophecies in stochastic branchings:
\begin{definition}[Enforcing set of prophecies]\label{def:enforcing} %
We define  the set $\Prophecies{}^{\Goal!}(i)$ of prophecies \emph{enforcing} a scheduler to reach $\Goal$ from node $i=((id_{\reachtree},id_{\modtree}),\ell, \valSet)\in N^{*}$.
\begin{align*}
\Prophecies{}^{\Goal!}(i) =
    \begin{cases}
        \bigcup_{j\in \validChildren(i)} \Prophecies{}^{\Goal!}(j), 
        &\text{if } \validChildren(i)\not = \emptyset \text{ and stochastic branching at } i\\
        \bigcap\nolimits_{j\in \validChildren(i)} \Prophecies{}^{\Goal!}(j),  
        &\text{if } \validChildren(i)\not = \emptyset \text{ and nondet. branching at } i,\\
        \Prophecies{}'
        & \text{if }\validChildren(i)= \emptyset,\ \ell\in\Goal, \text{ and } \exists (i',e,\Prophecies{}',i)\in E^{*}, \\
        \Prophecies{\RAC} 
        & \text{if }\validChildren(i)= \emptyset,\ \ell\in\Goal, \text{ and } i \text{ root of } \fullymodtree,\\
        \emptyset 
        & \text{otherwise.}
    \end{cases}
\end{align*}

\noindent We use $ \Prophecies{}^{\Goal!}$ for $\Prophecies{}^{\Goal!}(i)$ if $i$ is the root node of $\fullymodtree$.
\end{definition}

\begin{lemma}\label{lemma:kskg}
For each scheduler $\scheduler\in\SchedulersProphetic{\RAC}$, let $\PropheciesSchedulerMin{}$ be the set of all prophecies $\prophecy$ for which $\scheduler_{\prophecy}$ cannot avoid $\Goal$ in $\RAC$ within bounds $(\tmax,\jumpmax)$. Then, for all minimum schedulers $\scheduler\in\SchedulersProphMin{\RAC}$ it holds that
${\PropheciesSchedulerMin{}}
    = \Prophecies{}^{\Goal!}.$%
\end{lemma}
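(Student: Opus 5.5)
We prove the lemma by structural induction over the fully modified reach tree $\fullymodtree$, from the leaves to the root. The induction hypothesis is a local version of the lemma. For each node $i$, let $\PropheciesSchedulerMin{}(i)$ be the set of prophecies $\prophecy$, taken inside the prophecy set annotating the edge into $i$ (or $\Prophecies{\RAC}$ at the root), for which $\scheduler_{\prophecy}$ cannot avoid $\Goal$. Here $\scheduler$ is a minimum scheduler and is restricted to continuations of runs represented by $\treePath_i$. We claim $\PropheciesSchedulerMin{}(i)=\Prophecies{}^{\Goal!}(i)$ up to null sets. Null sets do not affect the integral in Definition~\ref{lemma:probability}, so this suffices. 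We will also use that a minimum scheduler can be taken to make the locally optimal choice in every subtree. This holds because prophetic schedulers may depend on both history and $\prophecy$, so choices in disjoint subtrees and for different prophecies can be combined freely into one scheduler.

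We also need that $\fullymodtree$ faithfully represents the runs of $\RAC$. By path-maintenance of the initial transformation, of exposing and of splitting, every run fulfilling $\prophecy$ is represented by a path of $\fullymodtree$ whose edge annotations all contain $\prophecy$. Conversely, each such path carries a represented run fulfilling $\prophecy$, by clause (i) of enabling prophecies together with the restrictions made in valid copies.

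\emph{Base cases (leaves).} Suppose $\validChildren(i)=\emptyset$. If $\ell^i\in\Goal$, every prophecy annotating the incoming edge reaches $\Goal$ on arrival, so $\PropheciesSchedulerMin{}(i)=\Prophecies{}'$ (or $\Prophecies{\RAC}$ at the root). If $\ell^i\notin\Goal$, the node is a leaf because the jump bound is hit, because it is an exposed stay node, or because every further jump has probability zero. In all three cases the goal can be avoided almost surely, which matches $\emptyset$.

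\emph{Stochastic branching.} At a stochastic branching the enabling prophecies of the children are pairwise disjoint up to null sets, and by property (ii) of enabling sets they cover the node's prophecies. Hence for almost every $\prophecy$ there is exactly one child $j$ whose edge $\prophecy$ can take. The scheduler has no way of influencing which child this is, so $\prophecy$ is enforcing at $i$ iff it is enforcing at that $j$, which gives the union.

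\emph{Nondeterministic branching.} At a nondeterministic branching every child is enabled by every prophecy of the node, all jumps are nonstochastic, and all stay options have been exposed as extra children. Because the scheduler is prophetic, for each $\prophecy$ it can pick any child. It therefore avoids $\Goal$ iff some child avoids it, so the enforcing set is the intersection.

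The main obstacle is justifying that no choice is hidden from the branching structure. After the modifications, every scheduler decision at a node must correspond to selecting a child for the given $\prophecy$. Two kinds of choice are not visible as tree edges. The first is continuous time and rate choices. Here we argue that they only matter through which jump is taken next, or through staying until $\tmax$; the latter is now an explicit child by exposing. The second is a random event firing before a nonstochastic jump is taken. For this we use that splitting removes all mixed branchings. After splitting, for every $\prophecy\in\Prophecies{}^I$ the reachable children are exactly $I$, by conditions (i) and (ii) of splitting. Consequently each remaining branching is either purely prophecy-determined (stochastic) or a free choice among children with identical prophecy sets (nondeterministic). A residual stochastic edge under a nondeterministic branching is excluded by Definition~\ref{def:nondetbranching}. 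Putting the local equalities together from the leaves up yields $\PropheciesSchedulerMin{}=\Prophecies{}^{\Goal!}$ at the root for every minimum scheduler: a minimum scheduler avoids $\Goal$ exactly whenever avoidance is possible, since otherwise its integral in Definition~\ref{lemma:probability} would not be minimal.
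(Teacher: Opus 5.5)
Your bottom-up union/intersection induction is essentially the paper's own per-prophecy argument. The paper prunes $\fullymodtree$ to what is relevant for $\prophecy$, observes that only nondeterministic branchings remain, and checks both inclusions. You are right to weaken the claim to equality up to null sets. A minimum scheduler can be changed on a single prophecy for which avoidance is possible without changing its integral, so exact equality for \emph{all} minimum schedulers cannot hold in general. Your closing claim that a minimum scheduler avoids $\Goal$ ``exactly whenever avoidance is possible'' must then also be read almost everywhere.

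The genuine gap is the step you flag as the main obstacle. Continuous time and rate choices do \emph{not} matter only through the next jump or through staying until $\tmax$ at the current node. They fix the valuation with which later nodes are entered, in particular the random-clock values relative to the global time. So for a fixed $\prophecy$, what can happen below a node depends on the run prefix, not only on the node. Your local set $\PropheciesSchedulerMin{}(i)$ is therefore not a function of $(i,\prophecy)$, and neither the union step nor the intersection step composes.

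Here is a concrete failure. Let $\ell_0$ be initial with all variables $0$, $\dot y=1$, invariant $y\leq 5$, $\dot r=0$, and an unguarded nonstochastic jump to $\ell_1$. Let $\ell_1$ have $\dot r=1$, no invariant, and a stochastic jump labelled $r$ to the goal $\ell_G$. Take $\tmax=10$ and $\jumpmax=2$. The tree is root $\to\ell_1\to\ell_G$.
\begin{itemize}
\item The root is nondeterministic: it has a single nonstochastic child carrying all prophecies. No root state has $t=\tmax$, so nothing is exposed there.
\item $\ell_1$ is stochastic by case (ii) of Definition~\ref{def:stochasticbranching}, so exposing does not apply.
\item The goal edge carries $\prophecy(r)\leq 10$.
\end{itemize}
Hence $\Prophecies{}^{\Goal!}=\{\prophecy\mid\prophecy(r)\leq 10\}$. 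Yet for $\prophecy(r)\in(5,10]$ a scheduler can wait $5$ time units in $\ell_0$. Then $r$ expires only at time $5+\prophecy(r)>\tmax$, so $\Goal$ is avoided within the bounds. The true enforcing set is $\{\prophecy\mid\prophecy(r)\leq 5\}$, which differs by positive measure: for $r\sim\mathcal{U}(0,20)$ it is $0.25$ versus $0.5$.

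In your argument this breaks two cases.
\begin{itemize}
\item The stochastic case at $\ell_1$ fails. Property (ii) of enabling sets only says that every run through an edge fulfils some prophecy of its annotation. It gives no coverage of $\Prophecies{i}$, and whether a given $\prophecy$ leads into the child depends on the prefix.
\item The nondeterministic case at the root fails. The decisive choice, the sojourn time in $\ell_0$, is not a choice among children.
\end{itemize}
The paper's proof has the same blind spot: it only inspects paths to leaves, never runs that end inside a stochastic node because of an earlier continuous choice. So a more careful write-up of the same induction cannot close this gap. You would need an additional hypothesis excluding such prefix-dependent behaviour, or a further tree modification that exposes it as an explicit nondeterministic choice.
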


\begin{proof}
Assume a RAC \RAC, a time horizon $\tmax\in\Realspos$, and a jump bound $\jumpmax\in\Naturals$. Flowpipe construction~\cite{alur1995AlgorithmicAnalysisHybrid} computes a reach tree, which contains in its nodes the exact sets of all states of \RAC that are reachable within the given bounds. Note that the random clock values are explicitly stored in the states. By assumption, each random event occurs at most once and, once taken, the valuation of its clock remains constant thereafter. Thus, each leaf state encodes exactly the prophecies enabling its path under some scheduler.

Enabling sets of prophecies are  computed via projection onto the stochastic domain (i.e., extracting the prophecies from the states)  and  annotated  on the edges of the reach tree, yielding  a modified reach tree.
This initial transformation, as well as the operations exposing and splitting are path-maintaining; hence a given fully modified reach tree correctly contains all enabling sets of prophecies.

Consider a fully modified reach tree $\fullymodtree$ (without mixed  branchings) and a prophecy $\prophecy\in \Prophecies{\RAC}$. Removing nodes not leading to a $\prophecy$ leaf yields a tree with only nondeterministic branching, as stochastic branches already partition prophecies.

``$\subseteq$'': If  $\prophecy\in \PropheciesSchedulerMin{}$, a minimum scheduler $\scheduler_{\prophecy}$ is enforced to reach a goal state. %
That means that
all paths in $\fullymodtree$, collected over the nondeterministic branchings for $\prophecy$, end in a goal location.
For all those branchings, $\prophecy$ has to be in all enabling sets of prophecies, by definition of $\PropheciesSchedulerMin{}$.
Hence, by Definition~\ref{def:enforcing}, $\prophecy\in  \Prophecies{}^{\Goal!}$.

``$\supseteq$'': If $\prophecy\in  \Prophecies{}^{\Goal!}$, for all leaf nodes $i$ where $\prophecy$ is in the enabling set on the edge to $i$, $\ell^i\in \Goal$, and, for all nondeterministic branchings on all paths to those nodes, $\prophecy$ is in all enabling sets of prophecies and all forked paths end in goal locations. Hence, $\scheduler_{\prophecy}$ cannot avoid the goal and thus $\prophecy\in \PropheciesSchedulerMin{}$. \hfill \qed
\end{proof}

We compute the minimum reachability probability via the set of enforcing prophecies: 
$p_{\RAC}^\textsl{min}(\Goal, \tmax,\jumpmax) 
=  \int_{\Prophecies{}^{\Goal!}} G(\prophecy) \ d\prophecy $, given  RAC $\RAC$, set of goal locations $\Goal$, bounds $\tmax$ and $\jumpmax$ and the joint probability distribution for all random events $G(\prophecy)=\allowbreak \prod_{\clockr\in\LabRandom} \Distr(\clockr)(\prophecy(\clockr))$. %
The complexity of computing $\Prophecies{}^{\Goal!}$ is exponential in the worst case, as the reachability analysis is exponential in the jump depth, and splitting mixed branchings is exponential in the number of children of a branching, see Appendix~\ref{app:complexity} for details.
Given $\Prophecies{}^{\Goal!}$ and $G(\prophecy)$, as in \cite{journal}, multidimensional integration is used to yield the reachability probability.

We provide reachability probabilities for the running example in Table~\ref{tab:runningex}, for different goal locations. 
We compute the set of prophecies $\PropheciesSchedulerMin{}$ enforcing the scheduler to reach $\Goal$ by hand and integrate over the probability distribution.

\begin{table*}[tb]
    \centering
    \scriptsize
    \caption{
      Running Example: $p_{\RAC}^\textsl{min}$ and $p_{\RAC}^\textsl{max}$ for goal $\Goal$, $\tmax=5$, $\jumpmax=2$, $r\sim\mathcal{U}(0,5)$; with   prophecies $\PropheciesSchedulerMin{}$ (\emph{enforcing}, for $p_{\RAC}^\textsl{min}$) and $\PropheciesSchedulerMax{}$ (\emph{allowing}, for $p_{\RAC}^\textsl{max}$).
      }
    \label{tab:runningex}
    \newcolumntype{Y}{>{\centering\arraybackslash}X}
    \renewcommand{\arraystretch}{1}
    \begin{tabularx}{\linewidth}{cYYYYY}
        \toprule
          $\Goal$
          & $\ell_1$ 
          & $\ell_2$ 
          & $\ell_3$ 
          & $\ell_4$  
          & $\ell_5$   \\
          \midrule
          $\PropheciesSchedulerMin{} $
            & $\leq 0.5$
            & $\geq 4$ 
            & $\emptyset$ 
            & $\emptyset$
            & $\emptyset$
            \\ 
            $p_{\RAC}^\textsl{min}$
            & \probabminunishort{0.1} 
            & \probabminunishort{0.2} 
            & \probabminunishort{0.0} 
            & \probabminunishort{0.0} 
            & \probabminunishort{0.0}
            \\
        \midrule
          $\PropheciesSchedulerMax{} $
            & $\leq 4$
            & $\geq 1$
            & $\in [0.5,2]$   
            & $\geq 1$
            & $\geq 1$
            \\ 
            $p_{\RAC}^\textsl{max}$
            & \probabminunishort{0.8} 
            & \probabminunishort{0.8} 
            & \probabminunishort{0.3} 
            & \probabminunishort{0.8} 
            & \probabminunishort{0.8}
            \\
          \bottomrule
    \end{tabularx}
\end{table*}

For $\Goal=\{\ell_1\}$, the minimum scheduler is only forced to reach the goal if $\prophecy(r)\leq 0.5$, resulting in a minimum probability of $0.1$.
For  $\Goal=\{\ell_3\}$, enforcing prophecies are empty and $ p_{\RAC}^\textsl{min}(\{\ell_3\}, 5,2)=0$, since it is always possible to stay in location $\ell_1$ as exposed in Figure~\ref{fig:runningexample-exposing}. 
For $\Goal=\{\ell_2\}$, a scheduler can avoid that location entirely if $\prophecy(r)\leq 4$ as illustrated in the splitting in Figure~\ref{fig:illustrate-splitting}. Hence, the minimum probability to reach $\ell_2$ is $0.2$.
If the goal location is $\ell_4$, it is always possible to decide to go to location $\ell_5$. Following Definition~\ref{def:enforcing}, the set of enforcing prophecies to reach $\Goal=\{\ell_4\}$ is empty due to intersection with the empty set at the nondeterministic branching at $\ell_2$. Hence, the minimum reachability probability is $0$. This also holds analogously for $\Goal=\{\ell_5\}$.

\section{Experimental Results}
\label{sec:casestudy}
Our approach is implemented in \realyst\footnote{\realyst and model files 
are available through \url{go.uni-muenster.de/realyst}.}~\cite{realyst},  
which relies on \hypro for flowpipe construction~\cite{Schupp2017}. %
Numerical integration with Monte Carlo Vegas~\cite{lepage1978integration} yields reachability probabilities and provides a statistical error\footnote{The statistical error is reliable only if the number of samples is large enough for the distribution (whose mean equals the true integral) to be Gaussian~\cite{lepage1978integration}.}.
Evaluations are performed using an AMD Ryzen $9$ $5900$X with 12$\times$\SI{3.70}{\giga\hertz} and \SI{32}{\giga\byte} RAM.
We further provide an evaluation of a set of small examples 
in Appendix~\ref{app:smallexamples}.

We evaluate our approach on the existing CAR case study from~\cite{journal,tase} to showcase feasibility and limitations of the presented approach. %
The case study models an electrical car that is initially charging until a random event  ends \emph{charging} and starts \emph{driving}. 
During charging, the model can progress to different states (\texttt{A}$\rightarrow$\texttt{B}$\rightarrow$\texttt{C}$\rightarrow$\texttt{full}) with lower charging rates for an increased state of charge. %
Each charging state has a nondeterministic (rectangular) charging rate, and changing to the next charging state also includes time nondeterminism; together, this approximates the complex battery charging behavior.
While driving, either (i) the drive is ended (determined  by a random event), (ii) the battery is empty which stops the drive, or (iii) another random event determines a \emph{detour}, ultimately restarting the charging-and-driving process. 
Appendix~\ref{app:car} illustrates the RAC.

We compute the minimum probability to drain the battery, i.e., reaching goal location $\ell_{\texttt{empty}}$, which can happen while driving. %
Modeling zero or $1$ detour(s) explicitly, we consider $\car^{0}$ and $\car^{1}$ with up to $7$ resp. $15$ locations (c.f. $\lvert \Loc \rvert$ in Table~\ref{tab:car-esults}). %
We choose the same parameters as in~\cite{journal}:
Bounds are $\tmax=10$, $\jumpmax=7$ for $\car^{0}$,  
and $\tmax=20$, $\jumpmax=14$ for $\car^{1}$.
Random events occur more often with more detours and are distributed as follows: 
(i) the charging duration follows a folded normal distribution $\mathcal{N}(1.5,2)$,
(ii) the duration of the drive  follows  $\mathcal{N}(2,0.75)$, and
(iii) the random event that triggers a detour is distributed with $Exp(1)$.
The model considers three different charging variants \texttt{A}, \texttt{AB} and \texttt{ABC} (indicating the available charging states), %
and in addition to the \emph{rectangular} flow version also a variant where all flow rates are fixed to a constant (\emph{singular} variant). %
Integration uses \samples{1e5} resp. \samples{2e6} samples for $\car^{0}$, $\car^{1}$.

Results and computation times $\comptime$ for $p_{\RAC}^\textsl{min}$  for all variants are given in Table~\ref{tab:car-esults}.
As expected, the minimum reachability probability is always smaller than the maximum one.
Minimum reachability increases with more charging states, %
and decreases when allowing more detours, as the scheduler ultimately has  more options to avoid the goal.

\begin{table}[t]
    \centering
    \scriptsize
    \caption[Results CAR maximum]{
      $p_{\RAC}^\textsl{min}$, $p_{\RAC}^\textsl{max}$ for a (i) rectangular variant and (ii) singular variant of the \emph{CAR} model  for  $\Goal=\{\ell_\texttt{empty}\}$; %
    computation times $\comptime$ and error $\estat$ for $p_{\RAC}^\textsl{min}$.
      }
    \label{tab:car-esults}
    \newcolumntype{Y}{>{\centering\arraybackslash}X}
    \renewcommand{\arraystretch}{1}
    \begin{tabularx}{\linewidth}{p{0.5cm}p{0.5cm}YYYYYY}
        \toprule
         \multicolumn{2}{c}{\ }     &    \multicolumn{3}{c}{$\car^{0}$} &  \multicolumn{3}{c}{$\car^{1}$}  \\
      \cmidrule(lr){3-5}\cmidrule(lr){6-8}%
               \multicolumn{2}{p{1.05cm}}{variant}    & \texttt{A}& \texttt{AB}& \texttt{ABC}& \texttt{A} &\texttt{AB}& \texttt{ABC}\\%&  \texttt{A}   \\
        \multicolumn{2}{p{1.45cm}}{\parbox{1.45cm}{\scriptsize$\lvert\Loc\rvert, \lvert\reachtree\rvert, \lvert\fullymodtree\rvert$}}
        & \scriptsize$5,8,38$ 
        & \scriptsize $6,12,96$ 
        & \scriptsize $7,16,212$
        & \scriptsize$11,36,1632$ 
        & \scriptsize$13,88,$\tiny$\,{>}20k$  
        &\scriptsize$15,167,$\tiny$\,{>}250k$    \\
          \midrule
          \multirow{4}{*}{(i)} %
            & $p_{\RAC}^\textsl{min}$ 
            & \probab{0.1031432998097423} & \probab{0.132673005547679} & \probab{0.1331008708845563} & 
            \probab{0.1262443258551222} & \probab{0.1329856005144871} & \probab{0.1330104133334692} %
            \\
            & {$\comptime$}  
            & \rt{0.143914} & \rt{0.342713} & \rt{0.43265} 
            & \rt{59.0613} & \rt{68.659} & \rt{89.4327} 
            \\
            & $\estat$ 
            & \errortable{6.155165374595945e-05} & \errortable{0.0003681473895223651} & \errortable{7.566521100871698e-05} 
            & \errortable{0.0001228957873996281} & \errortable{0.0006802934059132314} & \errortable{0.0001800865579697933} %
            \\\cmidrule{2-8} 
            & $p_{\RAC}^\textsl{max}$  
            & \probab{0.3751797499988345} & \probab{0.4506640726652035} & \probab{0.4636189079148219} & 
            \probab{0.4904922873525423} & \probab{0.58259785817392} & \probab{0.6061182478991292} %
            \\
          \midrule
          \multirow{4}{*}{(ii)} %
            & $p_{\RAC}^\textsl{min}$  
            & \probab{0.1826180452547154} & \probab{0.2332422087540737} & \probab{0.2339228127660594} & 
            \probab{0.2480703272146985} & \probab{0.2564476194868545} & \probab{0.2345494681186517} %
            \\
            & {$\comptime$}  
            & \rt{0.142142} & \rt{0.382174} & \rt{0.465531} 
            & \rt{28.5423} & \rt{41.1311} & \rt{96.8459} 
            \\ 
            & $\estat$ 
            & \errortable{0.0001025686957639721} & \errortable{0.0005609190087067634} & \errortable{0.000139865812804895} & 
            \errortable{0.0002646173268515717} & \errortable{0.0002669765833851491} & \errortable{0.001003910508143656} %
            \\\cmidrule{2-8} 
            & $p_{\RAC}^\textsl{max}$  
            & \probab{0.2386263041785614} & \probab{0.3099455316487339} & \probab{0.3114187220156135} & 
            \probab{0.3006291343838571} & \probab{0.3962775381627917} & \probab{0.4018305818201353} %
            \\
          \bottomrule
    \end{tabularx}
\end{table}

In the minimum setting,
the impact of additional mixed branchings is significant for both the collection of prophecies and the integration (c.f. Figure~\ref{fig:new_plots}). 
$\vert \fullymodtree \vert$ in Table~\ref{tab:car-esults} serves as indicator for the  computational effort. 
All three charging locations induce \emph{mixed branchings}, as
the choice of the nonstochastic jump to switch charging location is subject to rate and time nondeterminism. 
Resolving mixed branchings requires splitting, so the number of nodes in $\fullymodtree$ increases even more for additional charging locations, which reflects in the computation time. 
Appendix~\ref{app:car}  further details the computation times.  %
Recall that splitting operations can result in non-convex polytopes, represented as union of convex polytopes.
For many mixed branchings, this  increases the computation time for integration as the containment check of Monte Carlo Vegas has to consider a larger number of polytopes. 
In summary, this case study highlights the challenges caused by mixed branchings, and the approach performs better on models with fewer mixed branchings, e.g. caused by concurrent random and nondeterministic time delays. Note, that the classification of branchings serves to identify  mixed branchings. It is unclear whether this can be done on a purely syntactical level.

Considering a $2$-detours-variant $\car^{2}$, reachability analysis already takes $\sim$\SI{190}{\second} for charging $\texttt{A}$ in the singular variant.
\realyst is able to compute maximum reachability probabilities in \SI{398}{\second}, while the minimum approach spends \SI{583}{\second} for branching classifications, and aborts after additional \SI{40}{\minute} due to a memory overflow during prophecy collection, caused by the amount of 
polytopes.
The rectangular variant for that setting already aborts due to a memory overflow during reachability analysis for both maximum and minimum after \SI{35}{\minute}.

\definecolor{color-time-forward}{named}{cpath1light}
\definecolor{color-time-processing}{named}{cpath2dark}
\definecolor{color-time-refinement}{named}{color2}
\definecolor{color-time-integration}{named}{amethyst}
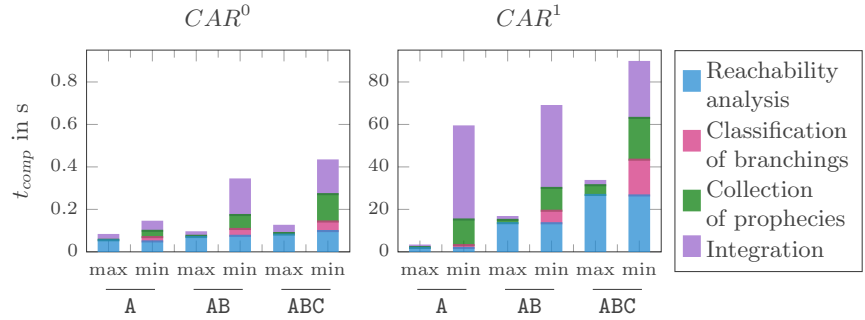
\begin{figure}[tb]
    \centering
    \pgfplotstableread[col sep=comma]{plots/data/new/minimum_results_maxmin_0.csv}\datatable%
\pgfplotsset{every tick label/.append style={font=\scriptsize}}
\begin{tikzpicture}[font=\sffamily, fill opacity = 0.75, opacity = 0.75]
    \begin{axis}[colormap/viridis,
    scale only axis=true,
    title=$CAR^{0}$,
        ybar stacked,
        ylabel={$\comptime$ in s}, 
        xticklabels={max,min,max,min,max,min},
         typeset ticklabels with strut,
        xtick=data,
        minor x tick num=1,
        enlarge x limits=0.1,
        legend pos=north west,
        grid style=dashed,
        bar width=0.26cm,
        ymin=0,        
        ymax=0.95,%
        width=3.5cm,
         height=0.22\textwidth,
     draw group line={chargingtype}{A}{\texttt{A}}{-0.5cm}{1pt},
     draw group line={chargingtype}{AB}{\texttt{AB}}{-0.5cm}{1pt},
     draw group line={chargingtype}{ABC}{\texttt{ABC}}{-0.5cm}{1pt}
    ]

    \addplot[color=color-time-forward, fill, thick, name path=A] table[x=table_index, y=t_flowpipe, col sep=comma] \datatable;
    \addplot[color=color-time-refinement,fill,  thick, name path=A] table[x=table_index, y=t_branchings, col sep=comma] \datatable;
    \addplot[color=color-time-processing,fill,  thick, name path=A] table[x=table_index, y=t_collection, col sep=comma] \datatable;
    \addplot[color=color-time-integration, fill, thick, name path=A] table[x=table_index, y=t_integration, col sep=comma] \datatable;
    \end{axis}

\end{tikzpicture}%
\hspace{-0.4em}
    \pgfplotstableread[col sep=comma]{plots/data/new/minimum_results_maxmin_1.csv}\datatable%
\pgfplotsset{every tick label/.append style={font=\scriptsize}}
\begin{tikzpicture}[font=\sffamily, fill opacity = 0.75, opacity = 0.75]
    \begin{axis}[colormap/viridis,
    scale only axis=true,
    title=$CAR^{1}$,
        ybar stacked,
        xticklabels={max,min,max,min,max,min},
         typeset ticklabels with strut,
        xtick=data,
        minor x tick num=1,
        enlarge x limits=0.1,
        legend pos=north west,
        grid style=dashed,
         bar width=0.26cm,
        ymin=0,        
        ymax=95,%
        width=3.5cm,
        height=0.22\textwidth,
        legend style={
      font=\footnotesize,
      cells={anchor=west, align=left},
      legend columns=1, 
      at={(1.05,1)}, %
      anchor=north west,
    },
     draw group line={chargingtype}{A}{\texttt{A}}{-0.5cm}{1pt},
     draw group line={chargingtype}{AB}{\texttt{AB}}{-0.5cm}{1pt},
     draw group line={chargingtype}{ABC}{\texttt{ABC}}{-0.5cm}{1pt}
    ]

    \addplot[color=color-time-forward, fill, thick, name path=A] table[x=table_index, y=t_flowpipe, col sep=comma] \datatable;
    \addplot[color=color-time-refinement,fill,  thick, name path=A] table[x=table_index, y=t_branchings, col sep=comma] \datatable;
    \addplot[color=color-time-processing,fill,  thick, name path=A] table[x=table_index, y=t_collection, col sep=comma] \datatable;
    \addplot[color=color-time-integration, fill, thick, name path=A] table[x=table_index, y=t_integration, col sep=comma] \datatable;
    \legend{{Reachability\\analysis}, {Classification\\ of branchings},{Collection\\ of prophecies}, Integration}
    \end{axis}

\end{tikzpicture}
\vspace*{-2ex}
    \caption[Computation times]{
    Computation times for the rectangular variant %
    of $\car^{0}$ and $\car^{1}$. }%
    \label{fig:new_plots}
\end{figure}

\section{Conclusion}
\label{sec:conclusion}
The proposed approach to compute \emph{minimum reachability probabilities} complements the computation of maximum reachability probabilities~\cite{journal,tase,25StuebbeSYNASC} for RAC.
Together, we estimate the impact of epistemic uncertainty modeled via continuous and discrete nondeterminism in stochastic hybrid systems with random events.
Our approach relies on flowpipe construction and a classification of branchings in the reach tree, which is needed in the minimum setting.
In future work, we plan to consider minimum scheduler synthesis, which introduces even more challenges and requires an adaption of the \emph{backward refinement} from~\cite{journal,tase}.

\subsubsection{Acknowledgement.}
The need to classify branchings for the computation of minimum reachability probabilities has first been mentioned by Annabell Petri \cite{annabell}. 

\subsubsection{Data Availability.} An artifact containing our implementation in \realyst, model files for the car case study and the small examples, and scripts that reproduce our experiments and the results presented in Section~\ref{sec:casestudy} and Appendix~\ref{app:smallexamples} is archived and available at \url{https://doi.org/10.5281/zenodo.18990720}~\cite{artifact}.

\bibliographystyle{splncs04}
\bibliography{references}

\begin{thebibliography}{10}
\providecommand{\url}[1]{\texttt{#1}}
\providecommand{\urlprefix}{URL }
\providecommand{\doi}[1]{https://doi.org/#1}

\bibitem{alur1995AlgorithmicAnalysisHybrid}
Alur, R., Courcoubetis, C.A., Halbwachs, N., Henzinger, T.A., Ho, P., Nicollin,
  X., Olivero, A., Sifakis, J., Yovine, S.: The algorithmic analysis of hybrid
  systems. Theoretical Computer Science  \textbf{138}(1),  3--34 (1995).
  \doi{10.1016/0304-3975(94)00202-T}

\bibitem{ballarini2013TransientAnalysisNetworks}
Ballarini, P., Bertrand, N., Horv{\'a}th, A., Paolieri, M., Vicario, E.:
  Transient analysis of networks of stochastic timed automata using stochastic
  state classes. In: Proc. of the 10th Int. Conf. on Quantitative Evaluation of
  Systems. LNCS, vol.~8054, pp. 355--371. Springer (2013).
  \doi{10.1007/978-3-642-40196-1\_30}

\bibitem{bertrand2014StochasticTimedAutomata}
Bertrand, N., Bouyer, P., Brihaye, T., Menet, Q., Baier, C., Gr{\"o}{\ss}er,
  M., Jurdzinski, M.: Stochastic timed automata. Logical Methods in Computer
  Science  \textbf{10}(4),  1--73 (2014). \doi{10.2168/LMCS-10(4:6)2014}

\bibitem{bohnenkamp2006MODESTCompositionalModeling}
Bohnenkamp, H., D'Argenio, P.R., Hermanns, H., Katoen, J.P.: {MODEST}: {A}
  compositional modeling formalism for hard and softly timed systems. IEEE
  Transactions on Software Engineering  \textbf{32}(10),  812--830 (2006).
  \doi{10.1109/TSE.2006.104}

\bibitem{Bouyer2009}
Bouyer, P., Forejt, V.: Reachability in stochastic timed games. In: Albers, S.,
  Marchetti{-}Spaccamela, A., Matias, Y., Nikoletseas, S.E., Thomas, W. (eds.)
  Automata, Languages and Programming, 36th Int. Colloquium. LNCS, vol.~5556,
  pp. 103--114. Springer (2009). \doi{10.1007/978-3-642-02930-1\_9}

\bibitem{cauchi2019StocHyAutomatedVerification}
Cauchi, N., Abate, A.: {StocHy}: {A}utomated verification and synthesis of
  stochastic processes. In: Proc. of the 25th Int. Conf. on Tools and
  Algorithms for the Construction and Analysis of Systems (TACAS'19). pp.
  247--264. Springer (2019). \doi{10.1145/3302504.3313349}

\bibitem{dargenio2018HierarchySchedulerClasses}
D'Argenio, P.R., Gerhold, M., Hartmanns, A., Sedwards, S.: A {Hierarchy} of
  {Scheduler} {Classes} for {Stochastic} {Automata}. In: Proc. of 21st {Int.}
  {Conference} on {Foundations} of {Software} {Science} and {Computation}
  {Structures} (FOSSACS 2018). {LNCS}, vol. 10803, pp. 384--402. Springer
  (2018). \doi{10.1007/978-3-319-89366-2\_21}

\bibitem{journal}
Delicaris, J., Remke, A., {\'{A}}brah{\'{a}}m, E., Schupp, S., St{\"{u}}bbe,
  J.: Maximizing reachability probabilities in rectangular automata with random
  events. Sci. Comput. Program.  \textbf{240},  103213 (2025).
  \doi{10.1016/J.SCICO.2024.103213}

\bibitem{tase}
Delicaris, J., Schupp, S., {\'{A}}brah{\'{a}}m, E., Remke, A.: Maximizing
  reachability probabilities in rectangular automata with random clocks. In:
  Proc. of the 17th Int. Symp. on Theoretical Aspects of Software Engineering
  (TASE'23). LNCS, vol. 13931, pp. 164--182. Springer (2023).
  \doi{10.1007/978-3-031-35257-7\_10}

\bibitem{realyst}
Delicaris, J., St{\"{u}}bbe, J., Schupp, S., Remke, A.: {RealySt}: {A} {C++}
  tool for optimizing reachability probabilities in stochastic hybrid systems.
  In: Proc. of the 16th EAI Int. Conf. on Performance Evaluation Methodologies
  and Tools. LNCS, vol.~539, pp. 170--182. Springer (2023).
  \doi{10.1007/978-3-031-48885-6\_11}

\bibitem{artifact}
Delicaris, J., Ábrahám, E., Remke, A.: Artifact for: Minimum reachability
  probabilities in rectangular automata with random clocks (Mar 2026).
  \doi{10.5281/zenodo.18990721}

\bibitem{hahn2013CompositionalModellingAnalysis}
Hahn, E.M., Hartmanns, A., Hermanns, H., Katoen, J.P.: A compositional
  modelling and analysis framework for stochastic hybrid systems. Formal
  Methods in System Design  \textbf{43}(2),  191--232 (2013).
  \doi{10.1007/s10703-012-0167-z}

\bibitem{henzinger1998WhatDecidableHybrid}
Henzinger, T.A., Kopke, P.W., Puri, A., Varaiya, P.: What's decidable about
  hybrid automata? Journal of Computer and System Sciences  \textbf{57}(1),
  94--124 (1998). \doi{10.1006/jcss.1998.1581}

\bibitem{koutsoukos2008ComputationalMethodsVerification}
Koutsoukos, X.D., Riley, D.: Computational methods for verification of
  stochastic hybrid systems. IEEE Transactions on Systems, Man, and Cybernetics
  - Part A: Systems and Humans  \textbf{38}(2),  385--396 (2008).
  \doi{10.1109/TSMCA.2007.914777}

\bibitem{kwiatkowska2000VerifyingQuantitativeProperties}
Kwiatkowska, M.Z., Norman, G., Segala, R., Sproston, J.: Verifying quantitative
  properties of continuous probabilistic timed automata. In: Proc. of the 11th
  Int. Conf. on Concurrency Theory. LNCS, vol.~1877, pp. 123--137. Springer
  (2000). \doi{10.1007/3-540-44618-4\_11}

\bibitem{KWIATKOWSKA20071027}
Kwiatkowska, M., Norman, G., Sproston, J., Wang, F.: Symbolic model checking
  for probabilistic timed automata. Information and Computation
  \textbf{205}(7),  1027--1077 (2007).
  \doi{https://doi.org/10.1016/j.ic.2007.01.004}

\bibitem{lepage1978integration}
Lepage, G.P.: A new algorithm for adaptive multidimensional integration.
  Journal of Computational Physics  \textbf{27}(2),  192--203 (1978).
  \doi{10.1016/0021-9991(78)90004-9}

\bibitem{MathisDiss}
Niehage, M.: SMC-Based Learning meets Simulation on a Symbolic State-Space in
  Hybrid Petri Nets. Ph.D. thesis, University of Münster, Münster, Germany
  (2025)

\bibitem{MathisMemocode}
Niehage, M., Hartmanns, A., Remke, A.: Learning optimal decisions for
  stochastic hybrid systems. In: 19th {ACM-IEEE} International Conference on
  Formal Methods and Models for System Design. pp. 44--55. {ACM} (2021).
  \doi{10.1145/3487212.3487339}

\bibitem{annabell}
Petri, A.: Computing Minimal Reachability Probabilities for Rectangular Hybrid
  Automata with Random Clocks through Backwards Partition Refinement. Master's
  thesis, University of Münster, Münster, Germany

\bibitem{Pilch2021Optimizing}
Pilch, C., Schupp, S., Remke, A.: Optimizing reachability probabilities for a
  restricted class of stochastic hybrid automata via flowpipe-construction. In:
  Proc. of the 18th Int. Conference on Quantitative Evaluation of Systems
  (QEST'21). pp. 435--456 (2021). \doi{10.1007/978-3-030-85172-9\_23}

\bibitem{transformationNFM}
Pilch, C., Krause, M., Remke, A., {\'{A}}brah{\'{a}}m, E.: A transformation of
  hybrid petri nets with stochastic firings into a subclass of stochastic
  hybrid automata. In: 12th International Symposium on {NASA} Formal Methods.
  pp. 381--400. LNCS, Springer (2020). \doi{10.1007/978-3-030-55754-6\_23}

\bibitem{prandini2006StochasticApproximationMethod}
Prandini, M., Hu, J.: A stochastic approximation method for reachability
  computations. In: Stochastic Hybrid Systems: Theory and Safety Critical
  Applications, LNCS, vol.~337, pp. 107--139. Springer (2006).
  \doi{10.1007/11587392\_4}

\bibitem{Schupp2017}
Schupp, S., {\'A}brah{\'a}m, E., Makhlouf, I.B., Kowalewski, S.: {HyPro}: {A}
  {C}++ library of state set representations for hybrid systems reachability
  analysis. In: Proc. of NFM'17, vol. 10227, pp. 288--294. Springer (2017).
  \doi{10.1007/978-3-319-57288-8\_20}

\bibitem{carina}
da~Silva, C., Schupp, S., Remke, A.: Optimizing reachability probabilities for
  a restricted class of stochastic hybrid automata via flowpipe construction.
  ACM Trans. Model. Comput. Simul.  \textbf{33}(4),  18:1--18:27 (2023).
  \doi{10.1145/3607197}

\bibitem{soudjani2015FAUST2FormalAbstractions}
Soudjani, S.E.Z., Gevaerts, C., Abate, A.: {FAUST2}: {F}ormal {A}bstractions of
  {U}ncountable-{ST}ate {ST}ochastic processes. In: Proc. of the 21st Int.
  Conf. on Tools and Algorithms for the Construction and Analysis of Systems
  (TACAS'15). LNCS, vol.~9035, pp. 272--286. Springer (2015).
  \doi{10.1007/978-3-662-46681-0\_23}

\bibitem{sproston2000DecidableModelChecking}
Sproston, J.: Decidable model checking of probabilistic hybrid automata. In:
  Proc. of the 6th Int. Symp. on Formal Techniques in Real-Time and
  Fault-Tolerant Systems (FTRTFT'00). LNCS, vol.~1926, pp. 31--45. Springer
  (2000). \doi{10.1007/3-540-45352-0\_5}

\bibitem{sproston2019Verification}
Sproston, J.: Verification and control for probabilistic hybrid automata with
  finite bisimulations. J. Log. Algebraic Methods Program.  \textbf{103},
  46--61 (2019). \doi{10.1016/j.jlamp.2018.11.001}

\bibitem{25StuebbeSYNASC}
Stübbe, J., Remke, A., Ábrahám, E.: Scaling up reachability analysis for
  rectangular automata with random clocks. In: 27th Int. Symposium on Symbolic
  and Numeric Algorithms for Scientific Computing (SYNASC 2025). pp. 21--29
  (2025). \doi{10.1109/SYNASC69064.2025.00011}

\end{thebibliography}

\begin{appendix}
\renewcommand{\thesection}{\Alph{section}}
\FloatBarrier
\section{Appendix}\label{sec:appendix}

\subsection{Reach Tree}
\label{app:reachtree}
\begin{definition}[Reach tree]
For a location $\ell\in\Loc$  with $\Init(\ell)\not=\emptyset$,
the \emph{$(\tmax,\jumpmax)$-bounded reach tree for $\RAC$ from $\ell$ to $\Goal$} is an annotated tree $\reachtree = (N,E)$ with a nonempty finite set of \emph{nodes} $N\subseteq \Naturals\times\Loc\times2^{\Reals^{\dimCont}}$ and a set of \emph{edges} $ E = N \times  \Edge \times N$, 
    satisfying the following:
    \begin{itemize}

\item for each $i_1=(\textit{id}_1,\ell_1,\valSet_1)\in N$ and $i_2=(\textit{id}_2,\ell_2,\valSet_2)\in N\setminus\{i_1\}$ we have $\textit{id}_1\not=\textit{id}_2$, and $(i_1,e,i_2)\in E$ implies $\textit{id}_1<\textit{id}_2$ for all $e\in\Edge$;

\item there is a \emph{root} node $\textit{root}=(\textit{id}_0,\ell_0,\valSet_0)\in N$ with $\ell_0=\ell$, $\valSet_0=\ftc{\ell}(\Init(\ell))$, and $\textit{root}\not=i_2$ for all $(i_1,e,i_2)\in E$; 

\item each $i_2\in N\setminus\{\textit{root}\}$ has a unique $\textit{parent}(i_2)=i_1\in N$ with $(i_1,e,i_2)\in E$ for one $e\in\Edge$; let $\depth(\textit{root})=0$ and $\depth(i)=\depth(\textit{parent}(i))+1$ for all $i\in N\setminus\{\textit{root}\}$, and let $\textit{children}(i_1)=\{i_2\in N\,|\,\exists e\in\Edge.\ (i_1,e,i_2)\in E\}$;
\item for each $i_1=(\textit{id}_1,\ell_1,\valSet_1)\in N$:
(i) if 
either 
$\textit{depth}(i_1)=\jumpmax$ 
or $\ell_1 \in \Goal$
then $\textit{children}(i_1)=\emptyset$,
(ii) otherwise for each $e\in\Edge$:
(a) if $\source(e)=\ell_1$ and $\valSet_2 = \ftc{\target(e)}(\fosr{e}(\valSet_1))\not=\emptyset$ then there exists $(i_1,e,i_2)\in E$ with $i_2=(\textit{id}_2,\allowbreak\target(e),\valSet_2)$ and $e\not=e'$ for all $(i_1,e',i_2')\in E\setminus\{(i_1,e,i_2)\}$, and
(b) otherwise $e\not=e'$ for all $(i_1,e',i_2)\in E$.
\end{itemize} 
\end{definition}

\FloatBarrier

\subsection{Path-maintaining operations}\label{app:modtree}

\begin{definition}[Path-maintaining]
Assume $\reachtree$ is a reach tree and $\modtree_{\reachtree}=(N,E)$ a modified reachtree.
We call an operation, altering the sets $N$ and $E$ of $\modtree_{\reachtree}$, to be \emph{path-maintaining}, if before and after alteration it holds that %
    \begin{enumerate}[(i)]
    \item%
    for each
path $\treePath=i_0,\dots,i_n$ in the modified reach tree $\modtree_{\reachtree}$, 
there exists exactly one matching path $\treePath'=i'_0, \dots, i'_{m}$, $m\leq n$ without stutter-edges in $\reachtree$ s.t.
for each two nodes $i_{k}, i_{k+1}\in N$ in $\treePath$ with $(i_k,e,\Prophecies{},i_{k+1})\in E$, either 
\begin{enumerate}[(a)]
    \item 
    there exist nodes $i'_{k'}, i'_{k'+1}\in N_{\reachtree}$ in $\treePath'$ with $\rtmap{i_k}=i'_{k'}$, $\rtmap{i_{k+1}}=i'_{k'+1}$ and  $(i'_{k'},e,i'_{k'+1})\in E_{\reachtree}$, or 
    \item
    the edge between $i_{k}$ and $ i_{k+1}$ is a \emph{stutter-edge}, i.e., there exists one node $i'_k\in N_{\reachtree}$ and $\treePath'$ with $\rtmap{i_k}=\rtmap{i_{k+1}}=i'_k$ and $e=\tau$, %
\end{enumerate}
    \item%
and for each path $\treePath=i_0,\dots,i_n$ in $\reachtree$, there is at least one matching path $\treePath'=i'_0, \dots, i'_{m}$, $n\leq m$ in $\modtree_{\reachtree}$, potentially with additional stutter-edges, such that
for each two nodes $i_{k}, i_{k+1}\in N_{\reachtree}$ in $\treePath$ with $(i_k,e,i_{k+1})\in E_{\reachtree}$, there exist nodes $i'_{k'}, i'_{k''}\in N$, $k' < k''$ in $\treePath'$ 
with $\rtmap{i'_{k'}}=i_k$, $\rtmap{i'_{k''}}=i_{k+1}$
and either 
\begin{enumerate}[(a)]
    \item 
    $(i'_{k'},e,\Prophecies{},i'_{k'+1})\in E$  for some $\Prophecies{}\subseteq \Prophecies{\RAC}$ and $k''=k'+1$, or
   \item 
   there exists nodes $j_0,\dots,j_l\in N$ with $\rtmap{j_0}=\dots=\rtmap{j_{l'}}=\rtmap{i'_{k'}}=i_k$ and $\rtmap{j_{l'+1}}=\dots=\rtmap{j_l}=\rtmap{i'_{k''}}=i_{k+1}$, $0\leq l'< l$ and $(j_{l'},e,\Prophecies{},j_{l'+1})\in E$ for some $\Prophecies{}\subseteq \Prophecies{\RAC}$. 
\end{enumerate}
\end{enumerate}
\end{definition}

\subsection{Separating Mixed Branchings}\label{app:separating}

Competing random events can be \emph{separated} by an additional layer of nodes, where each node specifies a winning random event. Then, copies of the original child nodes  (and their subtrees) are attached to every node of the separating layer and the prophecies are restricted according to the winning event.

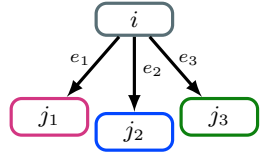
\begin{figure}[tbh]
    \centering
    \begin{tikzpicture}[
	scale=1,
	node distance = 1cm and 0.8cm,
	baseline,
	remember picture,
	n/.style={draw, text width = 0.8cm, text = black, %
		align = center, font= \footnotesize, rounded corners, very thick, execute at begin node=\setlength{\baselineskip}{8pt}%
	},
	tn/.style={draw, text width = 0.2cm, text = black, %
		align = center, font= \footnotesize, circle, very thick, execute at begin node=\setlength{\baselineskip}{8pt}%
	},
	nd/.style={draw=none, text width = 2.5cm, anchor=left, text = black, %
		align = left, font= \footnotesize, rounded corners, very thick, execute at begin node=\setlength{\baselineskip}{8pt}%
	},
	en/.style={draw=none, minimum height=0cm, font = \scriptsize, align = center,outer sep=1mm},%
	ref/.style={draw, circle, minimum width=1.5mm, inner sep=0, fill,samplevaluescolor},%
	k/.style={draw=none, circle, minimum width=0, inner sep=0,outer sep=0, fill,samplevaluescolor},%
	c/.style={draw, fill, black, circle, inner sep=0, outer sep=0, minimum size=1mm},
	l/.style={anchor=west, inner sep=0, font=\footnotesize},
        secondlayerdistance/.style={node distance =0.8cm and 0.6cm},
        firstlayerdistance/.style={node distance =1.4cm and  1.75cm}
	]

    \begin{scope}
        
	\node[n, draw=cloc0](l0) at (0,0)  { $i$ };
	\node[n, draw=cloc1, secondlayerdistance,below left = of l0.south]  (l1) {$j_1$}; %
	\node[n, draw=cloc2, secondlayerdistance, below = of l0.south,yshift=-2mm] (l2) {$j_2$}; %
	\node[n, draw=cloc3, secondlayerdistance, below right= of l0.south] (l3) {$j_3$}; %

\draw[-latex, very thick] (l0) to
    node[near start,en, left,nonstochConflict,yshift=-1mm] {$e_1$}  
	(l1);
    
\draw[-latex, very thick] (l0) to
    node[near start,en, right,nonstochConflict,yshift=-2mm,xshift=-1mm] {$e_2$}  
	(l2);
    
\draw[-latex, very thick] (l0) to
    node[near start,en, right,nonstochConflict,yshift=-1mm] {$e_3$}  
	(l3);
    
    \end{scope}

\end{tikzpicture}
    \caption{
    Mixed branching at node $i$ with three children $j_1,j_2,j_3$, where $j_1$ and $j_3$ are reached by competing random events $r$ and $q$.
    } 
    \label{fig:illustrate-separating}
\end{figure}

\begin{figure*}[t]
\centering
    \begin{subfigure}{0.47\textwidth}
    \centering
        \begin{tikzpicture}[
scale=0.95,
lineshiftup/.style={yshift=0.6pt},
lineshiftdown/.style={yshift=-0.6pt},
lineshiftdowntwice/.style={yshift=-1.8pt},
decoratenodebrace/.style={decorate,decoration={brace,amplitude=4pt,mirror},yshift=0pt, font=\scriptsize},
font= \footnotesize,
prophecyarea/.style={draw=none, fill, fill opacity = 0.3},]

	\useasboundingbox (-0.3,-0.3) rectangle (4.5,4.5);

    \node[fill=white,font=\scriptsize] (x) at (4.3,-0.1) {$\prophecy(r)$};
    \node[fill=white,font=\scriptsize,yshift=-0.8mm] (y) at (-0.1,4.3) {$\prophecy(q)$};
    \draw [thick,black,-latex] (0,0) -- (4,0);
    \draw [thick,black,-latex] (0,0) -- (0,4);
    \foreach \x in {1,2,...,3}  {
        \node[tick] (\x) at (\x,-0.25) {$\x$}; 
        \draw (\x,-0.05)--(\x,+0.05);
    }
    \foreach \y in {1,2,...,3}  {
        \node[tick] (\y) at (-0.25,\y) {$\y$}; 
        \draw (-0.05,\y)--(0.05,\y);
    }

\draw[prophecyarea, cloc3] (0,0) -- (3,3) -- (3.75,3) -- (3.75,0) -- cycle;
\draw[prophecyarea, cloc2] (1,1) -- (3.75,1) -- (3.75,3.75) -- (1,3.75) -- cycle;
\draw[prophecyarea, cloc1] (0,0) -- (3,3) -- (3,3.75) -- (0,3.75) -- cycle;

\draw[dashed, thick] (0,0)--(4,4);
\draw[dotted, thick] (1,1)--(1,3.8);
\draw[dotted, thick] (1,1)--(3.8,1);
\draw[dotted, thick] (3,3)--(3,3.8);
\draw[dotted, thick] (3,3)--(3.8,3);

\node[draw=none, font=\scriptsize,cloc3,xshift=1mm] at (4.2,0.75) {$\Prophecies{}^3$};
\node[draw=none, font=\scriptsize,cloc2,xshift=0mm] at (4.2,3.4) {$\Prophecies{}^2$};
\node[draw=none, font=\scriptsize,cloc1] at (0.75,4.2) {$\Prophecies{}^1$};

\begin{scope}[yshift=4cm]
    \draw [very thick,cloc1] (0.05,0) -- (3,0);
\end{scope}
\begin{scope}[yshift=3.9cm]
    \draw [very thick,cloc2] (1,0) -- (3.75,0);
\end{scope}

\begin{scope}[xshift=4cm]
    \draw [very thick,cloc3] (0,0.05) -- (0,3);
\end{scope}
\begin{scope}[xshift=3.9cm]
    \draw [very thick,cloc2] (0,1) -- (0,3.75);
\end{scope}

\node[draw=none, font=\scriptsize] (j2) at (3.5,3.2) {$j_2$};
\node[draw=none, font=\scriptsize,anchor=east] at ($(j2.east)+(0,-1.2)$) {$j_2,j_3$};
\node[draw=none, font=\scriptsize,anchor=east] at ($(j2.east)+(0,-2.7)$) {$j_3$};

\node[draw=none, font=\scriptsize] at (0.5,3.5) {$j_1$};
\node[draw=none, font=\scriptsize] at (2,3.5) {$j_1,j_2$};
\node[draw=none, font=\scriptsize] at (3.2,3.5) {$j_2$};

\end{tikzpicture}
    \caption{Separating $\Prophecies{}^1$, $\Prophecies{}^2$, $\Prophecies{}^3$ w.r.t. $r$ and $q$.}
    \label{subfig:prophecy-separating}
    \end{subfigure}%
    \hfil
    \begin{subfigure}{0.45\textwidth}
    \centering
        \begin{tikzpicture}[
scale=1,
lineshiftup/.style={yshift=0.6pt},
lineshiftdown/.style={yshift=-0.6pt},
lineshiftdowntwice/.style={yshift=-1.8pt},
decoratenodebrace/.style={decorate,decoration={brace,amplitude=4pt,mirror},yshift=0pt, font=\scriptsize},
font= \footnotesize]

	\useasboundingbox (-0.2,-1.25) rectangle (4.8,1);

\begin{scope}[yshift=-0.3cm]
    \draw [thick,black,-latex] (0,0) -- (4,0);
    \node[fill=white] (x) at (4.5,0) {$\prophecy(r)$};
    \foreach \x in {0,1,2,...,3}  {
        \node[tick] (\x) at (\x,-0.25) {$\x$}; 
        \draw (\x,-0.05)--(\x,+0.05);
    }
    \foreach \x in {1,3}  {
        \draw[dotted,thick] (\x,0.12)--(\x,+0.9);
    }
\end{scope}

\begin{scope}[yshift=0.3cm]
   \node[draw=none, font=\scriptsize,cloc1] at (4.5,0) {$\Prophecies{}^1$};
    \draw [Bracket-,very thick,cloc1] (0,0) -- (1,0);
    \draw [-Bracket,very thick,cloc1,] (1,0) -- (3,0);
\end{scope}
\begin{scope}%
    \node[draw=none, font=\scriptsize,cloc2] at (4.5,0) {$\Prophecies{}^2$};
    \draw [Bracket-,very thick,cloc2,yshift=0mm, ,] (1,0) -- (3,0);
    \draw [very thick,cloc2,yshift=0mm] (3,0) -- (4,0);
\end{scope}

\draw [decoratenodebrace] (0.05,-0.7) -- node [below, yshift=-0.1cm, pos=.5,font=\footnotesize] {$j_1$} (0.95,-0.7);

\draw [decoratenodebrace] (1.05,-0.7) -- node [below, yshift=-0.1cm, pos=.5,font=\footnotesize] {$j_1,j_2$} (2.95,-0.7);

\begin{scope}
\clip (3,-2) rectangle(4,2);
\draw [decoratenodebrace] (3.05,-0.7) to  node [below, yshift=-0.1cm, pos=.5,font=\footnotesize] {$j_2$} (4.2,-0.7);
\end{scope}

\end{tikzpicture}
\begin{tikzpicture}[
scale=1,
lineshiftup/.style={yshift=0.6pt},
lineshiftdown/.style={yshift=-0.6pt},
lineshiftdowntwice/.style={yshift=-1.8pt},
decoratenodebrace/.style={decorate,decoration={brace,amplitude=4pt,mirror},yshift=0pt, font=\scriptsize},
font= \footnotesize]

	\useasboundingbox (-0.2,-1.25) rectangle (4.8,1);

\begin{scope}[yshift=-0.3cm]
    \draw [thick,black,-latex] (0,0) -- (4,0);
    \node[fill=white] (x) at (4.5,0) {$\prophecy(q)$};
    \foreach \x in {0,1,2,...,3}  {
        \node[tick] (\x) at (\x,-0.25) {$\x$}; 
        \draw (\x,-0.05)--(\x,+0.05);
    }
    \foreach \x in {1,3}  {
        \draw[dotted,thick] (\x,0.12)--(\x,0.9);
    }
\end{scope}

\begin{scope}[yshift=0.3cm]
    \node[draw=none, font=\scriptsize,cloc2] at (4.5,0) {$\Prophecies{}^2$};
    \draw [Bracket-,very thick,cloc2,yshift=0mm, ] (1,0) -- (3,0);
    \draw [very thick,cloc2] (3,0) -- (4,0);
\end{scope}
\begin{scope}
   \node[draw=none, font=\scriptsize,cloc3] at (4.5,0) {$\Prophecies{}^3$};
    \draw [Bracket-,very thick,cloc3] (0,0) -- (1,0);
    \draw [-Bracket,very thick,cloc3,] (1,0) -- (3,0);
\end{scope}

\draw [decoratenodebrace] (0.05,-0.7) -- node [below, yshift=-0.1cm, pos=.5,font=\footnotesize] {$j_3$} (0.95,-0.7);

\draw [decoratenodebrace] (1.05,-0.7) -- node [below, yshift=-0.1cm, pos=.5,font=\footnotesize] {$j_2,j_3$} (2.95,-0.7);

\begin{scope}
\clip (3,-2) rectangle(4,2);
\draw [decoratenodebrace] (3.05,-0.7) to  node [below, yshift=-0.1cm, pos=.5,font=\footnotesize] {$j_2$} (4.2,-0.7);
\end{scope}

\end{tikzpicture}
    \caption{Splitting $\Prophecies{}^1$, $\Prophecies{}^2$ and $\Prophecies{}^2$, $\Prophecies{}^3$.}
    \label{subfig:prophecy-splitting}
    \end{subfigure}
      \begin{subfigure}{0.6\textwidth}
    \centering
        \begin{tikzpicture}[
	scale=1,
	node distance = 1cm and 0.8cm,
	baseline,
	remember picture,
	n/.style={draw, text width = 0.8cm, text = black, %
		align = center, font= \footnotesize, rounded corners, very thick, execute at begin node=\setlength{\baselineskip}{8pt}%
	},
	tn/.style={draw, text width = 0.2cm, text = black, %
		align = center, font= \footnotesize, circle, very thick, execute at begin node=\setlength{\baselineskip}{8pt}%
	},
	nd/.style={draw=none, text width = 2.5cm, anchor=left, text = black, %
		align = left, font= \footnotesize, rounded corners, very thick, execute at begin node=\setlength{\baselineskip}{8pt}%
	},
	en/.style={draw=none, minimum height=0cm, font = \scriptsize, align = center,outer sep=1mm},%
	ref/.style={draw, circle, minimum width=1.5mm, inner sep=0, fill,samplevaluescolor},%
	k/.style={draw=none, circle, minimum width=0, inner sep=0,outer sep=0, fill,samplevaluescolor},%
	c/.style={draw, fill, black, circle, inner sep=0, outer sep=0, minimum size=1mm},
	l/.style={anchor=west, inner sep=0, font=\footnotesize},
        secondlayerdistance/.style={node distance =0.8cm and 0.6cm},
        firstlayerdistance/.style={node distance =1.4cm and  1.35cm}
	]

    \begin{scope}[xshift=6cm]
        
	\node[n, draw=cloc0](l0) at (0,0) { $i$ };
	\node[n, draw=cloc0,firstlayerdistance,below left = of l0.south](l0r) at (0,0) { $i_r$ };
	\node[n, draw=cloc0,firstlayerdistance,below right = of l0.south](l0q) at (0,0){ $i_q$ };
	\node[n, draw=cloc1, secondlayerdistance,below left = of l0r.south]  (l1r) {${j_1}_r$}; 
	\node[n, draw=cloc2, secondlayerdistance, below = of l0r.south,yshift=-2mm] (l2r) {${j_2}_r$}; 
	\node[n, draw=cloc3, secondlayerdistance, below right= of l0r.south] (l3r) {${j_3}_r$}; 
	\node[n, draw=cloc1, secondlayerdistance,below left = of l0q.south]  (l1q) {${j_1}_q$}; 
	\node[n, draw=cloc2, secondlayerdistance, below = of l0q.south,yshift=-2mm] (l2q) {${j_2}_q$}; 
	\node[n, draw=cloc3, secondlayerdistance, below right= of l0q.south] (l3q) {${j_3}_q$};

    \draw[-latex,very thick] (l0) to (l0r);
    \draw[-latex,very thick] (l0) to (l0q);

    \draw[-latex, very thick, dashed] (l0) to
    node[near start,en, left,stochConflict,yshift=+0mm] {$\prophecy(r) \leq \prophecy(q)$}  
	(l0r);
    
    \draw[-latex, very thick, dashed] (l0) to
    node[near start,en, right,stochConflict,yshift=+0mm] {$\prophecy(q) \leq \prophecy(r)$}  
	(l0q);

    \draw[-latex, very thick] (l0r) to
    node[near start,en, left,nonstochConflict,yshift=-1mm] {$e_1$}  
	(l1r);

\draw[-latex, very thick] (l0r) to
    node[near start,en, right,nonstochConflict,yshift=-2mm,xshift=-1mm] {$e_2$}  
	(l2r);
    
\draw[-latex, very thick, color=black] (l0r) to
    node[near start,en, right,nonstochConflict,yshift=-1mm] {$e_3$}  
	(l3r);

    \draw[-latex, very thick, color=black] (l0q) to
    node[near start,en, left,nonstochConflict,yshift=-1mm] {$e_1$}  
	(l1q);

\draw[-latex, very thick] (l0q) to
    node[near start,en, right,nonstochConflict,yshift=-2mm,xshift=-1mm] {$e_2$}  
	(l2q);

\draw[-latex, very thick] (l0q) to
    node[near start,en, right,nonstochConflict,yshift=-1mm] {$e_3$}  
	(l3q);

    \end{scope}

\end{tikzpicture} 
        \caption{Modified reach tree after separating.}\label{subfig:separating-tree}
    \end{subfigure}%
    \hfil
    \caption{Separating and splitting of enabling prophecies.}
 \label{fig:enabling-prophecies}
\end{figure*}
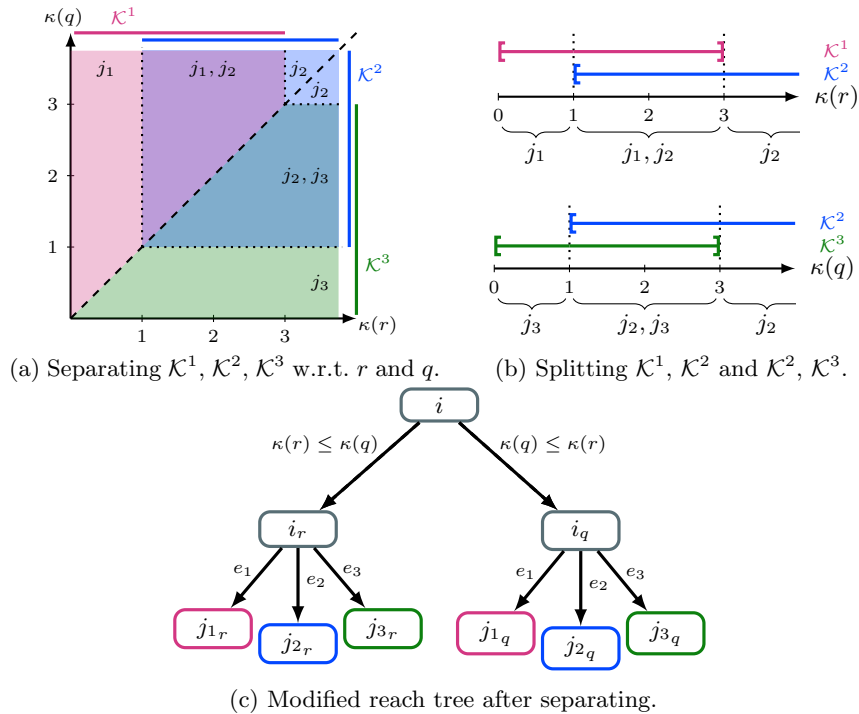%

\noindent Figure~\ref{fig:illustrate-separating} illustrates a mixed branching, where the two random events $r$ and $q$ ($\Event(e_1)=r$, $\Event(e_3)\allowbreak=q$) are enabled in the location corresponding to node $i$.
Figure~\ref{subfig:separating-tree} illustrates the modified reach tree after separation.
Specifiying the winning random event (either $r$ or $q$) results in nodes $i_r$ and $i_q$. 
A copy of the original child nodes and their subtrees is created and added to each node of the separating layer. %
In Figure~\ref{subfig:prophecy-separating}, the annotated enabling prophecies for all three children of $i$,  $\Prophecies{}^1$ (pink), $\Prophecies{}^2$ (blue) and $\Prophecies{}^3$ (green), are illustrated.
The prophecies and hence random events $r$ and $q$ are \emph{separated} along the diagonal (dashed) line.
In the top left area, $r$ is the winning event: those are all prophecies restricted to $\prophecy(r)\leq \prophecy(q)$, annotated to the edges in the subtree from $i_r$ (analogously for $q$ in the bottom right area).
\begin{definition}[Separating]
    Given a modified reach tree $\modtree=(N,E)$ 
    and a mixed branching at node $i=((id_{\reachtree},id_{\modtree}), \ell, \valSet)\in N$ with 
    competing random events $L=\{r\in \LabRandom \mid (i,e,\Prophecies{},j) \in E  \land \Event(e)=r \land j \in \validChildren(i)\}\subseteq\LabRandom$
    with $|L|> 1$.
    We \emph{separate} random events $r\in L$ by modifying sets $N$ an $E$ for each $r\in L$:
        \begin{enumerate}[(i)]
            \item add  node $i_{r}=((id_{\reachtree},id'_{\modtree}), \ell, \valSet)$ with $id_{\modtree}'=\maxid{id_{\reachtree}}$,
            \item add  edge $(i,\tau,\Prophecies{}',i_{r})$ with $\Prophecies{}'= \{\prophecy \in \Prophecies{}^i \mid \forall r' \in L\setminus \{r\}. \ \prophecy(r) \leq \prophecy(r')  \}$  with  $(i',e,\Prophecies{}^i,i)\in E$ for some  $i'\in N$, $e\in \Edge\cup\{\tau\}$ or $\Prophecies{}^i=\Prophecies{\RAC}$ if $i$ is root,
            \item for each (original) child node $j\in \validChildren(i)$ ($j\not = i_r$ for any $r\in L$) 
            add a valid copy of the subtree with root $j$,
            starting from $j_r$, and restrict to prophecies $\Prophecies{}=\{\prophecy \in \Prophecies{\RAC} \mid \forall r' \in L\setminus \{r\}. \ \prophecy(r) \leq \prophecy(r')  \} $,
            \item  add  edge $(i_{r},e^j,\Prophecies{}',j_r) $ for each node $j_r$  with $(i,e^j,\Prophecies{}^j,j)\in E$ and 
            $\Prophecies{}'= \{\prophecy\in \Prophecies{}^j\ |\ \forall r' \in L\setminus \{r\}. \ \prophecy(r) \leq \prophecy(r')  \}$, %
        \end{enumerate}
        and for all $j\in \validChildren(i)$  with $j\not = i_r$ for any $r\in L$, remove $(i,e,\Prophecies{},j)$ and $j$.
        
\end{definition}
Note that restricting the sets of prophecies annotated in a modified reach tree may result in null sets, e.g., for edges that  model the expiration of a random event that is not the designated winner in a specific subtree. 

Figure~\ref{subfig:prophecy-splitting} illustrates  the concept of splitting  enabling prophecies for nodes $i_r$ and $i_q$ from Figure~\ref{subfig:separating-tree}, where the upper and lower part of Figure~\ref{subfig:prophecy-splitting} correspond to the results of restricting the prophecies in the separation step.
In this example, both sets of prophecies $\prophecy{(r)}$  and $\prophecy{(q)}$ are  split into three sets of prophecies, such that the resulting sets enable the same set of jumps, i.e., $\{j_1\},\{j_1,j_2\},\{j_2\}$ and $\{j_3\},\{j_2,j_3\},\{j_2\}$, respectively. These sets of prophecies are then obtained by splitting prophecies along the dotted lines that are orthogonal to the respective axis in Figure~\ref{subfig:prophecy-separating}. For $\prophecy(r)$ this results, e.g., in
$\Prophecies{}^{\{j_1,j_2\}}=\{\prophecy\in \Prophecies{}^1 \cap \Prophecies{}^2 \mid \prophecy(r)\in [1,3] \land \prophecy(r)\leq \prophecy(q)\}$.

Due to the separation of $r$ and $q$, the splitting step for both $i_r$ and $i_q$ results in only three children and sets of prophecies, in contrast to a direct splitting that would result in five children, providing clarity in the modified reach tree. %

\subsection{Computational Complexity}\label{app:complexity}

Computational complexity for the approach to compute minimum reachability probabilities consists of multiple parts:
\begin{enumerate}[(i)]
    \item Reachability analysis is exponential in the jump depth $n$, resulting in the reach tree with $\mathcal{O}(k^n)$ nodes for at most $k$ jumps from any location and a tree depth of $n$.
    \item Each node in the reach tree induces a branching of at most $k$ children that potentially needs to be split. This modification introduces at most $2^k$ new children, all equipped with a copy of the subtree starting from our node (see Figure~\ref{fig:illustrate-splitting}). For each of the $n$ layers of the tree, this results in at most $k\cdot2^k$ nodes. Hence, performing this operation for every node in the original reach tree results in a fully modified reach tree of size $\mathcal{O}((k\cdot2^k) ^ n)$. Note that exponentiality w.r.t. $n$ stems from the need to copy subtrees with increasing size.
    \item To compute the enforcing prophecies, we now traverse the fully modified reach tree and perform union or intersection according to Definition~\ref{def:enforcing}. Due to the representation of non-convex polytopes as sets of convex polytopes, this is exponential in the number of children.
\end{enumerate}

\FloatBarrier
\subsection{Evaluation on Small Examples}\label{app:smallexamples}

\begin{figure}[p]
    \centering
    \begin{subfigure}{0.48\linewidth}
    \centering
        \begin{tikzpicture}[
baseline,remember picture,
node distance = -0.5cm and 1.4cm,
n/.style={draw, text width = 0.9cm, %
align = center, font= \footnotesize, rounded corners, very thick, execute at begin node=\setlength{\baselineskip}{8pt}%
},
en/.style={draw=none, minimum height=0cm, font = \scriptsize, align = center},%
c/.style={draw, fill, black, circle, inner sep=0, outer sep=0, minimum size=1mm},
l/.style={anchor=west, inner sep=0, font=\footnotesize},
init/.style={en, 
text width=1.2cm,
anchor=south west,  
inner xsep=0pt},
smallnode/.style={draw, text width = 0.95cm, %
    align = center, font= \footnotesize, rounded corners, very thick, execute at begin node=\setlength{\baselineskip}{8pt}%
    }
]

\node[smallnode](l0) at (-1.8,0) {
    $\ell_0$\\\flowsepspace 
    $\dot{x}=1$\\\flowsepspace 
    $x\leq 4$
};

\node[smallnode](l1) at (0.2,0.45) {
    $\ell_1$\\\flowsepspace 
    $\dot{y}=2$\\\flowsepspace     
    $y\leq 6$
};

\node[smallnode](l2) at (0.2,-0.45) {$\ell_2$};

\node[smallnode](l3) at (2.2,.75) {
    $\ell_3$
};

\node[smallnode](l4) at (2.2,0.15) {
    $\ell_4$
};

\coordinate (init) at (-2.7,0);

\draw[-latex, very thick] (init) to node[en, above] {} (l0);
\draw[-latex, very thick] (l0) to node[en, above] {$r$} (l1);
\draw[-latex, very thick] (l0) to node[en, below, yshift=-1mm] {$x \geq 4$} (l2);
\draw[-latex, very thick] (l1) to node[en, above, yshift=1mm] {$y \geq 6$} (l3);
\draw[-latex, very thick] (l1) to node[en, below, yshift=-1mm] {$y \geq 6$} (l4);

\end{tikzpicture}
        \caption{$\RAC_A$.}
        \label{subfig:caseA}
    \end{subfigure}%
    \hfill    
    \begin{subfigure}{0.48\linewidth}
    \centering
        \begin{tikzpicture}[
baseline,remember picture,
node distance = -0.5cm and 1.4cm,
n/.style={draw, text width = 0.9cm, %
align = center, font= \footnotesize, rounded corners, very thick, execute at begin node=\setlength{\baselineskip}{8pt}%
},
en/.style={draw=none, minimum height=0cm, font = \scriptsize, align = center},%
c/.style={draw, fill, black, circle, inner sep=0, outer sep=0, minimum size=1mm},
l/.style={anchor=west, inner sep=0, font=\footnotesize},
init/.style={en, 
text width=1.2cm,
anchor=south west,  
inner xsep=0pt},
smallnode/.style={draw, text width = 0.95cm, %
    align = center, font= \footnotesize, rounded corners, very thick, execute at begin node=\setlength{\baselineskip}{8pt}%
    }
]

\node[smallnode](l0) at (-0.05,0) {
    $\ell_0$\\\flowsepspace 
    $\dot{x}=1$
};

\node[smallnode,text width=1.25cm](l1) at (1.8,0) {
    $\ell_1$\\\flowsepspace 
    $\dot{x}\!\in\![1,3]$
};

\node[smallnode](l2) at (3.95,0) {$\ell_2$};

\coordinate (init) at (-0.95,0);

\draw[-latex, very thick] (init) to node[en, above] {} (l0);
\draw[-latex, very thick] (l0) to node[en, above] {$r$} (l1);
\draw[-latex, very thick] (l1) to node[en, below] {$x\leq 5$} (l2);

\end{tikzpicture}
        \caption{$\RAC_B$.}
        \label{subfig:caseC}
    \end{subfigure}%
    
    \begin{subfigure}{0.48\linewidth}
    \centering
        \begin{tikzpicture}[
baseline,remember picture,
node distance = -0.5cm and 1.4cm,
n/.style={draw, text width = 0.9cm, %
align = center, font= \footnotesize, rounded corners, very thick, execute at begin node=\setlength{\baselineskip}{8pt}%
},
en/.style={draw=none, minimum height=0cm, font = \scriptsize, align = center},%
c/.style={draw, fill, black, circle, inner sep=0, outer sep=0, minimum size=1mm},
l/.style={anchor=west, inner sep=0, font=\footnotesize},
init/.style={en, 
text width=1.2cm,
anchor=south west,  
inner xsep=0pt},
smallnode/.style={draw, text width = 0.95cm, %
    align = center, font= \footnotesize, rounded corners, very thick, execute at begin node=\setlength{\baselineskip}{8pt}%
    }
]

\node[smallnode,text width = 1.25cm,](l0) at (-1.95,0) {
    $\ell_0$\\\flowsepspace 
    $\dot{x}\!\in\![1,2]$\\\flowsepspace 
    $x\leq 4$
};

\node[smallnode](l1) at (0.2,0.45) {
    $\ell_1$\\\flowsepspace 
    $\dot{y}=2$\\\flowsepspace     
    $y\leq 6$
};

\node[smallnode](l2) at (0.2,-0.45) {
    $\ell_2$
};

\node[smallnode](l3) at (2.2,0.75) {
    $\ell_3$
};

\node[smallnode](l4) at (2.2,0.15) {
    $\ell_4$
};

\coordinate (init) at (-3,0);

\draw[-latex, very thick] (init) to node[en, above] {} (l0);
\draw[-latex, very thick] (l0) to node[en, above] {$r$} (l1);
\draw[-latex, very thick] (l0) to node[en, below, yshift=-1mm] {$x \geq 2$} (l2);
\draw[-latex, very thick] (l1) to node[en, above, yshift=1mm] {$x\leq2$\\$y \geq 6$} (l3);
\draw[-latex, very thick] (l1) to node[en, below, yshift=-1mm] {$x\geq 2$\\$y \geq 6$} (l4);

\end{tikzpicture}
        \caption{$\RAC_C$.}
        \label{subfig:caseB}
    \end{subfigure}%
    \hfill
    \begin{subfigure}{0.48\linewidth}
    \centering
        \begin{tikzpicture}[
baseline,remember picture,
node distance = -0.5cm and 1.4cm,
n/.style={draw, text width = 1.2cm, %
align = center, font= \footnotesize, rounded corners, very thick, execute at begin node=\setlength{\baselineskip}{8pt}%
},
en/.style={draw=none, minimum height=0cm, font = \scriptsize, align = center},%
c/.style={draw, fill, black, circle, inner sep=0, outer sep=0, minimum size=1mm},
l/.style={anchor=west, inner sep=0, font=\footnotesize},
init/.style={en, 
text width=1.2cm,
anchor=south west,  
inner xsep=0pt},
smallnode/.style={draw, text width = 0.95cm, %
    align = center, font= \footnotesize, rounded corners, very thick, execute at begin node=\setlength{\baselineskip}{8pt}%
    }
]

\node[smallnode](l0) at (-1.8,0) {
    $\ell_0$
};

\node[smallnode](l1) at (0,0.3) {
    $\ell_1$
};

\node[smallnode](l2) at (0,-0.3) {
    $\ell_2$
};

\coordinate (init) at (-2.7,0);

\draw[-latex, very thick] (init) to node[en, above] {} (l0);
\draw[-latex, very thick] (l0) to node[en, above] {$r$} (l1);
\draw[-latex, very thick] (l0) to node[en, below] {$q$} (l2);

\end{tikzpicture}
        \caption{$\RAC_D$.}
        \label{subfig:caseD}
    \end{subfigure}%
    \caption{Reduced illustrations for RAC $\RAC_A$, $\RAC_B$, $\RAC_C$ and $\RAC_D$, which omit $\Init(\ell_0)=0^d$, and the flow of $x\in\VarCont$ if $\Flow(\ell)\proj{x}=0$. %
    }
 \label{fig:example-automata}
\end{figure}%

We provide minimum and maximum reachability probabilities for the  RAC in Figure~\ref{fig:example-automata} with different branchings:
(i) $\RAC_A$ has a stochastic and a nondeterministic branching, 
(ii) $\RAC_B$ has a stochastic and a mixed branching (one can stay forever in $\ell_1$),
(iii) $\RAC_C$ has a mixed and a hidden mixed branching (the choice in $\ell_1$ depends on $x$ and hence on $\prophecy(r)$), 
and (iv) $\RAC_D$ has a stochastic branching.

We present results for different goal locations, with random event $r$ (and $q$ in $\RAC_D$) either (i) uniformly or (ii) normally distributed. %
Table~\ref{tab:results} provides minimum reachability probabilities, their computation time $\comptime$ and the statistical error $\estat$ resulting from integration for both distributions. %
For comparison, we provide maximum reachability probabilities (c.f.~\cite{journal}) %
as well as the sets of  prophecies for both the minimum and the maximum scheduler ($\PropheciesSchedulerMin{}$ and $\PropheciesSchedulerMax{}$).
To yield maximum reachability probabilities, the set of prophecies $\PropheciesSchedulerMax{}$ allowing a maximal scheduler to reach the goal can also be computed via a fully modified reach tree by taking the union for stochastic \emph{and} nondeterministic branchings.
For a thorough explanation of the approach that computes maximum reachability probabilities (as implemented in \realyst) and performs scheduler synthesis via forward and backward analysis, we refer to~\cite{journal}.

Integration is invoked with \samples{1e5} samples for  $\RAC_A, \RAC_B$ and $\RAC_C$ and \samples{1e6} for $\RAC_D$, and results are computed using an AMD Ryzen $9$ $5900$X with 12$\times$\SI{3.70}{\giga\hertz} and \SI{32}{\giga\byte} RAM.

Due to the size of the examples, results for the uniform distribution can  be validated by hand. 
For comparison, we provide these manual computations below. %
Clearly, for more complex models this is not feasible. %

\begin{table*}[p]
    \centering
    \scriptsize
    \caption{
      Reachability probabilities $p_{\RAC}^\textsl{min}$ and $p_{\RAC}^\textsl{max}$ for RAC $\RAC\in \{\RAC_A,\RAC_B,\RAC_C,\RAC_D\}$,  goal locations $\Goal$, $\tmax=10$ and $\jumpmax=2$ for (i) $r,q\sim\mathcal{U}(0,10)$  (ii) $r,q\sim\mathcal{N}(3,2)$; with enforcing prophecies 
        $\PropheciesSchedulerMin{}$ and allowing prophecies $\PropheciesSchedulerMax{}$
        with 
      *: $\prophecy(r)\leq \prophecy(q)$,  **: $\prophecy(r)\geq \prophecy(q)$. %
      }
    \label{tab:results}
    \newcolumntype{Y}{>{\centering\arraybackslash}X}
    \renewcommand{\arraystretch}{1}
    \begin{tabularx}{\linewidth}{lcYYYYYYYY}
        \toprule
          \  &\  &    \multicolumn{2}{c}{$\RAC_A$} & $\RAC_B$ &  \multicolumn{3}{c}{$\RAC_C$}   & \multicolumn{2}{c}{$\RAC_D$}  \\
      \cmidrule(lr){3-4}\cmidrule(lr){5-5}\cmidrule(lr){6-8}\cmidrule(lr){9-10}
          \multicolumn{2}{c}{$\Goal$}  
          & $\ell_2$ & $\ell_4$  
          & $\ell_2$ 
          & $\ell_2$ & $\ell_3$& $\ell_4$  
          &  $\ell_1$ & $\ell_2$  \\
          \midrule
          \multirow{4}{*}{(i)} 
            & $p_{\RAC}^\textsl{min}$
            & \probabminuni{0.5999999999999467} & \probabminuni{0.0} 
            & \probabminuni{0.0}  
            & \probabminuni{0.5999999999999467} & \probabminuni{0.09999999999999393}& 
            \probabminuni{0.0} 
            & \probabminuni{0.5000309434004911}& \probabminuni{0.4999690565995044} 
            \\ 
            & $\comptime$
             & \rt{0.037835}& \rt{0.0331805}  
            & \rt{0.0186548}
            & \rt{0.0464783} & \rt{0.0465056}& \rt{0.0409716} 
            & \rt{0.0800123} & \rt{0.0721609}
            \\ 
            & $\estat$
            & \errortable{0}  &  -- 
            & --
            & \errortable{0} & \errortable{0} & --
            & \errortable{3.432005731600063e-05} & \errortable{3.432005731600063e-05}
            \\\cmidrule{2-10} 
            & $p_{\RAC}^\textsl{max}$
            & \probabminuni{0.5999999999999467}  
            & \probabminuni{0.3999999999999757} 
            & \probabminuni{0.4999999999999531} 
            & \probabminuni{0.9000000000001088} & \probabminuni{0.1999999999999879} & \probabminuni{0.2999999999999733} 
            & \probabminuni{0.5000309434004911}& \probabminuni{0.4999690565995044} 
            \\ 
            & $\comptime$
             & \rt{0.0315244}& \rt{0.0342866}  
            & \rt{0.0186339}
            & \rt{0.0353247} & \rt{0.0350993}& \rt{0.0322195} 
            & \rt{0.0784517} & \rt{0.0670379}
            \\ 
            & $\estat$
            & \errortable{0}  &  \errortable{0} 
            & \errortable{0}
            & \errortable{0} & \errortable{0} &\errortable{0}
            & \errortable{3.432005731600063e-05} & \errortable{3.432005731600063e-05}
            \\
            \midrule
          \multirow{4}{*}{(ii)} 
            & $p_{\RAC}^\textsl{min}$
            & \probabmin{0.3087702142150796} & \probabmin{0.0}  
            & \probabmin{0.0} 
            & \probabmin{0.3087702142150796} & \probabmin{0.135905120535349} & \probabmin{0.0} 
            & \probabmin{0.4999950203148315}& \probabmin{0.4999930890010845}  
            \\
            & $\comptime$
            & \rt{0.0439504} &  \rt{0.0337623} 
            & \rt{0.0179435}
            & \rt{0.0475123} &  \rt{0.0506088} & \rt{0.0432539}
            & \rt{0.094834} & \rt{0.0924368}
            \\ 
            & $\estat$
            & \errortable{1.266747991869975e-07} & -- 
            & --
            & \errortable{1.266747991869975e-07} & \errortable{1.527259660067558e-09} & --
            & \errortable{3.78524156188754e-05} & \errortable{3.532816695348094e-05}
            \\\cmidrule{2-10} 
            & $p_{\RAC}^\textsl{max}$
            & \probabmin{0.3087702142150796} & \probabmin{0.6912297835956611}  
            & \probabmin{0.8413129610391057} 
            & \probabmin{0.8640650268222011} & \probabmin{0.3023278646503927} & \probabmin{0.5553246808561304} 
            & \probabmin{0.4999950203148315}& \probabmin{0.4999930890010845}  
            \\
            & $\comptime$
            & \rt{0.0370975} &  \rt{0.0341294} 
            & \rt{0.0176783}
            & \rt{0.0418707} &  \rt{0.0328491} & \rt{0.0331428}
            & \rt{0.0942871} & \rt{0.0874447}
            \\ 
            & $\estat$
            & \errortable{1.715499066358638e-07} & \errortable{1.813785931987538e-08} 
            & \errortable{6.434862909296645e-08}
            & \errortable{5.004011661684443e-05} & \errortable{9.454021339067047e-09} & \errortable{1.389009644379703e-09}
            & \errortable{3.78524156188754e-05} & \errortable{3.532816695348094e-05}
            \\
        \midrule
            \multicolumn{2}{c}{$\PropheciesSchedulerMin{} $}
            & $\geq 4$ & $\emptyset$
            & $\emptyset$
            & $\geq 4$ & $\leq 1$& $\emptyset$ 
            & * & **
            \\ 
            \multicolumn{2}{c}{$\PropheciesSchedulerMax{} $}
            & $\geq 4$   & $\leq 4$ 
            & $\leq 5$
            & $\geq 1$ & $\leq 2$ & $[1,4]$
            & * & **
            \\ 
          \bottomrule
    \end{tabularx}
\end{table*}

Table~\ref{tab:results}  shows that our implementation is fast and highly accurate. 
As expected, the computed minimum reachability probability is always smaller or equal to the maximum, and $\PropheciesSchedulerMin{}\subseteq\PropheciesSchedulerMax{}$. %
For $\RAC_A$, we see that the minimum and maximum probabilities to reach $\ell_2$ are equal, as this choice  is purely stochastic. In contrast,
the minimum reachability probability for $\ell_4$ is zero, as a scheduler can avoid that location.
In $\RAC_B$, $\ell_2$ can be avoided by staying  in $\ell_1$ forever, 
resulting in $p_{\RAC_B}^\textsl{min}=0$ for both distributions. 
The maximum probability for (ii) is larger, as more probability mass is attributed to $\PropheciesSchedulerMax{}$.
In $\RAC_C$, 
considering a prophecy $\prophecy(r)\in[1,4]$, a prophetic scheduler 
in $\ell_0$ can either 
choose a small rate for $x$ to stay until $\valCont_r=\prophecy(r)$, and then leave to $\ell_1$, or 
choose a large rate to reach $\valCont_x\geq2$ quickly to leave to $\ell_2$. 
While minimum and maximum reachability probabilities for $\ell_2$ are both larger for (i), where $r$ is uniformly distributed, the range between them is larger for (ii), where $r$ is normally distributed.
This is due to the mean of the normal distribution not being equal to the expected value of the uniform distribution. 
Also, a prophetic scheduler can avoid $\ell_4$ by moving from $\ell_0$ to $\ell_2$, resulting in a minimum probablity of zero.
In $\RAC_D$, the choice is purely stochastic, hence minimum and maximum reachability probabilities coincide. 
As $r$ and $q$ follow the same distribution, $\ell_1$ and $\ell_2$ can be reached with equal  probability.
For only one uniformly distributed random event, the statistical error is always zero, while $\RAC_D$ required a larger number of integration samples. %

\subsubsection{Manual Validation of Results}\label{app:validation}
To validate the obtained results of Table~\ref{tab:results}, we can compute the set of prophecies $\PropheciesSchedulerMin{}$ enforcing the scheduler to reach $\Goal$ by hand for each of the considered cases.
Given $\PropheciesSchedulerMin{}$ and assuming  $r \sim \mathcal{U}(0,10)$ (and $q \sim \mathcal{U}(0,10)$ for $\RAC_D$), the minimum reachability probability $p_{\RAC_A}^\textsl{min}(\Goal, 10,2)$ can then be computed via integration:
\begin{align*}
    p_{\RAC_A}^\textsl{min}(\Goal, 10,2)=  
    \int_{\PropheciesSchedulerMin{}} G(\prophecy) \ d\prophecy.
\end{align*}

\noindent
For RAC $\RAC_A$, we consider $\ell_2$ and $\ell_4$: %
\begin{itemize}%
    \item $\Goal=\{\ell_2\}$:
    $\PropheciesSchedulerMin{} 
        = \{\prophecy\in \Prophecies{\RAC_A} \mid \prophecy(r)\in [4,10] \} = [4,10]$
    \item $\Goal=\{\ell_4\}$:
     $\PropheciesSchedulerMin{} 
        =\emptyset \cup (\emptyset \cap \{\prophecy\in \Prophecies{\RAC_A} \mid \prophecy(r)\in [0,4]   \}) = \emptyset$
\end{itemize}

\noindent
For RAC $\RAC_B$, we consider $\ell_2$: %
\begin{itemize}%
    \item $\Goal=\{\ell_2\}$: $\PropheciesSchedulerMin{} 
        = (\{\prophecy\in \Prophecies{\RAC_B} \mid \prophecy(r)\in [0,5] \}) \cap \emptyset) \cup \emptyset   = \emptyset$
\end{itemize}

\noindent
For RAC $\RAC_C$, we consider $\ell_2$, $\ell_3$ and $\ell_4$: %
\begin{itemize}%
    \item $\Goal=\{\ell_2\}$:
    $\PropheciesSchedulerMin{} 
          =
        \{\prophecy\in \Prophecies{\RAC_C} \mid \prophecy(r)\in [4,\infty)\}\\
        \phantom{\Goal=\{\ell_2\}: \PropheciesSchedulerMin{}=}
        \cup 
        (
        \{\prophecy\in \Prophecies{\RAC_C} \mid \prophecy(r)\in [1,4]\}
        \cap
        \emptyset
        )
        \cup \emptyset\\
        \phantom{\Goal=\{\ell_2\}: \PropheciesSchedulerMin{}}
        =[4,\infty)
        $
    \item $\Goal=\{\ell_3\}$:
     $\PropheciesSchedulerMin{} 
        =\emptyset \cup 
        (
        (\{\prophecy\in \Prophecies{\RAC_C} \mid \prophecy(r)\in [1,2]\}\cap \emptyset) \cup \emptyset)
         \\
        \phantom{\Goal=\{\ell_3\}: \PropheciesSchedulerMin{}=} 
        \cup 
        \{\prophecy\in \Prophecies{\RAC_C} \mid \prophecy(r)\in [0,1]\} 
        =[0,1]$
    \item $\Goal=\{\ell_4\}$:
     $\PropheciesSchedulerMin{} 
        =  \emptyset \cap
        (\emptyset \cup 
        ( \emptyset \cap 
        \{\prophecy\in \Prophecies{\RAC_C} \mid \prophecy(r)\in [1,2]\} )\\
        \phantom{\Goal=\{\ell_4\}: \PropheciesSchedulerMin{}=}
        \cup 
        \{\prophecy\in \Prophecies{\RAC_C} \mid \prophecy(r)\in [2,4]\}))
        =\emptyset$
\end{itemize}

\noindent
For RAC $\RAC_D$, we consider $\ell_1$ and $\ell_2$: %
\begin{itemize}%
    \item $\Goal=\{\ell_1\}$:
    $\PropheciesSchedulerMin{} 
        = \{\prophecy\in \Prophecies{\RAC_D} \mid \prophecy(r)\leq \prophecy(q) \}$
    \item $\Goal=\{\ell_2\}$:
    $\PropheciesSchedulerMin{} 
        = \{\prophecy\in \Prophecies{\RAC_D} \mid \prophecy(q)\leq \prophecy(r) \}$
\end{itemize}

\subsection{\emph{CAR} Case Study}\label{app:car}
\begin{figure*}[p]
    \centering
    \begin{tikzpicture}[
n/.style={draw, text width = 1.3cm, minimum height = 1.7cm, align = center, font= \footnotesize, rounded corners, very thick, execute at begin node=\setlength{\baselineskip}{8pt}%
},
nsmall/.style={draw, text width = 2cm, inner xsep=0.1em, minimum height = 1cm, align = center, font= \small, rounded corners, very thick, execute at begin node=\setlength{\baselineskip}{8pt}%
},
nsmallslim/.style={draw, text width = 1.4cm, minimum height = 1cm, align = center, font= \footnotesize, rounded corners, very thick, execute at begin node=\setlength{\baselineskip}{8pt}%
},
charging/.style={minimum height = 1.9cm
},
rest/.style={minimum height = 1.9cm
},
en/.style={draw=none, minimum height=0cm, font = \scriptsize, align = center},%
c/.style={draw, fill, black, circle, inner sep=0, outer sep=0, minimum size=1mm},
l/.style={anchor=west, inner sep=0, font=\footnotesize},
]

\node[nsmall,charging](chA) at (1.5,0) {
	$\ell_{\texttt{A}}$	\\
	\invariantsepspacesmall 	
	$\dot{x}\in[4,6]$\\\flowsepspace
    $\dot{q}=0$ \\
	\invariantsepspacesmall 	
	$x \leq 3$
	};

\node[nsmall,charging](chB) at (4.5,0) {
	$\ell_{\texttt{B}}$\\
	\invariantsepspacesmall 	
	$\dot{x}\in[3,4]$\\\flowsepspace
    $\dot{q}=0$ \\
	\invariantsepspacesmall 	
	$x \leq 9$	};

\node[nsmall,charging](chC) at (7.5,0) {
	$\ell_{\texttt{C}}$	\\
	\invariantsepspacesmall 	
	$\dot{x}\in[1,3]$\\\flowsepspace
    $\dot{q}=0$ \\
	\invariantsepspacesmall 	
	$x \leq 10$	};

\node[nsmallslim,charging](noch) at (10.5,0) {
	$\ell_{\texttt{full}}$	\\
	\invariantsepspacesmall 	
	$\dot{x}=0$\\\flowsepspace
    $\dot{q}=0$ \\
	\invariantsepspacesmall 	
	$x=10$	};

\node[nsmall,rest](dr) at ($(chA)-(0,2.75)$) {
	$\ell_{\texttt{driving}}$\\
	\invariantsepspacesmall 	
	$\dot{x}\in [-3,-2]$\\\flowsepspace
    $\dot{q}=0$ \\	};

\node[nsmall,rest](arr) at ($(dr)-(0,2.75)$) {
	$\phantom{g}\ell_{\texttt{arrival}}\phantom{g}$\\
	\invariantsepspacesmall
    $\dot{x}=0$ \\ 	\flowsepspace
    $\dot{q}=0$ \\ 		};
	
\node[nsmall,rest](em) at  ($(chB)-(0,2.75)$)  {
	$\phantom{g}\ell_{\texttt{empty}}\phantom{g}$\\
	\invariantsepspacesmall
    $\dot{x}=0$ \\\flowsepspace
    $\dot{q}=0$ \\ 		
	};

\node[nsmall,rest](re) at ($(chC)-(0,2.75)$) {
	$\phantom{g}\ell_{\texttt{detour}}\phantom{g}$\\
	\invariantsepspacesmall 	
	$\dot{x}\in [-3,-2]$\\\flowsepspace
    $\dot{q}=0$ \\	};

\node[nsmallslim,rest](new) at ($(noch)+(0,-2.75)$) { %
	$\ell_{\texttt{charge}}$\\
 	\invariantsepspacesmall
    $\dot{x}=0$ \\\flowsepspace
    $\dot{q}=1$ \\	
	\invariantsepspacesmall 	
    $q\leq0$		
	};

\node[en, text width=1.2cm,execute at begin node=\setlength{\baselineskip}{8pt}, anchor=north] (init) at ($(chA.110)+(0,1.5)$) {$x\in${$\,[1,2]$}\\
$q=0$\\
 };
\draw[-latex, very thick] ($(chA.110)+(0,0.7)$) to  ($(chA.110)+(0,0)$);

\draw[-latex,  very thick] (chA) to node[en, above] {$x \geq 2$} (chB);
\draw[-latex, very thick] (chB) to node[en, above] {$x\geq 8$} (chC);
\draw[-latex, very thick] (chC) to node[en, above] {$x=10$} (noch);

\coordinate  (corner) at ($(chA.south)-(0,0.4)$);
\draw[-latex,  very thick] (chB.south) 
-- (chB.south |- corner) 
-- (corner) 
to node[en, left] {}
(dr.north);
\draw[-latex,  very thick] (chC.south) 
-- (chC.south |- corner) 
-- (corner) 
to node[en, left] {}
(dr.north);
\draw[-latex,  very thick] (noch.south) 
-- (noch.south |- corner) 
-- (corner) 
to node[en, left] {}
(dr.north);

\draw[-latex,  very thick] (chA) to node[en, left] {$c$} (dr);

\draw[-latex,  very thick] (dr.-100) to node[en, left] {$d$} (arr.100);
\draw[-latex,  very thick] ($(dr.south)+(0,0)$)  --  ($(dr.south)+(0,-0.3)$) to  node[en, above, near start, ] {$r$}($(re.south)+(0,-0.3)$) -- (re);
\draw[-latex,  very thick,] (dr) to node[en, above] {$x=0$} (em);

\draw[-latex,  very thick,] (re) to node[en, above] {$x=0$} (em);

\draw[-latex, very thick] (re)  to node[en, above] {   $d $}  ($(new.west)+(0,0)$);%

\coordinate  (corner) at ($(noch.north east)+(0.2,0.7)$);

\draw[-latex,  very thick] (new.east) 
-- (new.east -| corner)  
-- (corner) 
-- (corner -| chA.north) 
to node[en, right] {$x\in[0,3]$}
(chA.north);

\draw[-latex,  very thick] (new.east) 
-- (new.east -| corner)  
-- (corner) 
-- (corner -| chB.north) 
to node[en, left] {$x\in[3,9]$} (chB.north);

\draw[-latex,  very thick] (new.east) 
-- (new.east -| corner)  
-- (corner) 
-- (corner -| chC.north) 
to node[en, left] {$x\in[9,10]$} (chC.north);

\draw[-latex,  very thick] (new.east) 
-- (new.east -| corner)  
-- (corner) 
-- (corner -| noch.north) 
to node[en, left] {$x=10$} (noch.north);

\end{tikzpicture}
    \caption{RAC for the $\car$ case study.}
    \label{fig:car}
\end{figure*}
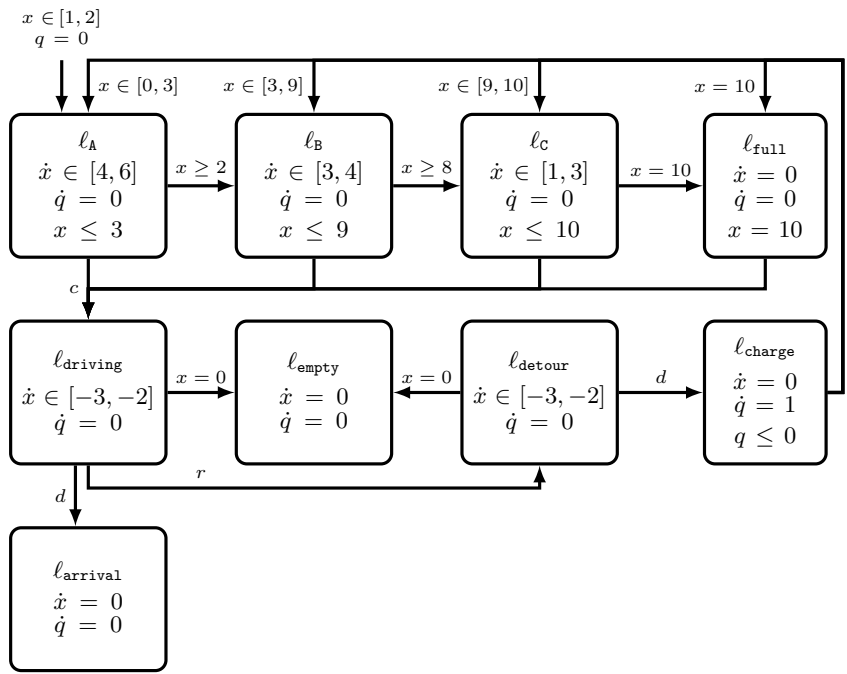
Figure~\ref{fig:car} illustrates the \emph{CAR} case study.
    Random event $c$ can occur in the charging locations $\ell_{\texttt{A}}$, $\ell_{\texttt{B}}$, $\ell_{\texttt{C}}$ and in $\ell_{\texttt{full}}$, $d$ can happen in locations $\ell_{\texttt{driving}}$ and $\ell_{\texttt{detour}}$ and $r$ is possible in location $\ell_{\texttt{driving}}$. 
The state of charge $x$ is restricted to $[0,10]$ in all locations unless stated otherwise. No time is spent in $\ell_{\texttt{charge}}$ due to the invariant $q\leq 0$. 
Note that to yield $\car^{0}$ and $\car^{1}$, the illustrated RAC is \emph{unrolled} w.r.t. the specific number of detours, with an absorbing goal location $\ell_{\texttt{empty}}$.
Hence, the number of locations for $\car^{0}$ and $\car^{1}$ are as follows:
\begin{itemize}
    \item For $\car^{0}$, $\lvert\Loc\rvert$ is, depending on the charging types, \texttt{A}: 5, \texttt{AB}: 6,  \texttt{ABC}: 7, and
    \item for $\car^{1}$, $\lvert\Loc\rvert$ is, depending on the charging types, \texttt{A}: 11, \texttt{AB}: 13,  \texttt{ABC}: 15.
\end{itemize}

Figure~\ref{fig:new_plots} compares maximum and minimum computation times for the model variants considered in Table~\ref{tab:car-esults}~(i). %
We illustrate the computation times separately for (1) forward reachability analysis, (2) classification of branchings (only minimum), (3) collection of prophecies and (4) integration.
The collection of prophecies for $p_{\RAC}^\textsl{min}$ follows Definition~\ref{def:enforcing}, whereas prophecies can be collected by always taking the union for $p_{\RAC}^\textsl{max}$, which is computationally more efficient.

In Figure~\ref{fig:new_plots}, we see that the minimum approach takes more computation time w.r.t. the maximum computation. 
With more complex model variants, this difference becomes even larger.
Specifically, the computation time for integration increases w.r.t. the computation of the maximum reachability probability. Further, the collection of prophecies takes much more time when computing the minimum reachability probability and also increases for larger model variants.  The same effects show 
for the larger model with one detour. However, note that  $\comptime$ increases with a factor $\sim200$ for both, maximum and minimum computations. 
Clearly, the size of the reach tree increases significantly with  additional charging locations and detours, which is reflected by $\vert \reachtree \vert$ in  Table~\ref{tab:car-esults}. %

Figure~\ref{fig:new_plots} also illustrates the impact of additional mixed branchings in the minimum setting, as stated in Section~\ref{sec:casestudy}.
E.g., the larger number of polytopes resulting from splitting and hence the representation of non-convex polytopes as a union of convex polytopes reflects in a larger integration time (purple in Figure~\ref{fig:new_plots}).

\end{appendix}

\end{document}